\documentclass[%
reprint,
amsmath,amssymb,
aps,
]{revtex4-2}

\usepackage{graphicx}
\usepackage{dcolumn}
\usepackage{bm}

\usepackage{amsmath,amsthm,amssymb,amsxtra, graphicx}
\usepackage{color}

\usepackage{tikz}
\usetikzlibrary{positioning} 
\usetikzlibrary{arrows,decorations.pathmorphing,shapes} 
\usepackage{tikz-cd}

\usepackage[utf8]{inputenc}
\usepackage[english]{babel}
\usepackage{amsmath,amstext,amsbsy,amsfonts,eucal,latexsym,amssymb}
\usepackage{indentfirst}
\usepackage{graphicx}
\usepackage{setspace}
\usepackage{float}
\usepackage{placeins}
\usepackage{xparse}
\usepackage{tikz}
\usepackage{braket}
\usepackage{longtable}
\usepackage{booktabs}

\usepackage{tabularx,ragged2e,booktabs}

\newtheorem{theorem}{Theorem}[section]
\newtheorem{proposition}{Proposition}
\newtheorem{corollary}{Corollary}

\newtheorem{definition}{Definition}

\newcommand{\tes}{\mathcal{T}}
\newcommand{\M}{\mathbb{M}}

\newtheorem{remark}[theorem]{{\bf Remark}}

\begin{document}
	
	
	\title{Floquet Codes from Derived Semi-Regular Hyperbolic Tessellations \\ on Orientable and Non-Orientable Surfaces}
	
	\author{Douglas F. Copatti}
	\affiliation{Department of Mathematics, Instituto Federal do Paraná, Campus Pitanga, Rua José de Alencar, 1.080 – Vila Planalto - Pitanga - PR, 85.200-000, Brazil}
\email{douglascopatti@gmail.com}

	\author{Giuliano G. La Guardia}%
	\email{gguardia@uepg.br}
	\affiliation{Department of Mathematics and Statistics, State University of Ponta Grossa, UEPG, Ponta Grossa - PR, 84030-900, Brazil.}
	
	
	\author{Waldir S. Soares}
	\affiliation{Department of Mathematics, Campus de Pato Branco, UTFPR, Universidade Técnol\'ogica Federal do Paran\'a, Via do Conhecimento, s/n-KM 01-Fraron, Pato Branco - PR, 85503-390, Brazil}
	\email{waldirjunior@utfpr.edu.br}
	
	\author{Edson D. Carvalho}
	\affiliation{Department of Mathematics, UNESP, Ilha Solteira - SP, 15385-007, Brazil}
	\email{edson.donizete@unesp.br}
	

\author{Eduardo B. Silva}
\affiliation{Department of Mathematics, Maring\'a State University - UEM, Av. Colombo 5790, Maring\'a - PR, 87020-900, Brazil}
	\email{ebsilva@uem.br}
	
	\date{\today}
	
	\begin{abstract}
In this paper, we construct several new quantum Floquet codes on compact, orientable, as well as non-orientable surfaces. In order to obtain such codes, we identify these surfaces with hyperbolic polygons and examine hyperbolic semi-regular tessellations on such surfaces. The method of construction presented here generalizes similar constructions concerning hyperbolic Floquet codes on connected and compact surfaces with genus $g \geq 2$. A performance analysis and an investigation of the asymptotic behavior of these codes are also presented.
	\end{abstract}
	
\maketitle
	

\section{Introduction}

To create a practical and usable quantum computer, it is necessary to build a fault-tolerant circuit to implement a specific code; it is important to choose a code that can work with simple check operators. Thus, it is of fundamental importance to build a circuit that is fault tolerant during the implementation of a quantum code. The use of low-weight verification codes is an effective way to achieve this goal. Another approach is to use codes that generalize stabilizer codes, which are referred to as subsystem codes \cite{Poulin1}. These codes enable the measurement of high-weight stabilizers using the product of low-weight verification operators' measurement results.

The family of Floquet codes \cite{Hastings2021} emerged as a generalization of subsystem codes \cite{8}. Floquet codes, like subsystem codes, require measuring low-weight check operators, but they are different because the code's logical operators can change over time. Error correction works much better in Floquet codes because they can change their logical operators as needed, which helps to utilize resources more effectively. This characteristic can lead to improved fault tolerance and faster recovery from errors, making Floquet codes particularly advantageous in quantum computing applications where maintaining coherence is critical. Maintaining coherence is critical for the successful implementation of quantum algorithms. Additionally, the changing characteristics of Floquet codes make it easier to use them with current quantum systems, which could result in stronger and more scalable quantum technologies. The time dependence of logical operators in Floquet codes introduces a unique dynamic that can enhance error correction by adapting to varying noise environments. This adaptability allows for more efficient encoding of quantum information, ultimately improving the performance and reliability of quantum algorithms in practical applications.

These features help create a stronger way to fix errors and allow quantum algorithms to work in real-life situations where noise is unpredictable. As quantum technology improves, using these time-changing logical operators can be key to creating larger quantum computing systems that can handle complicated tasks. Floquet codes can offer strong error correction methods that protect quantum information from the unavoidable decay and interference found in real-life settings. By using the special features of low-weight check operators, researchers can make quantum systems better at handling errors, which is crucial for practical uses in quantum communication and networks where being reliable is essential. Reliability is paramount, especially as quantum technologies move closer to real-world deployment. Continued research into Floquet codes will not only improve error correction techniques but also pave the way for more resilient quantum protocols that can withstand the challenges posed by environmental noise and operational imperfections. Consider challenges that may arise in implementing Floquet codes in real-world quantum systems and propose solutions. Any tiling suitable for defining a color code can yield Floquet codes, as demonstrated in \cite{Vuillot2021}. Additionally, research indicated that Floquet codes created from color code tilings on closed hyperbolic surfaces have a steady encoding rate and a distance that grows logarithmically; see \cite{Breuckmann2024}. These codes are referred to as hyperbolic Floquet codes. Hyperbolic Floquet codes show enormous potential for reliable quantum computing by using the special shapes of hyperbolic surfaces. Their robust performance in encoding quantum information makes them a significant focus of ongoing research in the field of quantum error correction. 

Researchers have employed hyperbolic geometry to develop novel topological codes on surfaces with a genus of $g \geq 2$. The works \cite{clarice}, \cite{albuquerque2009}, \cite{brandaniselfdual}, and \cite{Breuckmann2016} deal with surface codes. For color codes on closed surfaces with a genus of $g \geq 2$, one of the earliest works is \cite{waldir2018}. However, we have noticed that all the cited works' code constructions are based on regular tessellations.

In contrast to regular tessellations, which only consider regular and congruent polygons, semi-regular tessellations use more than one type of polygon. A tessellation is semi-regular if it consists of multiple types of regular polygons, denoted as $P_1, P_2, ...P_m$, such that, at each vertex, starting from polygon $P_1$, for instance, and proceeding clockwise, the sequence of polygons consistently appears in the identical order $P_1, P_2, ...P_m$. If the polygon $P_i$ has $n_i$ edges, $i=1,...,m$, then we denote the semi-regular tessellation by $[n_1, n_2, \ldots,n_m]$.

We draw attention to the fact that, for non-orientable surfaces, there is no work available in the literature that guarantees the existence of families of semi-regular tessellations. Then, we present semi-regular tessellation generation techniques, which are very interesting mathematical results by themselves. From these, we construct several codes whose parameters are new. We present tables for several surface genera, detailing the semi-regular tessellations considered on that surface, along with the parameters of the codes produced by each, including those created from regular tessellations.

The paper is organized as follows. Section~\ref{sec:semireg} presents a review of the concepts of regular and semi-regular tessellations. We present the general concepts and metric results for semi-regular tessellations on non-orientable surfaces. Section~\ref{reviewFloquet} brings a brief review of quantum Floquet codes. In Section~\ref{NewFloquet}, we present the contributions of the paper: constructions of new Floquet codes. We consider semi-regular tessellations derived from a regular $\{p, q\}$ tessellation, obtained in \cite{Douglas1}, and we also obtain new semi-regular tessellations using the techniques of \cite{antonio}. Then, we use the tools of hyperbolic geometry to construct families of Floquet codes on compact, orientable, and non-orientable surfaces of genus $g\geq 2$ from semi-regular tesselations. Section~\ref{CAB} is devoted to analyzing the asymptotic behavior of the new Floquet codes, and in Section~\ref{FR}, we summarize the main results of the paper. 

\section{Uniform tilings} \label{sec:semireg}

In this section, basic terminology and concepts concerning tilings are provided, mainly taken from \cite{artigo-giovana} and \cite{Douglas1}.

Let $\M$ be a connected Riemann surface without boundary. An isometry of $X$ is a distance-preserving bijection from $\M$ to $\M$. A tessellation, or tiling, is a countable family $\tes = (A_1, \hdots, A_n, \hdots)$ of connected closed subsets of $\M$, which are called tiles or faces. The surface is the union of all sets $A_i$, and the points where two faces meet are either empty, along edges, or at vertices. There is no intersection between the interiors of two separate sets, $A_i$ and $A_j$. 

The tiling's edges are the arcs of the elements $A_i$. The vertices in a tiling are the arcs' intersections and final points. A tile's topological boundary divides into a sequence of edges separated by vertices. Each edge connects two vertices and separates two tiles. We refer to the number of incident edges on a vertex $v$ as the vertex's degree or valence. A tiling face is the topological interior of a tile of $\tes$. In particular, we will be working with tilings where the tiles are regular polygons, all with the same edge size.

A symmetry of a tiling is an isometry of $\M$ that leaves the tiling globally invariant. A tiling is considered uniform when its group of symmetries $S$ is transitive on the set of vertices, i.e., there exists a symmetry $\gamma$ such that $\gamma(v_1) = v_2$ for each pair of vertices $v_1$ and $v_2$, and its tiles are regular polygons. We classify a tiling as regular if it is uniform and contains only one type of tile. On the other hand, a semi-regular tessellation of the plane is one in which each polygon is regular and the type of vertex is identical for each vertex. Both regular and semi-regular tessellations are uniform tessellations, according to the definition, and we will use this terminology in our current work.

A tiling $\tes$ is locally finite if all compact sets intersect a finite number of tiles of the tiling. Every compact subset in this case has a finite number of edges and vertices of $\tes$, and the degree of a vertex is finite. We can represent each tessellation as follows: $\tes = (V, E, F)$, where $V$, $E$, and $F$ are the sets of vertices, edges, and faces, respectively. We draw attention to the fact that in our current work, we are only considering locally finite tessellations, which means that each vertex has a finite number of faces. 

Two faces are adjacent if their boundaries share an edge or vertex. The size of a face is determined by the number of vertices along its boundary. This work focuses only on faces of finite size. Consider the set of all faces incident to a vertex $v$, and let $[a_1, \hdots, a_{q_v}]$ be a vector containing the faces' sizes, listed in cyclic order around $v$. This vector is a type vector or a type of vertex for $v$, and $q_v$ is the valence of the vertex.

Given an integer $q_v \geq 3$ and a vector of natural numbers $a = [a_1, a_2, \hdots, a_{q_v}]$, a necessary condition for $a$ to be a type of vertex is that they satisfy the condition 
\begin{equation}
	\label{cond}	
	\sum_{i =1}^{q_v}\frac{(a_i-2)\pi}{a_i}>2\pi \,.
\end{equation}

A regular tessellation whose polygons constituting the faces are $p$-gons and whose valence of each vertex is $q$, is denoted by $\{p, q\}$.

The Killing-Hopf theorem deals, in its original form, with Riemannian manifolds. In our case, we consider it a simple restriction. More details of this can be found at \cite{stillwell1995geometry}.

\begin{theorem}	(Killing-Hopf General, \cite{stillwell1995geometry}) 
	Any surface of constant curvature that is complete and connected is a quotient of an 
	Euclidean, hyperbolic or spherical space, by a group of isometries that act properly discontinuously and free of fixed points.   
\end{theorem}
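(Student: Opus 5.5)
The natural route is the classical one through the universal cover. Let $\M$ be a complete, connected surface of constant curvature $K$; after rescaling the metric we may take $K\in\{1,0,-1\}$. Let $X$ denote the simply connected model of curvature $K$: the sphere $S^2$, the Euclidean plane $\mathbb{E}^2$, or the hyperbolic plane $\mathbb{H}^2$. The plan is to lift the metric to the universal cover, show that this cover is isometric to $X$, and then identify $\M$ with the quotient of $X$ by the deck group.

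\emph{Step 1: the universal cover.} Let $\pi:\widetilde{\M}\to\M$ be the universal covering and give $\widetilde{\M}$ the pulled-back metric, so that $\pi$ is a local isometry. Then $\widetilde{\M}$ is connected, simply connected and of constant curvature $K$. It is also complete, since geodesics in $\widetilde{\M}$ project to geodesics in $\M$ and lift back uniquely; by Hopf--Rinow, geodesic completeness gives metric completeness.

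\emph{Step 2: $\widetilde{\M}$ is isometric to $X$.} This is the main obstacle. Fix $\tilde p\in\widetilde{\M}$, $o\in X$, and a linear isometry $I:T_oX\to T_{\tilde p}\widetilde{\M}$. Define $F=\exp_{\tilde p}\circ I\circ\exp_o^{-1}$; this is defined on all of $X$ in the cases $K\le0$, and on $S^2$ minus the antipode of $o$ when $K=1$.
\begin{itemize}
\item Both surfaces have the same constant curvature, so the Jacobi fields along corresponding radial geodesics agree. Hence $F$ is a local isometry. This is the Cartan lemma.
\item For $K\le 0$, $\exp_o$ is a global diffeomorphism, so $F$ is defined on all of $X$. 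A local isometry from a complete manifold onto a connected manifold is a covering map. Since $\widetilde{\M}$ is simply connected, $F$ is an isometry.
\item For $K=1$, apply the same construction from a second base point and check that the two maps agree on the overlap, because they agree together with their differential at one point. This extends $F$ to a local isometry on all of $S^2$. Compactness makes it a covering, and simple connectivity makes it an isometry.
\end{itemize}

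\emph{Step 3: the quotient.} Let $\Gamma$ be the group of deck transformations of $\pi$. Each element of $\Gamma$ is an isometry of $\widetilde{\M}\cong X$. A deck transformation with a fixed point is the identity, so $\Gamma$ acts without fixed points. The action is properly discontinuous because $\pi$ is a covering: every point has an evenly covered neighbourhood whose $\Gamma$-translates are pairwise disjoint. Since $\widetilde{\M}$ is simply connected, $\pi$ is a normal (Galois) covering, so $\M\cong X/\Gamma$ isometrically. This proves the statement; the only genuinely delicate point is the extension argument in Step 2 for the spherical case.
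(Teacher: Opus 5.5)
The paper gives no proof of this theorem. It is quoted from Stillwell as background, so there is no internal argument to compare yours against. Your proposal is the standard Riemannian proof, and it is correct.

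For comparison, the cited source, as I recall it, works with surfaces that are by definition locally isometric to the model. It covers $\M$ directly by $X$: the map sending geodesic rays from a base point of $X$ to geodesic rays in $\M$ (your $F$, but with values in $\M$ rather than $\widetilde{\M}$) is shown to be a covering local isometry. $\Gamma$ is then its group of covering isometries, and simple connectivity of $X$ gives $\M=X/\Gamma$.

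Your route passes through the abstract universal cover first. This makes normality of the covering automatic and moves the simple-connectivity argument onto $\widetilde{\M}$. It is the natural version in the smooth setting, but it needs Cartan's lemma (Jacobi fields) to produce the local isometry. The synthetic setting replaces that step with the defining local isometry to the model.

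Two details are worth writing out.

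\begin{itemize}
\item \textbf{Domain of Cartan's lemma.} The lemma applies on all of $X$ when $K\le 0$, and on $S^2\setminus\{-o\}$ when $K=1$. A normal neighbourhood is only needed on the model side, where $\exp_o$ is a diffeomorphism onto these sets. On the other side, completeness of $\widetilde{\M}$ makes $\exp_{\tilde p}$ defined on the whole tangent plane. For $K\le 0$, surjectivity of $F$ then follows from Hopf--Rinow, since $\exp_{\tilde p}$ is onto, so it need not be assumed.
\item \textbf{Gluing in the spherical case.} Choose $q\neq\pm o$ and set $F'=\exp_{F(q)}\circ dF_q\circ\exp_q^{-1}$. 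Then $F$ and $F'$ are local isometries that agree to first order at $q$ on the connected set $S^2\setminus\{-o,-q\}$, so they agree there. Since $-o\neq -q$, their domains cover $S^2$.
\end{itemize}
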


The Euler characteristic of a surface $\M$ is obtained by means of their triangulations. If $V$, $E$, and $F$ are, respectively, the number of vertices, edges, and faces of a triangulation, then the Euler characteristic $\chi(\M)$ is defined as 
$\chi(\M) = V - E + F$. It is obviously not practical to obtain triangulations and count them on a surface. To work around this situation, we use the following result: If $\M$ is an orientable, compact, and connected surface of genus $g$, then $ \chi(\M)=2(1-g)$.

The hyperbolic plane's negative curvature allows for an infinite number of tessellations. In particular, while the Euclidean plane admits only three regular tessellations, the hyperbolic plane admits an an infinity quantity of such tessellations. Obviously, the existence of a regular $\{p, q\}$ tessellation  of the hyperbolic plane requires a relationship between the parameters $ p $ and $ q $. It is already established that, for $ p, q $ integers greater than 2, there exists a $ \{p,q\} $-tessellation of the hyperbolic plane if and only if $ \frac 1 p + \frac 1 q <\frac 1 2$. 

\subsection{Regular tessellations of hyperbolic surfaces}

Suppose that $\{p, q\}$ tessellates the $g$-torus $\M$ and let $\mathbb{P}$ 
be a polygon of this tessellation. If $\mu(\M)$ and $\mu(\mathbb{P})$ are, 
respectively, the areas of $ \M$ and $ \mathbb{P} $, and $F$ is the number of 
faces of this tessellation on $  \M  $, then:
\begin{equation}\label{eq:condicao-area}
	\mu (\M ) =F\mu(\mathbb{P}).
\end{equation}

Let $V$ be the number of vertices and $E$ the number of edges in this tessellation of $\M$. For a moment, consider tessellation half-edges, which are the geodesic segments obtained by half of an edge, determined by its midpoint and one of its vertices. If the number of half-edges is $E'$, then we have $E'=2E$. Since $q$ half-edges emanate from each vertex, we have $qV=E'$, that is, $qV=2E$. Similarly, if the polygons were separated from each other, each vertex would give rise to $q$  new vertices. This additional number of vertices would equal $pF$, where $ qV = pF$. Combining these relationships, we have:
\begin{equation}\label{eq:relacao-faces-vertices-arestas-tesselacao}
	qV=2E=pF.
\end{equation} 

\begin{remark}
	
	\begin{enumerate}
		
		\item The incenter of a regular $ p $-gon of a regular $ \{p,q\} $-tessellation  determines its decomposition into $ p $ congruent triangles, each with interior angles $ \frac\pi q $, $ \frac\pi q $, and $ \frac{2\pi} p $. According to the Gauss-Bonnet theorem for hyperbolic triangles, the area of each of these triangles equals $ \frac{pq-2p-2q}{pq}\pi $. Therefore, the area of the regular $ p $-gon, whose internal angles measure $ \frac{2\pi}q $, is given by $\mu(\mathbb{P})=\frac{pq-2p-2q}{ q}\pi $;
		
		\item If $ g\geq 2 $ and $ \{p,q\} $ is a regular tessellation over a $ g $-torus $  \M  $, it consists of $ F=\frac{4(g-1)q}{pq-2p-2q} $ faces, $ E=\frac{2(g-1)pq}{pq-2p-2q} $ edges, and $ V=\frac{4(g-1)p}{pq -2p-2q}$ vertices. Then, it follows that:		
		\[
		V-E+F=\frac{4(g-1)p}{pq-2p-2q} -\frac{2(g-1)pq}{pq-2p-2q} +\frac{4(g-1)q}{pq-2p-2q} =2(1-g); 
		\]
		
		\item If $ \{p,q\} $ is a regular tessellation, either of the hyperbolic plane or of a hyperbolic surface, whose edge length is $ l $, the next relation is also valid:		
		\[
		l= arcosh \displaystyle\left[ \frac{\cos^2\displaystyle\left( \frac\pi q\right) +\cos\displaystyle\left(  \frac{2\pi}{p}\right)  }{\sin ^2\displaystyle\left(  \frac\pi q \right)} \right].
		\]
	\end{enumerate}	
\end{remark}

We then get the following result.

\begin{proposition} \label{propo:equivalencia-relacoes-com-face-para-teo-kulkarni} 
	Let $\M$ be a compact, connected, and orientable surface of genus $ g\geq 2 $ and $ p,q,V,E,F $ positive integers, such that $\mu(\M) = F\mu(\mathbb{P}) $, where $\mathbb{P} $ is a $ p $-gon of interior angles measuring $ \frac{2\pi}q $. Let $F=\frac{4q(g-1)}{pq-2p-2q}$. If $ pF=2E=qV$ then $ V-E+F=\chi(\M)$. 
\end{proposition}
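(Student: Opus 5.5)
The argument is a direct algebraic verification. I will express $E$ and $V$ in terms of $F$ using the hypothesis $pF=2E=qV$, substitute the given value of $F$, and compare the result with $\chi(\M)=2(1-g)$. That formula for $\chi(\M)$ was recalled earlier for compact, connected, orientable surfaces of genus $g$, which is exactly our setting.

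First, from $pF=2E=qV$ we get $E=\frac{pF}{2}$ and $V=\frac{pF}{q}$. Hence
\[
V-E+F = F\left(\frac{p}{q}-\frac{p}{2}+1\right) = F\,\frac{2p-pq+2q}{2q} = -F\,\frac{pq-2p-2q}{2q}.
\]
Substituting $F=\frac{4q(g-1)}{pq-2p-2q}$, the factor $pq-2p-2q$ and the factor $q$ cancel, leaving $-2(g-1)=2(1-g)=\chi(\M)$. This reproduces the computation already displayed in item 2 of the preceding Remark.

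Before cancelling I must check that the denominator $pq-2p-2q$ is nonzero, and in fact positive. This follows from the area hypothesis. By item 1 of the Remark, $\mu(\mathbb{P})=\frac{pq-2p-2q}{q}\pi$. By Gauss–Bonnet, $\mu(\M)=-2\pi\chi(\M)=4\pi(g-1)>0$ since $g\ge 2$. So $\mu(\M)=F\mu(\mathbb{P})$ with $F>0$ forces $pq-2p-2q>0$. The same relation also recovers the stated formula for $F$, so that formula is consistent with the area condition. The only care needed is this positivity of the denominator; everything else is routine cancellation.
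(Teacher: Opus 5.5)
Your proof is correct and matches the paper's argument: the paper justifies the proposition by the substitution in item 2 of the preceding Remark, writing $E$ and $V$ in terms of $F$ via $pF=2E=qV$ and checking that $V-E+F=2(1-g)$. Your positivity check on $pq-2p-2q$ is fine, though it also follows directly from the hypothesis that $F=\frac{4q(g-1)}{pq-2p-2q}$ is a positive integer, without using the area condition.
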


According to \cite{kulkarni-regular}, if $\M $ is a compact, connected, and orientable surface and if $p\geq 4 $, $q\geq 3 $, $V\geq 1 $, $E\geq 1 $, and $F\geq 1 $ are positive integers such that $V-E+F=\chi(\M) $ and $pF=2E=qV $, then $\M$ has a regular  $\{p,q\}$ tessellation.

\begin{proposition}\label{propo:medidas-l-a-r-tesselacao-regular} 	
	Given the $\{p,q\}$ tessellation of the hyperbolic plane, where $ l $ is the edge length, $ a$ and $ r$ are lengths of the radius of the circles inscribed and circumscribed to one of its faces, respectively, the following relations hold:	
	\begin{eqnarray}
		l&=& 2 arcosh\left[  \cos\displaystyle\left(  \frac{\pi}{p}\right) cosec\displaystyle\left( \frac \pi q \right) \right];   \label{formula-l-tesselacao-regular}\\
		a&=&   arccosh \left[ cosec\displaystyle\left(  \frac{\pi}{p}\right) \cos\displaystyle\left( \frac \pi q \right)  \right];  \label{formula-a-tesselacao-regular}\\		
		r&=&  arccosh\left[ cotg \displaystyle\left(  \frac{\pi}{p}\right) cotg\displaystyle\left( \frac \pi q \right) 	 \right].    \label{formula-r-tesselacao-regular}
	\end{eqnarray}
\end{proposition}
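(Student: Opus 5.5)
We work in the characteristic right triangle of the $\{p,q\}$ tessellation and read off all three lengths from the standard hyperbolic right-triangle identities. Fix a face $\mathbb{P}$ with center $O$, choose a vertex $A$ of $\mathbb{P}$, and let $M$ be the midpoint of an edge incident to $A$. By the symmetry of the regular tessellation, the geodesic $OM$ is perpendicular to that edge, so $OMA$ is a hyperbolic triangle with a right angle at $M$. Its other angles are determined as follows. The angle at $O$ is $\pi/p$, because the $2p$ copies of the triangle around $O$ fill the full angle $2\pi$. The angle at $A$ is $\pi/q$, because $q$ faces meet at $A$, so the interior angle of $\mathbb{P}$ at $A$ is $2\pi/q$, and $OA$ bisects it. The sides are $OM=a$ (inradius), $OA=r$ (circumradius) and $AM=l/2$.

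We then use the standard formulas for a hyperbolic right triangle with legs $x,y$, hypotenuse $h$, and acute angles $\alpha$ (opposite $x$) and $\beta$ (opposite $y$):
\[
\cosh h=\cot\alpha\cot\beta,\qquad \cos\alpha=\cosh x\,\sin\beta .
\]
These follow from the hyperbolic law of cosines for angles, $\cos\gamma=-\cos\alpha\cos\beta+\sin\alpha\sin\beta\cosh c$, together with the law of cosines for sides. Both may be taken as standard facts of hyperbolic trigonometry.

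Applying the first identity with $\alpha=\pi/p$ and $\beta=\pi/q$ gives $\cosh r=\cot(\pi/p)\cot(\pi/q)$, which is (\ref{formula-r-tesselacao-regular}). For the leg $a=OM$, the opposite angle is the angle at $A$, namely $\pi/q$, and the adjacent angle is $\pi/p$. The second identity therefore gives $\cos(\pi/q)=\cosh a\,\sin(\pi/p)$, hence $\cosh a=\cos(\pi/q)\,\mathrm{cosec}(\pi/p)$, which is (\ref{formula-a-tesselacao-regular}). For the leg $l/2=AM$, the opposite angle is $\pi/p$ and the adjacent angle is $\pi/q$. This gives $\cos(\pi/p)=\cosh(l/2)\sin(\pi/q)$, so $l=2\,\mathrm{arccosh}[\cos(\pi/p)\,\mathrm{cosec}(\pi/q)]$, which is (\ref{formula-l-tesselacao-regular}).

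As a consistency check, the double-angle formula $\cosh l=2\cosh^2(l/2)-1$ gives
\[
\cosh l=\frac{2\cos^2(\pi/p)-\sin^2(\pi/q)}{\sin^2(\pi/q)}=\frac{\cos(2\pi/p)+\cos^2(\pi/q)}{\sin^2(\pi/q)} ,
\]
which agrees with the formula for $l$ in the preceding Remark.

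The main point needing care is matching each leg to its opposite angle. If they were matched incorrectly, the formulas for $a$ and $l$ would swap $p$ and $q$. The justification that $OM\perp AM$ and that $OA$ bisects the vertex angle comes from the reflection symmetries of the regular tessellation. Finally, the condition $1/p+1/q<1/2$ ensures that each argument of $\mathrm{arccosh}$ exceeds $1$, since it is equivalent to the angle sum of the triangle $OMA$ being less than $\pi$.
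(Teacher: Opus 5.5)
Your proof is correct. The paper states this proposition without proof, but your characteristic triangle $OMA$, with angles $\pi/2$, $\pi/p$, $\pi/q$, is exactly half of one of the $p$ isosceles triangles (angles $\pi/q,\pi/q,2\pi/p$) into which the paper's Remark decomposes each face, so your route is the natural one within the paper's framework. Your matching of each leg to its opposite angle is right, and your $\cosh l$ check agrees with item 3 of that Remark. One small overstatement: both right-triangle identities already follow from the angle law of cosines with $\gamma=\pi/2$, so the side law of cosines is not needed.
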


\begin{corollary}\label{cor:comprimento-geodesica-lados-opostos-poligono-que-fornece-superficie-g-toro} 
	If $ \M $ is a $ g $-torus, $ g\geq 2 $, which is obtained from a polygon of the $\{4g,4g\}$ tessellation by identifying opposite edges, the shortest cycle of non-trivial homology on this one has length $ d_\M = 2 arccosh\left[cotg \displaystyle\left( \frac{\pi}{4g}\right) \right]$.	
\end{corollary}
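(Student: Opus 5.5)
The corollary follows from Proposition~\ref{propo:medidas-l-a-r-tesselacao-regular} applied with $p=q=4g$; the only real work is to identify the shortest non-trivial cycle on $\M$ with a geodesic joining midpoints of opposite edges of the fundamental polygon.

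\textbf{Step 1: the candidate length.} Let $\mathbb{P}$ be a face of the $\{4g,4g\}$ tessellation, with incenter $O$. Let $m$ and $m'$ be the midpoints of a pair of opposite edges of $\mathbb{P}$. Since $\mathbb{P}$ is regular with an even number of sides, the apothems $Om$ and $Om'$ are collinear. Each has length $a$, given by (\ref{formula-a-tesselacao-regular}) with $p=q=4g$. Hence
\[
d(m,m')=2a=2\,\mathrm{arccosh}\left[\mathrm{cosec}\left(\frac{\pi}{4g}\right)\cos\left(\frac{\pi}{4g}\right)\right]=2\,\mathrm{arccosh}\left[\mathrm{cotg}\left(\frac{\pi}{4g}\right)\right].
\]
The pairing that glues opposite edges sends $m$ to $m'$. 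So the segment $mm'$ projects to a closed curve $\gamma$ on $\M$ of length $2a$. This curve is smooth at the gluing point: the apothems are perpendicular to the edges, and the pairing isometry preserves angles.

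\textbf{Step 2: $\gamma$ is non-trivial in homology.} The identification of opposite sides of the $4g$-gon produces the $g$-torus. The closed curve $\gamma$ crosses the edge-cycle obtained from the pair $(m,m')$ exactly once, transversally. Since the intersection number with that edge-cycle is $\pm1$, $\gamma$ is non-zero in $H_1(\M)$.

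\textbf{Step 3: minimality.} This is the main obstacle. Lift a shortest homologically non-trivial closed curve to the universal cover $\mathbb{H}^2$. There it becomes a geodesic segment from a point $x$ to $T(x)$, where $T$ is a non-identity deck transformation. The deck group is generated by the side pairings of $\mathbb{P}$, and $\mathbb{P}$ is a Dirichlet domain centered at $O$.

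The plan is to show that the displacement $d(x,Tx)$ is at least $2a$ for every $x$ and every $T\neq \mathrm{id}$. The natural route is to project the segment $[x,Tx]$ to $\M$ and read it in the tessellation. The segment must leave the translated copy of $\mathbb{P}$ containing $x$ through one edge and reach an equivalent point. The distance between opposite edge-lines of $\mathbb{P}$, which are ultraparallel with common perpendicular $mm'$, is exactly $2a$, so any segment joining a point to a translate of itself across a pair of identified sides has length at least $2a$.

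What needs more care is the case where the segment crosses several edges. I would handle it with an area or combinatorial argument: a path crossing two or more consecutive faces is at least as long as one face width, since faces are convex and translates are disjoint. Concretely, I would use $d(x,Tx)\ge d(\mathbb{P},T\mathbb{P})$-type estimates, together with the fact that the translation lengths of the generators are exactly $2a$. If the general case proves stubborn, I would fall back on citing the standard fact that for the regular $\{4g,4g\}$ surface the systole is realized by the side-pairing axes. That fact is consistent with the symmetry of the construction.
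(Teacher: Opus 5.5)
\textbf{Step 3 has a genuine gap, and it cannot be closed, because the statement is false for $g\ge 3$.} Your Steps 1 and 2 are correct. Step 1 is also all the paper offers: it states the corollary as an immediate consequence of Proposition~\ref{propo:medidas-l-a-r-tesselacao-regular} with $p=q=4g$, and takes minimality for granted.

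The sub-claim that fails is that a path crossing several faces is at least one face-width long. A segment can cross a face between two \emph{adjacent} sides, cutting off a corner, and such a segment can be much shorter than $2a$. Likewise, the fact that the generators have translation length $2a$ says nothing about the minimum over the whole deck group.

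Here is an explicit shorter cycle. Label the side midpoints $m_1,\dots,m_{4g}$ cyclically, so that $m_j\equiv m_{j+2g}$ on $\M$.

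\begin{itemize}
\item \emph{The curve.} Take the segment from $m_{j+1}$ to $m_j$, followed by the segment from $m_{j+2g}$ to $m_{j+2g+1}\equiv m_{j+1}$. This closed curve has length $2b$, where $b=d(m_j,m_{j+1})$.
\item \emph{Non-triviality.} By your own Step 2 argument, it crosses the edge-cycle of $s_j$ exactly once, transversally. Hence it is homologically non-trivial.
\item \emph{Its length.} The triangle $Om_jm_{j+1}$ has two sides of length $a$ and angle $\pi/(2g)$ at $O$. Using $\cosh a=\cot(\pi/4g)$, the hyperbolic law of cosines gives
\[
\cosh b=\cosh^2 a-\sinh^2 a\,\cos\frac{\pi}{2g}=1+2\cos\frac{\pi}{2g}.
\]
\end{itemize}

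For $g=2$ this equals $1+\sqrt2=\cot(\pi/8)$, so $b=a$. This is the Bolza surface, whose systole $2\,\mathrm{arccosh}(1+\sqrt2)$ is attained by both curves; even there, a real systole argument is needed rather than the face-width bound.

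For every $g\ge 3$, however,
\[
1+2\cos\frac{\pi}{2g}<3<\cot\frac{\pi}{12}\le\cot\frac{\pi}{4g},
\]
so $2b<2a$. For example, at $g=3$ one gets $2b\approx 3.33$ against $2a\approx 3.98$.

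In your deck-group language, let $H_x$ denote the half-turn about $x$. The pairing $s_j\to s_{j+2g}$ is $A_j=H_{m_{j+2g}}H_O$. Then
\[
A_jA_{j+1}^{-1}=H_{m_{j+2g}}H_{m_{j+2g+1}},
\]
which is a translation of length $2b$. So the displacement bound $d(x,Tx)\ge 2a$ for all $T\neq\mathrm{id}$ fails. The ``standard fact'' you would fall back on holds only for $g=2$.

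Consequently the corollary is correct only for $g=2$. For $g\ge 3$, the quantity $2\,\mathrm{arccosh}[\cot(\pi/4g)]$ is just the length of the side-pairing geodesics. It is an upper bound for the shortest non-trivial cycle, not its value.
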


\begin{remark}
	The area of $\M$ is given by $\mu (\M )= 4\pi (g-1) $.
\end{remark}




\subsection{Semi-regular tessellations of surfaces derived from regular tessellations} \label{codigosNorientaveis}



%

We begin giving some definitions and results that help to guarantee the constructions of the quantum Floquet codes obtained in Section 5. For more details see \cite{artigo-giovana}.

\begin{proposition} \label{propo:ao-teo-kulkarni-r-valente-medidas} 
	Let $ \M $ be an orientable surface of genus $ g \geq 2$, on which there is a semi-regular $[m_1,m_2, m_3]$ tessellation. If $l$ denotes the edge length, $a_i$ the length of the apothem of each $m_i$-gon, $ r_i $ the length of the radius of the circle circumscribed to each $ m_i $-gon and, if $ A_i $ denotes the distance between the incenters of an $m_i$-gon and an adjacent $ m_{i+1} $-gon, $i = 1,2,3$, it follows that
	\begin{eqnarray}
		\pi &=& 	\displaystyle\sum_{i=1}^3 \arcsin\displaystyle\left(  \dfrac{\cos\displaystyle\left( \frac{\pi}{m_i}\right)}{\cosh\displaystyle\left(  \frac{l}{2}\right)}\right), 			\label{eq:teorema-tesselacoes-eq-giovana-para-lado-da-tesselacao} \\  
		a_i&=&   arcsinh\displaystyle\left(  \tanh\displaystyle\left( \frac{l}{2}\right) cotan \displaystyle\left( \frac{\pi}{m_i}\right)\right), \\
		r_i&=&    arccosh\displaystyle\left(  \cosh\displaystyle\left(  a_i\right) \cosh\displaystyle\left( \frac{l}{2}\right) \right), \\
		A_i&=&   a_i+a_{i+1}.
	\end{eqnarray}
\end{proposition}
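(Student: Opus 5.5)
We argue locally. Because the tessellation is semi-regular, all faces are regular polygons with common edge length $l$, and a neighbourhood of any vertex $v$ lifts isometrically to the hyperbolic plane. There $v$ is surrounded, in cyclic order, by an $m_1$-gon, an $m_2$-gon and an $m_3$-gon whose interior angles $\theta_1,\theta_2,\theta_3$ at $v$ satisfy $\theta_1+\theta_2+\theta_3=2\pi$. The plan is to express each $\theta_i$ through $l$ and $m_i$ using one right triangle per face, and then to read off $a_i$, $r_i$ and $A_i$ from the same triangles.

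For each face $m_i$, let $O_i$ be its incenter, $M$ the midpoint of an edge through $v$, and consider the triangle $O_iMv$. It has a right angle at $M$, angle $\pi/m_i$ at $O_i$ (half the central angle $2\pi/m_i$), and angle $\theta_i/2$ at $v$, since $O_iv$ bisects the interior angle of a regular polygon. Its sides are $Mv=l/2$, $O_iM=a_i$ and $O_iv=r_i$. The hyperbolic right-triangle relations then give each item.

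\begin{enumerate}
\item The relation $\cos A=\cosh(\text{opposite leg})\sin B$, applied with angle $\pi/m_i$ opposite the leg $l/2$, gives $\cos(\pi/m_i)=\cosh(l/2)\sin(\theta_i/2)$. Hence $\theta_i/2=\arcsin\bigl(\cos(\pi/m_i)/\cosh(l/2)\bigr)$. Summing over $i$ and using $\sum\theta_i/2=\pi$ yields the first equation.
\item The relation $\sinh(\text{leg})=\tanh(\text{other leg})\cot(\text{angle opposite the other leg})$, applied with $\tan(\pi/m_i)=\tanh(l/2)/\sinh a_i$, gives $\sinh a_i=\tanh(l/2)\cot(\pi/m_i)$. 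This is the formula for $a_i$.
\item The hyperbolic Pythagorean theorem gives $\cosh r_i=\cosh a_i\cosh(l/2)$, which is the formula for $r_i$.
\end{enumerate}

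For $A_i$, take adjacent faces $m_i$ and $m_{i+1}$ sharing the edge through $M$. Both perpendiculars from $O_i$ and $O_{i+1}$ to the shared edge land at its midpoint $M$, because each face is regular. Both meet the edge at right angles on opposite sides of it, so $O_i$, $M$, $O_{i+1}$ lie on one geodesic. Hence $A_i=a_i+a_{i+1}$. This is also the length of the dual edge, which is presumably why it is recorded.

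The main obstacle is justifying the local geometric picture: the angle-bisector property at $v$, and that the three angles at $v$ sum to $2\pi$. On the compact surface $\M$ this holds because $\M=\mathbb{H}^2/\Gamma$ by the Killing--Hopf theorem, so a neighbourhood of each vertex is isometric to a hyperbolic disk. One must also note that the first equation determines $l$ uniquely. The right-hand side is strictly decreasing in $l$, tends to $0$ as $l\to\infty$, and at $l=0$ equals $\sum(\pi/2-\pi/m_i)$, which exceeds $\pi$ exactly by condition (\ref{cond}). The rest is routine trigonometry.
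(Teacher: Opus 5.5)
Your proof is correct. The right triangle with vertices at the incenter, an edge midpoint and a vertex, combined with the three hyperbolic right-triangle identities you cite, yields the angle equation and the formulas for $a_i$ and $r_i$, and the perpendicular-bisector argument correctly gives $A_i=a_i+a_{i+1}$.

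The paper states this proposition without proof, importing it from \cite{artigo-giovana}, so there is no argument of its own to compare against. Your right triangles are halves of the incenter triangles (angles $\pi/q,\pi/q,2\pi/p$) that the paper uses for regular $\{p,q\}$ faces, so your route is consistent with the paper's own machinery.

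The one point to tighten is that $O_i$ sits at distance $r_i$ from $v$, so the triangle $O_iMv$ need not lie inside a small neighbourhood of $v$. Do the trigonometry on the lifted tessellation of $\mathbb{H}^2$ rather than on a local chart around $v$. Doing so also makes clear that orientability and the genus play no role beyond hyperbolicity.
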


A regular tessellation has associated the dual tessellation, formed by the union of the incenters of the adjacent faces of the original pattern. Taking into consideration that in uniform tessellations all polygons have incenters, we define an equivalent version of dual tessellation for the uniform case:

\begin{definition}
	Let $\tes$ be a uniform locally finite tessellation. We define the dual tessellation of $\tes$ as the tessellation formed by the edges that connect the incenters of the adjacent polygons present in $\tes$. If $\tes$ is not regular, the dual tessellation tile is an irregular polygon. 
\end{definition}

\begin{proposition} 
	The dual tessellation of a semi-regular tessellation $[m_1,m_2,m_3]$ is composed of triangles, not necessarily regular, whose vertices have valency $m_1,m_2,$ and $m_3$, in this order up to one cyclic permutation. The edge determined by the vertices with valence $ m_i $ and $ m_{i+1} $, $i, i+1 \in \{1,2,3\}$, whose edge lengths $ A_1, A_2$ and $A_3$, respectively, are given in the Proposition \ref{propo:ao-teo-kulkarni-r-valente-medidas}.
\end{proposition}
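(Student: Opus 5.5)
The plan is to read the dual tessellation straight off the local structure of a vertex of $[m_1,m_2,m_3]$ and then use Proposition~\ref{propo:ao-teo-kulkarni-r-valente-medidas} to measure its edges. By the definition of the dual, the vertices of the dual tessellation are the incenters of the faces of $[m_1,m_2,m_3]$. Its edges join incenters of faces that share an edge (the reading of ``adjacent'' we adopt). The faces of the dual then correspond to the vertices of the original tessellation: around an original vertex $v$ the faces occur in cyclic order as an $m_1$-gon, an $m_2$-gon and an $m_3$-gon, and consecutive ones share an edge emanating from $v$. So $v$ gives a closed cycle of three dual edges joining the three incenters, that is, a geodesic triangle. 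The triangles coming from distinct vertices have disjoint interiors and cover the surface. To see this, decompose each face $P$ of $[m_1,m_2,m_3]$ into the kites formed by its incenter, a vertex of $P$ and the midpoints of the two edges of $P$ at that vertex. The dual triangle at $v$ is the union of the three kites at $v$, suitably re-cut along geodesics, and these kites tile the surface.

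The next step is valency. An incenter $c$ of an $m_i$-gon is joined to the incenters of the $m_i$ polygons across its $m_i$ edges, so $c$ has valency $m_i$ in the dual. Each dual triangle has one vertex at the incenter of each of the three faces around $v$. Its vertices therefore have valencies $m_1,m_2,m_3$, in this order up to the cyclic permutation (and orientation) fixed by reading the type vector around $v$. The triangles are in general not regular, since $a_i\neq a_j$ when $m_i\neq m_j$.

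For the edge lengths, take the edge $e$ shared by an $m_i$-gon and the adjacent $m_{i+1}$-gon, and let $M$ be its midpoint. Since both polygons are regular with the same edge length $l$, the geodesic from each incenter to $M$ is perpendicular to $e$ and has length equal to the apothem, $a_i$ and $a_{i+1}$ respectively. The two perpendiculars at $M$ lie on opposite sides of $e$, so they form a single geodesic through $M$ (the angle there is $\pi$). The segment between the two incenters is therefore geodesic of length $a_i+a_{i+1}=A_i$, where the $a_i$ are given in terms of $l$ by Proposition~\ref{propo:ao-teo-kulkarni-r-valente-medidas}.

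The main obstacle is the covering claim: the three dual triangles must form a genuine tessellation of $\M$ rather than overlapping. To settle it, I would first use condition (\ref{eq:teorema-tesselacoes-eq-giovana-para-lado-da-tesselacao}) to ensure that each dual triangle is a nondegenerate convex geodesic triangle. I would then argue that the kite decomposition makes the triangles' interiors disjoint, which gives the tiling.
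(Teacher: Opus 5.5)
The paper gives no proof of this proposition; it is imported from the cited work on semi-regular tessellations. So there is no argument of the paper's to compare with. Your proof is correct and is the natural one.

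The step you single out as the main obstacle is already settled by your own edge-length paragraph. The two apothems meet at the midpoint $M_{ij}$ of the shared edge at angle $\pi$. Hence the union of the three kites at $v$ is exactly the geodesic triangle on the three incenters, with no re-cutting needed. Its angles at the incenters are $2\pi/m_1$, $2\pi/m_2$, $2\pi/m_3$. Because the kites tile $\M$ with disjoint interiors, the dual triangles do as well.

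Condition \eqref{eq:teorema-tesselacoes-eq-giovana-para-lado-da-tesselacao} plays no role in nondegeneracy. It only says that the interior angles $2\arcsin\bigl(\cos(\pi/m_i)/\cosh(l/2)\bigr)$ of the three polygons sum to $2\pi$ at $v$, which is already part of the existence of the tessellation. Nondegeneracy comes from each incenter angle lying in $(0,\pi)$. The fact that these angles sum to less than $\pi$ is just condition \eqref{cond}.

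Two minor points remain.
\begin{itemize}
\item The paper's notion of adjacency (sharing an edge or a vertex) agrees with yours here. At a trivalent vertex, any two of the three faces share one of the edges at that vertex.
\item On the quotient surface, index the dual edges by edges of the original tessellation, since two faces may share more than one edge. Alternatively, run the whole argument in the hyperbolic plane and project.
\end{itemize}
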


By utilizing specific techniques, one can construct a derived tessellation from a previously fixed one. There are three derived tessellation structures in \cite{Douglas1}. These give rise to the $ [2p,2p,q] $, and $ [2p,2q, 4] $ tessellations, starting with the regular $\{p,q\}$ tessellation. These techniques can be applied both to orientable and to non-orientable hyperbolic surfaces, provided that the respective constructions are intrinsic to a given tessellation, have a local character, and are therefore independent of the surface containing the tessellations.

Here and in the next figures, the techniques are shown in Euclidean cases, but the same ideas are used in hyperbolic cases. 

Starting from the $\{p,q\}$ tessellation, one can use the clipping derivation technique, which consists of ``cutting" a region close to each vertex of a $p$-gon so that it gives way to a regular $2p$-gon, obtaining therefore the semi-regular $[2p,2p,q]$ tessellation.

In Figure \ref*{tesselderivada}$(a)$, we show the clipping derivation technique. The dashed lines show the dual tessellation of the semi-regular tessellation. The dual tessellation of $[2p,2p,q]$ tessellation is made up of isosceles triangles where the base is twice the radius of the inscribed circle in the polygon $``2p"$ of such tessellation, and the two equal sides have the same length as the radius of the circumscribed circle in the polygon $\{p,q\}$ of the original tessellation.

Note that in this case, $n_f = f + v$, $n_e = e + q v = \frac{3}{2} p f$, $n_v = 2 e = q v = p f$, and we obtain
\[
n_f = \dfrac{(2p+2q)(g-2)}{pq-2p-2q}, \quad n_e =\dfrac{3pq(g-2)}{pq-2p-2q}
\]
and  
\[
n_v =  \dfrac{2pq(g-2)}{pq-2p-2q}.
\]

Similarly, we can derive from the $\{p, q\}$ tessellation the semi-regular $[2p, 2q, 4]$ tessellation using the Incenter Derivation process, which consists initially in drawing the radii of the inscribed and circumscribed circles of each face, respectively. This process decomposes each face into $2p$ right triangles. By drawing geodesic segments between the incenter of all pairs of triangles with common edges and discarding the edges of the original tessellation, as well as those of these triangles, one obtains the semi-regular tessellation mentioned. The dual tessellation of the $[2p,2q,4]$-tessellation is formed by right triangles whose edges are measured by the radius of the circumscribed circle, the radius of the inscribed circle, and half the edge of the polygon $\{p,q\}$ of the original tessellation. In Figure \ref*{tesselderivada}$(b)$, we show the incenter derivation technique.

\begin{figure}[H]
	\centering
		\begin{tikzpicture}[scale=.2] 
		\tiny
		\begin{scope}[scale=0.5,xshift=30cm,yshift=-7cm ]
			\foreach \cont in {1,...,6}\draw [ultra thick,rotate=\cont*60 ] (3.4641,2)-- (3.4641,-2);
			\begin{scope}[xshift=3.4641cm, yshift=6cm]\foreach \cont in {1,...,6}\draw [ultra thick,rotate=\cont*60 ] (3.4641,2)-- (3.4641,-2);\end{scope};
			\begin{scope}[xshift=6.9282cm]\foreach \cont in {1,...,6}\draw [ultra thick,rotate=\cont*60 ] (3.4641,2)-- (3.4641,-2);\end{scope};
		\end{scope}
		
		\begin{scope}[shift={(-4.8748,2.175)}]
			\foreach \ang in {0,120,240}{\begin{scope}[xshift=-12cm,rotate=\ang]
					\draw (0,0) -- (3,0);
					\draw[xshift=3cm,rotate=60] (0,0) -- (2,0);
					\draw[xshift=3cm,rotate=-60] (0,0) -- (2,0);
					\draw[yshift=1.7321cm, xshift=4cm] (0,0) -- (1,0);
					\draw[yshift=-1.7321cm, xshift=4cm] (0,0) -- (1,0);
					\draw[yshift=1.7321cm, xshift=4cm, rotate=90] (0,0) -- (1,0);
					\draw[yshift=-1.7321cm, xshift=4cm, rotate=-90] (0,0) -- (1,0);\end{scope}}		
			\foreach \ang in {0,120,240}{\begin{scope}[rotate=\ang] 
			\end{scope}};

		\end{scope}
		\begin{scope}[yshift=-9cm, xshift=-6cm, shift={(-0.375,1.425)}]
			\draw[fill=green!20!]  (1,0)--(-0.5,0.866)--(-0.5,-0.866)--cycle;
			\draw[fill=green!20!] (3.6667,1.1547)--(4.5,1.7321)--(4,2.2321)--cycle;
			\draw[fill=green!20!] (3.6667,-1.1547)--(4.5,-1.7321)--(4,-2.2321)--cycle;
			\foreach \ang in {0,120,240}
			{\begin{scope}[rotate=\ang]
					\draw[fill=purple!20!]  (4,-2.715)--(4,-2.2321) -- (3.6667,-1.155) --(3.3333,-0.57) -- (1.99,0)-- (0.995,0.01) -- (-0.495,-0.875)--  (-1.02,-1.735)-- (-1.18,-3.16)-- (-0.84,-3.75) -- (-0.095,-4.56)-- (0.38,-4.84) ;
					(3.965,-2.715)--(3.965,-2.23) -- (3.63,-1.155) --(3.31,-0.57) -- (1.99,0)-- (0.995,0.01) -- (-0.495,-0.875);
					\draw[fill=purple!20!] (5.5,1.7321) --(4.5,1.7321) -- (3.6667,1.145)-- (3.32,0.555) -- (3.305,-0.57)-- (3.66,-1.18) --(4.465,-1.73)-- (5,-1.7305) ;
					\draw[fill=green!20!] (2,0)-- (3.3333,0.5774)-- (3.3333,-0.5774)-- cycle;
					\draw[fill=green!20!] (3.6667,1.1547)--(4.5,1.7321)--(4,2.2321)--cycle;
					\draw[fill=green!20!] (3.6667,-1.1547)--(4.5,-1.7321)--(4,-2.2321)--cycle;
					\draw[dotted] (0,0) -- (3,0);
					\draw[dotted,xshift=3cm,rotate=60] (0,0) -- (2,0);
					\draw[dotted,xshift=3cm,rotate=-60] (0,0) -- (2,0);
					\draw[dotted,yshift=1.7321cm, xshift=4cm] (0,0) -- (1,0);
					\draw[dotted,yshift=-1.7321cm, xshift=4cm] (0,0) -- (1,0);
					\draw[dotted,yshift=1.7321cm, xshift=4cm, rotate=90] (0,0) -- (1,0);
					\draw[dotted,yshift=-1.7321cm, xshift=4cm, rotate=-90] (0,0) -- (1,0);
			\end{scope}};
			\node at (1.187,-5.85) {$2p$-gon};	
			\node at (4.011,-4.935) {$q$-gon};	
			\draw [-latex] (4,-1.75)   to[out=210,in=90] node [pos=.6, above] {} (4,-4.5) ;
			\draw [-latex] (1.566,-2.853)   to[out=240,in=90] node [pos=.6, above] {} (1,-5.5);
		\end{scope}
		
		\begin{scope}[xshift=8cm, yshift=-4.5cm, rotate=-120, shift={(-1.5494,-1.038)}]		

			\begin{scope}[xshift=6.9282cm]
				\draw [fill=green!20!] (2.736,-0.723)-- (2.727,0.725) -- (1.995,1.995) -- (0.73,2.728) -- (-0.714,2.728) --(-1.985,2.001) -- (-2.723,0.734)-- (-2.75,-0.718)--(-2.003,-2.007) -- (-0.733,-2.722) --(0.743,-2.72)--(2.003,-1.987) --cycle;
				\draw [fill=red!20!,rotate=120] (2.7321,0.7321) node (v3) {} rectangle (4.1962,-0.7321);
			\end{scope};
			\draw [fill=green!20!] (2.736,-0.723)-- (2.727,0.725) -- (1.995,1.995) -- (0.73,2.728) -- (-0.714,2.728) --(-1.985,2.001) -- (-2.723,0.734)-- (-2.75,-0.718)--(-2.003,-2.007) -- (-0.733,-2.722) --(0.743,-2.72)--(2.003,-1.987) --cycle;
			\begin{scope}[xshift=3.4641cm, yshift=6cm]
				\draw [fill=green!20!] (2.736,-0.723)-- (2.727,0.725) -- (1.995,1.995) -- (0.73,2.728) -- (-0.714,2.728) --(-1.985,2.001) -- (-2.723,0.734)-- (-2.75,-0.718)--(-2.003,-2.007) -- (-0.733,-2.722) --(0.743,-2.72)--(2.003,-1.987) --cycle;
				\draw [fill=red!20!,rotate=-120] (2.7321,0.7321) node (v1) {} rectangle (4.1962,-0.7321);
			\end{scope};	
			\draw [fill=red!20!] (2.7321,0.7321) rectangle (4.1962,-0.7321);
			\draw [fill=blue!20!]  (2.736,3.26) -- (2,2) -- (2.73,0.728)-- (4.19,0.728) -- (4.926,2)--(4.19,3.26) --cycle;
			\begin{scope}
				\draw [fill=red!20!] (2.7321,0.7321) rectangle (4.1962,-0.7321);
				\foreach \cont in {1,...,6}\draw [ultra thick,rotate=\cont*60 ] (3.4641,2)-- (3.4641,-2);
				\foreach \cont in {1,...,6}\draw [rotate=\cont*60,dashed] (0,0)--(3.4641,0);
				\foreach \cont in {1,...,6}\draw [rotate=\cont*60,dashed] (0,0)--(3.4641,2);
				\foreach \cont in {0,...,11}\filldraw[black, rotate=\cont*30] (2.7321,0.7321) circle (2pt) node [anchor=west] {};
				\foreach \cont in {0,...,11}\draw[rotate=\cont*30]   (2.7321,0.7321)--  (2.7321,-0.7321) ;
				\foreach \cont in {0,...,1}\draw[rotate=\cont*60]  (2.7321,0.7321) rectangle (4.1962,-0.7321);
			\end{scope};
			\begin{scope}[xshift=3.4641cm, yshift=6cm]
				\foreach \cont in {1,...,6}\draw [ultra thick,rotate=\cont*60 ] (3.4641,2)-- (3.4641,-2);
				\foreach \cont in {1,...,6}\draw [rotate=\cont*60,dashed] (0,0)--(3.4641,0);
				\foreach \cont in {1,...,6}\draw [rotate=\cont*60,dashed] (0,0)--(3.4641,2);
				\foreach \cont in {0,...,11}\filldraw[black, rotate=\cont*30] (2.7321,0.7321) circle (2pt) node [anchor=west] {};
				\foreach \cont in {0,...,11}\draw[rotate=\cont*30]   (2.7321,0.7321)--  (2.7321,-0.7321) ;
			\end{scope};
			\begin{scope}[xshift=6.9282cm]
				\foreach \cont in {1,...,6}\draw [ultra thick,rotate=\cont*60 ] (3.4641,2)-- (3.4641,-2);
				\foreach \cont in {1,...,6}\draw [rotate=\cont*60,dashed] (0,0)--(3.4641,0);
				\foreach \cont in {1,...,6}\draw [rotate=\cont*60,dashed] (0,0)--(3.4641,2);
				\foreach \cont in {0,...,11}\filldraw[black, rotate=\cont*30] (2.7321,0.7321) circle (2pt) node [anchor=west] {};
				\foreach \cont in {0,...,11}\draw[rotate=\cont*30]   (2.7321,0.7321)--  (2.7321,-0.7321) ;
			\end{scope};
			\draw (2.7321,0.7321) node (v2) {} rectangle (4.1962,-0.7321);	
			\node at (12.558,-0.4665) {$2p$-gon};
			\node at (11,3) {4-gon};
			\node at (8.3,7.3) {$2q$-gon};
			\draw [-latex] (8,0.367)  to[out=-30,in=180] node [pos=.6, above] {}(12.058,-0.4665);
			\draw [-latex] (5.8415,2.9085) to[out=-80,in=180] node [pos=.6, above] {}(10.5,3);
			\draw [-latex] (3.9085,2.6585)  to[out=120,in=220] node [pos=.6, above] {}(8,7);
			\draw [-latex, very thick] (-2,10.5)  to[out=-0,in=150] node [pos=.6, above] {}(2,10.5);
		\end{scope}
		\draw [-latex, very thick] (-14.5,-2)  to[out=-70,in=150] node[pos=.6, above] {} (-10.25,-2.75);
		
		\node at (-13,-1.75) {$(a)$};
		\node at (18.25,-8) {$(b)$};
	\end{tikzpicture}
		\caption{(a) Process of clipping derivation of the $\{p,q\}$ tessellation. In the figure on the left, the original tessellation, in the figure on the right, in solid lines, the semi-regular $[2p,2p,q]$ tessellation, and in dashed lines the dual tessellation of $[2p,2p,q]$. (b) Incenter derivation process of the $\{p,q\}$ tessellation. In the figure on the left, there is the original tessellation; in the figure on the right, in solid lines, there is the semi-regular tessellation $[2p,2q,4]$; and in dashed lines, there is the dual tessellation of $[2p,2q,4]$.} \label{tesselderivada} 
\end{figure}
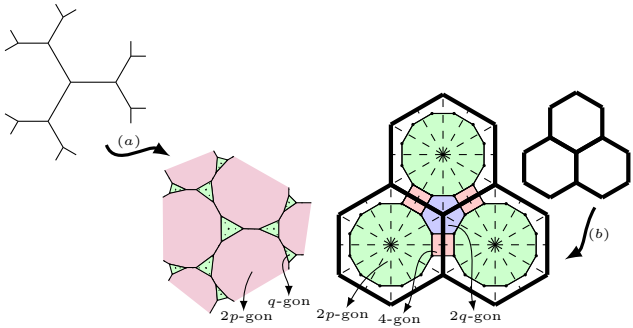

Note that in this case, $n_f = f + v + e$, $n_e = 3p f$, $n_v = 2 p f$, and we get 
\[
n_f = \dfrac{(pq+2p+2q)(g-1)}{pq-2p-2q}, \quad  n_e = \dfrac{6pq(g-1)}{pq-2p-2q} 
\]
and  
\[
n_v = \dfrac{4pq(g-1)}{pq-2p-2q}.
\]

Summarizing: if we consider a connected and non-orientable surface $M$ with genus $g\geq 3$, it is possible to construct a $\{p, q\}$ tessellation. The quantities of faces, edges, and vertices of the derived tessellations, as previously described, are given in Table \ref{tabela:N_f-N_e-N_v-tessel-derivada-em-funcao-de-p-q-g-nao-orientavel}, where $n_f = |F|$, $n_e = |E|$, and $n_v = |V|$.

\begin{table*}
	\caption{	\label{tabela:N_f-N_e-N_v-tessel-derivada-em-funcao-de-p-q-g-nao-orientavel}
		This table exhibits, as a function of $ p $, $ q $, and $ g $, the quantities $|F| $, $ |E|$, and $|V|$ relative to the semi-regular tessellations of a connected and non-orientable surface $M$, of genus $ g\geq 3 $, derived from the (regular) $\{p,q\}$ tessellation.}
	\begin{ruledtabular}
	\begin{tabular}{||c|c|c|c||}
		Tessellation	&	$ n_f $	&	$ n_e $	& $n_v$ \\	
		\hline	$[2p,2p,q]$	&	$ \dfrac{(2p+2q)(g-2)}{pq-2p-2q} $	&	$ \dfrac{3pq(g-2)}{pq-2p-2q}$	& 	$ \dfrac{2pq(g-2)}{pq-2p-2q}$ \\	
		\hline	$[2p,2q,4]$	&	$ \dfrac{(pq+2p+2q)(g-1)}{pq-2p-2q} $	&	$ \dfrac{6pq(g-1)}{pq-2p-2q} $	&	$ \dfrac{4pq(g-1)}{pq-2p-2q}$ \\	
					\end{tabular}
	\end{ruledtabular}
\end{table*}
		
\section{Quantum Floquet Codes}\label{reviewFloquet}

In this section we recall basic concepts on Floquet codes. For more details we refer the reader to \cite{Hastings2021, Vuillot2021, Breuckmann2016,Breuckmann2024}.

Hastings and Haah \cite{Hastings2021} proposed a new topological approach by means of Floquet codes that operate exclusively through measurements of two-qubit Pauli operators. Unlike Euclidean surface codes, which are constructed from four-qubit Pauli measurements, this approach eliminates the necessity to compile such measurements into a sequence of two-qubit Clifford gates and one-qubit measurements or into one and two-qubit Pauli measurements.

The logical operators evolve periodically during a syndrome extraction circuit, managing to protect the information and generate the logical qubits dynamically through a sequence of measurements. This additional flexibility allows the construction of well-performing codes~\cite{Gidney2021}, especially on platforms that excel at direct measurements of $2$ Majorana-based qubits \cite{Paetznick}.

In \cite{Paetznick}, the Euclidean semi-regular $\{4, 8, 8\}$ tessellation with boundary conditions was also used in the analysis of Floquet codes. Thus, the Kitaev model was expanded and proposed in more general geometries. Vuillot in \cite{Vuillot2021} presented a new view that allows us to explore and study Floquet codes in any geometric layout (geometric arrangement that is also $3$-colorable).

Let us now review how to construct the Instantaneous Stabilizer Group (ISG) of a Floquet code. We consider a regular trivalent $3$-colorable (hence even-sized polygons) tessellations on the hyperbolic plane whose polygons having even edges. The physical qubits are assigned to the vertices of the tessellation. The edge color depends on the face color that it lies. For instance, if a polygon is red, the edges belong to it are colored in green or blue, see Figure \ref{ISG}. 

\begin{figure}[H]
	\centering
	\label{figdouglas1}
	\begin{tikzpicture}[node distance={10mm}, thick, main/.style = {draw, circle, fill=gray, thin,inner sep=1pt}] 
		
		\def\fillgreen{[fill= green!50!]}
		\def\fillblue{[fill= blue!50!]}
		\def\fillred{[fill= red!70!]}
		\draw\fillgreen(3.7,0.8) -- (5.4,0.8) -- (5.3,1.8) -- (4.8,2.6) -- (4.1,2.7) -- (3.5,2.5) -- (3.3,2) -- (3.3,1.3) -- (3.7,0.8); 
		\draw\fillblue (5.3,1.8) -- (5.4,0.8) -- (6.2,0.1) -- (7.1,0.6) -- (7.5,1.5) -- (7.2,2.4) -- (6.6,3) -- (5.7,2.8) -- (5.3,1.8); 
		\draw\fillred (7.1,0.6) -- (8,0.4) -- (9.1,0.5) -- (9.6,1.05) -- (9.45,1.8) -- (8.9,2.3) -- (8.1,2.1) -- (7.5,1.5) -- (7.1,0.6); 
		\draw \fillgreen (6.2,-1.2) -- (7,-1.4) -- (7.95,-1.1) -- (8.5,-0.65) -- (8.5,0) -- (8,0.4) -- (7.1,0.6) -- (6.2,0.1) -- (6.2,-1.2); 
		\draw\fillred(6.2,0.1) -- (6.2,-1.2) -- (5.2,-2) -- (3.7,-1.7) -- (3.3,-1) -- (3.2,0.2) -- (3.7,0.8) -- (5.4,0.8) -- (6.2,0.1); 
		\draw\fillgreen (3.7,-1.7) -- (5.2,-2) -- (5.4,-2.9) -- (4.9,-3.6) -- (4,-3.7) -- (3.2,-3.6) -- (2.85,-3.2) -- (3.15,-2.2) -- (3.7,-1.7); 
		\draw\fillblue (5.2,-2) -- (5.4,-2.9) -- (5.9,-3.35) -- (6.5,-3.4) -- (6.95,-3.1) -- (7.2,-2.3) -- (7,-1.4) -- (6.2,-1.2) -- (5.2,-2); 
		\draw\fillred (7.2,-2.3) -- (8.2,-2.8) -- (8.8,-2.55) -- (9,-2) -- (8.95,-1.45) -- (8.55,-1.15) -- (7.95,-1.1) -- (7,-1.4) -- (7.2,-2.3); 
		
		\node[main] (1) at (3.7,-1.7) {};
		\node[main] (2) at (5.2,-2) {};
		\node[main] (3) at (6.2,-1.2) {};
		\node[main] (4) at (6.2,0.1) {};
		\node[main] (5) at (5.4,0.8) {};,
		\node[main] (6) at (3.7,0.8) {};
		\node[main] (7) at (3.2,0.2) {};
		\node[main] (8) at (3.3,-1) {};
		\node[main] (12) at (7.95,-1.1) {};
		\node[main] (13) at (8.5,-0.65) {};
		\node[main] (15) at (8.5,0) {};
		\node[main] (16) at (8,0.4) {};
		\node[main] (9) at (7.1,0.6) {};
		\node[main] (10) at (7,-1.4) {};
		\node[main] (17) at (5.4,-2.9) {};
		\node[main] (18) at (5.9,-3.35) {};
		\node[main] (19) at (6.5,-3.4) {};
		\node[main] (20) at (7.2,-2.3) {};
		\node[main] (21) at (5.3,1.8)   {};
		\node[main] (22) at (5.7,2.8) {};
		\node[main] (23) at (6.6,3) {};
		\node[main] (24) at (7.2,2.4) {};
		\node[main] (25) at (7.5,1.5) {};
		\node[main] (26) at (9.1,0.5) {};
		\node[main] (27) at (9.6,1.05) {};
		\node[main] (28) at (9.45,1.8) {};
		\node[main] (29) at (8.9,2.3) {};
		\node[main] (30) at (8.1,2.1) {};
		\node[main] (31) at (4.8,2.6) {};
		\node[main] (32) at (4.1,2.7) {};
		\node[main] (33) at (3.5,2.5) {};
		\node[main] (34) at (3.3,2) {};
		\node[main] (35) at (3.3,1.3) {};
		\node[main] (36) at (3.15,-2.2) {};
		\node[main] (37) at (2.85,-3.2) {};
		\node[main] (38) at (3.2,-3.6) {};
		\node[main] (39) at (4,-3.7) {};
		\node[main] (40) at (4.9,-3.6) {};
		\node[main] (41) at (8.2,-2.8) {};
		\node[main] (42) at (8.8,-2.55) {};
		\node[main] (43) at (9,-2) {};
		\node[main] (44) at (8.95,-1.45) {};
		\node[main] (45) at (8.55,-1.15) {};
		\node[main] (46) at (6.95,-3.1) {};
		
		\begin{scope}[very thick]
			\def\arestagreen{[very thick, green]}
			\def\arestablue{[very thick, blue]}
			\def\arestared{[very thick, red]}
			\draw\arestagreen (8)--(1);
			\draw\arestablue (1)--(2);
			\draw\arestagreen (2)--(3);
			\draw\arestablue (3)--(4);
			\draw \arestagreen(4)--(5);
			\draw\arestablue (5)--(6);
			\draw\arestagreen (6)--(7);
			\draw\arestablue (7)--(8);	
			\draw\arestared(4)--(9);
			\draw \arestared(3)--(10);
			\draw \arestagreen(10)--(20);
			\draw\arestablue (10)--(12);
			\draw\arestared (13)--(12);
			\draw \arestablue(13)--(15);
			\draw\arestared (15)--(16);
			\draw\arestablue (9)--(16);
			\draw\arestared (2)--(17);
			\draw\arestagreen (17)--(18);
			\draw\arestared (18)--(19);
			\draw\arestagreen (19)--(46);
			\draw\arestared  (46)--(20);
			\draw \arestared (5)--(21);
			\draw\arestagreen (21)--(22);
			\draw\arestared  (22)--(23);
			\draw\arestagreen (23)--(24);
			\draw \arestared (24)--(25);
			\draw\arestagreen (25)--(9);
			\draw\arestared  (1)--(36);
			\draw\arestablue (36)--(37);
			\draw\arestared  (37)--(38);
			\draw\arestablue (38)--(39);
			\draw\arestared  (39)--(40);
			\draw\arestablue (40)--(17);
			\draw\arestablue (21)--(31);
			\draw\arestared  (31)--(32);
			\draw\arestablue (32)--(33);
			\draw \arestared (33)--(34);
			\draw \arestablue(34)--(35);
			\draw\arestared  (35)--(6);
			\draw\arestagreen (16)--(26);
			\draw \arestablue(26)--(27);
			\draw \arestagreen(27)--(28);
			\draw\arestablue (28)--(29);
			\draw \arestagreen(29)--(30);
			\draw \arestablue(30)--(25);
			\draw \arestablue(20)--(41);
			\draw \arestagreen(41)--(42);
			\draw \arestablue(42)--(43);
			\draw\arestagreen (43)--(44);
			\draw \arestablue(44)--(45);
			\draw \arestagreen(45)--(12);
		\end{scope}
	\end{tikzpicture}
	\caption{An example of a regular three colorable tessellation.}
	\label{ISG}
\end{figure}
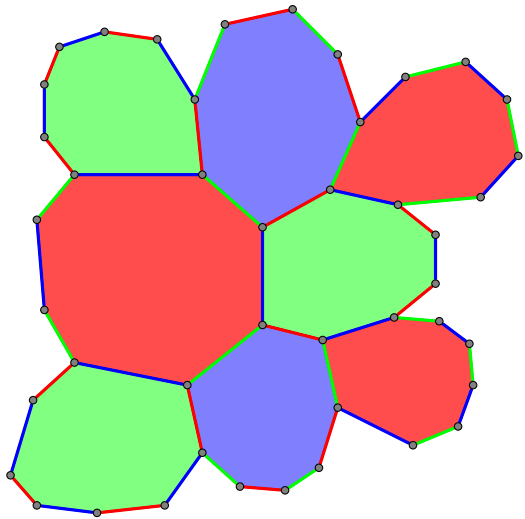

The checks are defined by $XX$, $YY$ and $ZZ$. Every check has a color: the green, blue and red edges correspond to $XX$, $YY$, $ZZ$ measurement checks, respectively, up to the signal measurement. Note also that green (blue, red) edges connect red (blue, green) faces. Two faces of different colors intersect only in edges of the another color. For instance, a green face intersects a blue face 
only in red edges. The three-coloured edges (the checks) will be utilized to perform 
the repetitive rounds of these two-body check measurements of the qubits of the edges. More precisely, at rounds $r=3n$ $($r=3n+1$, $r=3n+2$)$, all green (blue, red) checks are measured, see Figure \ref{figdouglas2}. 

\begin{figure}[H]
	\centering
	\label{figdouglas2}
\begin{tikzpicture}[scale=0.52]
	\tiny
	\def\qubitcolor{black}
	\def\qubitfill{gray!50!}
	\def\configred{red!50!}
	\def\configblue{blue!30!}
	\def\configgreen{green!50!}
	\def\espessuralinha{very thick}
	\def\qubitraio{0.06}
	\def\qubitsuporteraio{0.03}
	\def\qubitsuportecolor{black}
	\def\qubitsuportefill{gray!30!}
	
	\begin{scope}[shift={(-5.4,0)}]              
		\clip  (2.4495,2.1213) node (v1) {} rectangle (7.3485,-2.1213) node (v2) {};
		\def\principal{ \begin{scope}
				\draw[fill=\configgreen] (0.2588,0.9659)--(-0.2588,0.9659)--(-0.7071,0.7071)--(-0.9659,0.2588)--(-0.9659,-0.2588)--(-0.7071,-0.7071)--(-0.2588,-0.9659)--(0.2588,-0.9659)--(-0.2588,-0.9659)--(0.2588,-0.9659)--(0.7071,-0.7071)--(0.9659,-0.2588)--(0.9659,0.2588)--(0.7071,0.7071)--cycle;
				\draw[fill=\configred]  (0.9659,-0.2588)-- (1.4835,-0.2588)-- (1.4835,0.2588)-- (0.9659,0.2588)-- (0.9659,-0.2588);
				\draw[fill=\configred]  (0.7071,0.7071)-- (0.9659,1.1553)-- (0.5176,1.4142)-- (0.2588,0.9659)-- (0.7071,0.7071);
				\draw[fill=\configred]  (-0.2588,0.9659)-- (-0.5176,1.4142)-- (-0.9659,1.1553)-- (-0.7071,0.7071)-- (-0.2588,0.9659);
				\draw[fill=\configred]  (-0.9659,0.2588)-- (-1.4835,0.2588)-- (-1.4835,-0.2588)-- (-0.9659,-0.2588)-- (-0.9659,0.2588);
				\draw[fill=\configred]  (-0.7071,-0.7071)-- (-0.9659,-1.1553)-- (-0.5176,-1.4142)-- (-0.2588,-0.9659)-- (-0.7071,-0.7071);
				\draw[fill=\configred]  (0.2588,-0.9659)-- (0.5176,-1.4142)-- (0.9659,-1.1553)-- (0.7071,-0.7071)-- (0.2588,-0.9659);
				\draw[fill=\configred]  (0.9659,-0.2588)-- (1.4835,-0.2588)-- (1.4835,0.2588)-- (0.9659,0.2588)-- (0.9659,-0.2588);
				\draw[blue!30!, fill=\configblue]  (0.2588,0.9659)-- (0.5176,1.4142)-- (-0.5176,1.4142)-- (-0.2588,0.9659)--cycle;
				\draw[blue!30!, fill=\configblue]  (-0.7071,0.7071)-- (-0.9659,1.1553)-- (-1.4835,0.2588)-- (-0.9659,0.2588)--cycle;
				\draw[blue!30!, fill=\configblue]  (-0.9659,-0.2588)-- (-1.4835,-0.2588)-- (-0.9659,-1.1553)-- (-0.7071,-0.7071)--cycle;
				\draw[blue!30!, fill=\configblue]  (-0.2588,-0.9659)-- (-0.5176,-1.4142)-- (0.5176,-1.4142)-- (0.2588,-0.9659)--cycle;
				\draw[blue!30!, fill=\configblue]  (0.7071,-0.7071)-- (0.9659,-1.1553)-- (1.4835,-0.2588)-- (0.9659,-0.2588)--cycle;
				\draw[blue!30!, fill=\configblue]  (0.9659,0.2588)-- (1.4835,0.2588)-- (0.9659,1.1553)-- (0.7071,0.7071)--cycle;
				\draw[red,\espessuralinha]  (-0.7071,0.7071)-- (-0.9659,0.2588);
				\draw[red, \espessuralinha]  (-0.9659,-0.2588)-- (-0.7071,-0.7071);
				\draw[red,\espessuralinha]  (-0.2588,-0.9659)-- (0.2588,-0.9659);
				\draw[red,\espessuralinha]  (0.7071,-0.7071)-- (0.9659,-0.2588);
				\draw[red,\espessuralinha]  (0.9659,0.2588)-- (0.7071,0.7071);
				\draw[red,\espessuralinha]  (0.2588,0.9659)-- (-0.2588,0.9659);
				\draw[blue,\espessuralinha]  (-0.9659,0.2588)-- (-0.9659,-0.2588);
				\draw[blue,\espessuralinha]  (-0.7071,-0.7071)-- (-0.2588,-0.9659);
				\draw[blue,\espessuralinha]  (0.2588,-0.9659)-- (0.7071,-0.7071);
				\draw[blue,\espessuralinha]  (0.9659,-0.2588)-- (0.9659,0.2588);
				\draw[blue,\espessuralinha]  (0.7071,0.7071)-- (0.2588,0.9659);
				\draw[blue,\espessuralinha]  (-0.2588,0.9659)-- (-0.7071,0.7071);
				\draw[fill=\configred]  (0.9659,-0.2588)-- (1.4835,-0.2588)-- (1.4835,0.2588)-- (0.9659,0.2588)-- (0.9659,-0.2588);
				\draw[green,\espessuralinha]  (0.7071,0.7071)-- (0.9659,1.1553);
				\draw[green,\espessuralinha]  (0.2588,0.9659)-- (0.5176,1.4142);
				\draw[blue,\espessuralinha]  (0.9659,1.1553)-- (0.5176,1.4142);
				\draw[green,\espessuralinha]  (-0.2588,0.9659)-- (-0.5176,1.4142);
				\draw[green,\espessuralinha]  (-0.7071,0.7071)-- (-0.9659,1.1553);
				\draw[blue,\espessuralinha]  (-0.5176,1.4142)-- (-0.9659,1.1553);
				\draw[green,\espessuralinha]  (0.2588,-0.9659)-- (0.5176,-1.4142);
				\draw[green,\espessuralinha]  (-0.2588,-0.9659)-- (-0.5176,-1.4142);
				\draw[blue,\espessuralinha]  (-0.9659,-1.1553)-- (-0.5176,-1.4142);
				\draw[green,\espessuralinha]  (-0.7071,-0.7071)-- (-0.9659,-1.1553);
				\draw[green,\espessuralinha]  (-0.9659,-0.2588)-- (-1.4835,-0.2588);
				\draw[blue,\espessuralinha]  (-1.4835,0.2588)-- (-1.4835,-0.2588);
				\draw[green,\espessuralinha]  (-0.9659,0.2588)-- (-1.4835,0.2588);
				\draw[green,\espessuralinha]  (0.7071,-0.7071)-- (0.9659,-1.1553);
				\draw[blue,\espessuralinha]  (0.5176,-1.4142)-- (0.9659,-1.1553);
				\draw[green,\espessuralinha]  (0.9659,-0.2588)-- (1.4835,-0.2588);
				\draw[green,\espessuralinha]  (0.9659,0.2588)-- (1.4835,0.2588);
				\draw[blue,\espessuralinha]  (1.4835,-0.2588)-- (1.4835,0.2588);
				\draw[\qubitcolor, fill=\qubitfill]  (-0.2588,0.9659) circle (\qubitraio);\draw[\qubitcolor, fill=\qubitfill]  (-0.9659,0.2588) circle (\qubitraio);\draw[\qubitcolor, fill=\qubitfill] (-0.7071,0.7071)circle (\qubitraio);
				\draw[\qubitcolor, fill=\qubitfill]  (-0.9659,-0.2588) circle (\qubitraio);\draw[\qubitcolor, fill=\qubitfill]  (-0.7071,-0.7071) circle (\qubitraio);\draw[\qubitcolor, fill=\qubitfill]  (-0.2588,-0.9659) circle (\qubitraio);\draw[\qubitcolor, fill=\qubitfill]  (0.2588,-0.9659) circle (\qubitraio);\draw[\qubitcolor, fill=\qubitfill]  (0.7071,-0.7071) circle (\qubitraio);\draw[\qubitcolor, fill=\qubitfill]  (0.9659,-0.2588) circle (\qubitraio);\draw[\qubitcolor, fill=\qubitfill]  (0.9659,0.2588) circle (\qubitraio);\draw[\qubitcolor, fill=\qubitfill]  (0.7071,0.7071) circle (\qubitraio);\draw[\qubitcolor, fill=\qubitfill]  (-0.5176,1.4142) circle (\qubitraio);\draw[\qubitcolor, fill=\qubitfill]  (-0.9659,1.1553) circle (\qubitraio);\draw[\qubitcolor, fill=\qubitfill]  (-1.4835,0.2588) circle (\qubitraio);\draw[\qubitcolor, fill=\qubitfill]  (-1.4835,-0.2588) circle (\qubitraio);\draw[\qubitcolor, fill=\qubitfill]  (-0.9659,-1.1553) circle (\qubitraio);\draw[\qubitcolor, fill=\qubitfill]  (-0.5176,-1.4142) circle (\qubitraio);\draw[\qubitcolor, fill=\qubitfill]  (0.5176,-1.4142) circle (\qubitraio);\draw[\qubitcolor, fill=\qubitfill]  (0.9659,-1.1553) circle (\qubitraio);\draw[\qubitcolor, fill=\qubitfill]  (1.4835,-0.2588) circle (\qubitraio);\draw[\qubitcolor, fill=\qubitfill]  (1.4835,0.2588) circle (\qubitraio);\draw[\qubitcolor, fill=\qubitfill]  (0.9659,1.1553) circle (\qubitraio);\draw[\qubitcolor, fill=\qubitfill]  (0.5176,1.4142) circle (\qubitraio);
		\end{scope}}
		\foreach \x in {2.44948974278318 , 4.89897948556636 , 7.34846922834953 , 9.79795897113271 , 12.2474487139159 } {\begin{scope}[xshift=\x cm]  \principal \end{scope} }
		\foreach \x in {1.22474487139159 , 3.67423461417477 , 6.12372435695795 , 8.57321409974112 , 11.0227038425243 , 13.4721935853075 } {\begin{scope}[xshift=\x cm, yshift=2.1213cm]  \principal \end{scope} }
		\foreach \x in {1.22474487139159 , 3.67423461417477 , 6.12372435695795 , 8.57321409974112 , 11.0227038425243 , 13.4721935853075 } {\begin{scope}[xshift=\x cm, yshift=-2.1213cm]  \principal \end{scope} }
		
		\draw[\qubitcolor, fill=\qubitfill]  (6.3815,-1.1565) circle (\qubitraio); 
		\draw[\qubitcolor, fill=\qubitfill] (3.934,-1.158)  circle (\qubitraio); 
		\draw[draw opacity=0.3,fill opacity=0.5,text opacity=0.2, line width=1.5ex]  plot[smooth, tension=0.1] coordinates {(4.64,-2.19)  (4.64,-1.86) (4.38,-1.413) (3.934,-1.158) (4.191,-0.705) (3.933,-0.26) (3.933,0.26) (4.194,0.711) (4.64,0.965) (4.382,1.41) (4.64,1.859) (4.64,2.18)};
		\draw[\qubitsuportecolor, fill=\qubitsuportefill]  (4.64,1.859) circle (\qubitsuporteraio);
		\draw[\qubitsuportecolor, fill=\qubitsuportefill]  (4.3815,1.4105) circle (\qubitsuporteraio);
		\draw[\qubitsuportecolor, fill=\qubitsuportefill]  (4.194,0.711) circle (\qubitsuporteraio);
		\draw[\qubitsuportecolor, fill=\qubitsuportefill]  (3.933,0.26) circle (\qubitsuporteraio);
		\draw[\qubitsuportecolor, fill=\qubitsuportefill]  (3.933,-0.26) circle (\qubitsuporteraio);
		\draw[\qubitsuportecolor, fill=\qubitsuportefill]  (4.191,-0.705) circle (\qubitsuporteraio);
		\draw[\qubitsuportecolor, fill=\qubitsuportefill]  (4.38,-1.413) circle (\qubitsuporteraio);
		\draw[\qubitsuportecolor, fill=\qubitsuportefill]  (4.64,-1.86) circle (\qubitsuporteraio);
	\end{scope}
	
	\begin{scope}[shift={(-0.446,0)}]              
		\clip  (2.4495,2.1213) rectangle (7.3485,-2.1213);
		\def\principal{ \begin{scope}
				\draw[fill=\configgreen] (0.2588,0.9659)--(-0.2588,0.9659)--(-0.7071,0.7071)--(-0.9659,0.2588)--(-0.9659,-0.2588)--(-0.7071,-0.7071)--(-0.2588,-0.9659)--(0.2588,-0.9659)--(-0.2588,-0.9659)--(0.2588,-0.9659)--(0.7071,-0.7071)--(0.9659,-0.2588)--(0.9659,0.2588)--(0.7071,0.7071)--cycle;
				\draw[fill=\configred]  (0.9659,-0.2588)-- (1.4835,-0.2588)-- (1.4835,0.2588)-- (0.9659,0.2588)-- (0.9659,-0.2588);
				\draw[fill=\configred]  (0.7071,0.7071)-- (0.9659,1.1553)-- (0.5176,1.4142)-- (0.2588,0.9659)-- (0.7071,0.7071);
				\draw[fill=\configred]  (-0.2588,0.9659)-- (-0.5176,1.4142)-- (-0.9659,1.1553)-- (-0.7071,0.7071)-- (-0.2588,0.9659);
				\draw[fill=\configred]  (-0.9659,0.2588)-- (-1.4835,0.2588)-- (-1.4835,-0.2588)-- (-0.9659,-0.2588)-- (-0.9659,0.2588);
				\draw[fill=\configred]  (-0.7071,-0.7071)-- (-0.9659,-1.1553)-- (-0.5176,-1.4142)-- (-0.2588,-0.9659)-- (-0.7071,-0.7071);
				\draw[fill=\configred]  (0.2588,-0.9659)-- (0.5176,-1.4142)-- (0.9659,-1.1553)-- (0.7071,-0.7071)-- (0.2588,-0.9659);
				\draw[fill=\configred]  (0.9659,-0.2588)-- (1.4835,-0.2588)-- (1.4835,0.2588)-- (0.9659,0.2588)-- (0.9659,-0.2588);
				\draw[blue!30!, fill=\configblue]  (0.2588,0.9659)-- (0.5176,1.4142)-- (-0.5176,1.4142)-- (-0.2588,0.9659)--cycle;
				\draw[blue!30!, fill=\configblue]  (-0.7071,0.7071)-- (-0.9659,1.1553)-- (-1.4835,0.2588)-- (-0.9659,0.2588)--cycle;
				\draw[blue!30!, fill=\configblue]  (-0.9659,-0.2588)-- (-1.4835,-0.2588)-- (-0.9659,-1.1553)-- (-0.7071,-0.7071)--cycle;
				\draw[blue!30!, fill=\configblue]  (-0.2588,-0.9659)-- (-0.5176,-1.4142)-- (0.5176,-1.4142)-- (0.2588,-0.9659)--cycle;
				\draw[blue!30!, fill=\configblue]  (0.7071,-0.7071)-- (0.9659,-1.1553)-- (1.4835,-0.2588)-- (0.9659,-0.2588)--cycle;
				\draw[blue!30!, fill=\configblue]  (0.9659,0.2588)-- (1.4835,0.2588)-- (0.9659,1.1553)-- (0.7071,0.7071)--cycle;
				\draw[red,\espessuralinha]  (-0.7071,0.7071)-- (-0.9659,0.2588);
				\draw[red, \espessuralinha]  (-0.9659,-0.2588)-- (-0.7071,-0.7071);
				\draw[red,\espessuralinha]  (-0.2588,-0.9659)-- (0.2588,-0.9659);
				\draw[red,\espessuralinha]  (0.7071,-0.7071)-- (0.9659,-0.2588);
				\draw[red,\espessuralinha]  (0.9659,0.2588)-- (0.7071,0.7071);
				\draw[red,\espessuralinha]  (0.2588,0.9659)-- (-0.2588,0.9659);
				\draw[blue,\espessuralinha]  (-0.9659,0.2588)-- (-0.9659,-0.2588);
				\draw[blue,\espessuralinha]  (-0.7071,-0.7071)-- (-0.2588,-0.9659);
				\draw[blue,\espessuralinha]  (0.2588,-0.9659)-- (0.7071,-0.7071);
				\draw[blue,\espessuralinha]  (0.9659,-0.2588)-- (0.9659,0.2588);
				\draw[blue,\espessuralinha]  (0.7071,0.7071)-- (0.2588,0.9659);
				\draw[blue,\espessuralinha]  (-0.2588,0.9659)-- (-0.7071,0.7071);
				\draw[fill=\configred]  (0.9659,-0.2588)-- (1.4835,-0.2588)-- (1.4835,0.2588)-- (0.9659,0.2588)-- (0.9659,-0.2588);
				\draw[green,\espessuralinha]  (0.7071,0.7071)-- (0.9659,1.1553);
				\draw[green,\espessuralinha]  (0.2588,0.9659)-- (0.5176,1.4142);
				\draw[blue,\espessuralinha]  (0.9659,1.1553)-- (0.5176,1.4142);
				\draw[green,\espessuralinha]  (-0.2588,0.9659)-- (-0.5176,1.4142);
				\draw[green,\espessuralinha]  (-0.7071,0.7071)-- (-0.9659,1.1553);
				\draw[blue,\espessuralinha]  (-0.5176,1.4142)-- (-0.9659,1.1553);
				\draw[green,\espessuralinha]  (0.2588,-0.9659)-- (0.5176,-1.4142);
				\draw[green,\espessuralinha]  (-0.2588,-0.9659)-- (-0.5176,-1.4142);
				\draw[blue,\espessuralinha]  (-0.9659,-1.1553)-- (-0.5176,-1.4142);
				\draw[green,\espessuralinha]  (-0.7071,-0.7071)-- (-0.9659,-1.1553);
				\draw[green,\espessuralinha]  (-0.9659,-0.2588)-- (-1.4835,-0.2588);
				\draw[blue,\espessuralinha]  (-1.4835,0.2588)-- (-1.4835,-0.2588);
				\draw[green,\espessuralinha]  (-0.9659,0.2588)-- (-1.4835,0.2588);
				\draw[green,\espessuralinha]  (0.7071,-0.7071)-- (0.9659,-1.1553);
				\draw[blue,\espessuralinha]  (0.5176,-1.4142)-- (0.9659,-1.1553);
				\draw[green,\espessuralinha]  (0.9659,-0.2588)-- (1.4835,-0.2588);
				\draw[green,\espessuralinha]  (0.9659,0.2588)-- (1.4835,0.2588);
				\draw[blue,\espessuralinha]  (1.4835,-0.2588)-- (1.4835,0.2588);
				\draw[\qubitcolor, fill=\qubitfill]  (-0.2588,0.9659) circle (\qubitraio);\draw[\qubitcolor, fill=\qubitfill]  (-0.9659,0.2588) circle (\qubitraio);\draw[\qubitcolor, fill=\qubitfill] (-0.7071,0.7071)circle (\qubitraio);
				\draw[\qubitcolor, fill=\qubitfill]  (-0.9659,-0.2588) circle (\qubitraio);\draw[\qubitcolor, fill=\qubitfill]  (-0.7071,-0.7071) circle (\qubitraio);\draw[\qubitcolor, fill=\qubitfill]  (-0.2588,-0.9659) circle (\qubitraio);\draw[\qubitcolor, fill=\qubitfill]  (0.2588,-0.9659) circle (\qubitraio);\draw[\qubitcolor, fill=\qubitfill]  (0.7071,-0.7071) circle (\qubitraio);\draw[\qubitcolor, fill=\qubitfill]  (0.9659,-0.2588) circle (\qubitraio);\draw[\qubitcolor, fill=\qubitfill]  (0.9659,0.2588) circle (\qubitraio);\draw[\qubitcolor, fill=\qubitfill]  (0.7071,0.7071) circle (\qubitraio);\draw[\qubitcolor, fill=\qubitfill]  (-0.5176,1.4142) circle (\qubitraio);\draw[\qubitcolor, fill=\qubitfill]  (-0.9659,1.1553) circle (\qubitraio);\draw[\qubitcolor, fill=\qubitfill]  (-1.4835,0.2588) circle (\qubitraio);\draw[\qubitcolor, fill=\qubitfill]  (-1.4835,-0.2588) circle (\qubitraio);\draw[\qubitcolor, fill=\qubitfill]  (-0.9659,-1.1553) circle (\qubitraio);\draw[\qubitcolor, fill=\qubitfill]  (-0.5176,-1.4142) circle (\qubitraio);\draw[\qubitcolor, fill=\qubitfill]  (0.5176,-1.4142) circle (\qubitraio);\draw[\qubitcolor, fill=\qubitfill]  (0.9659,-1.1553) circle (\qubitraio);\draw[\qubitcolor, fill=\qubitfill]  (1.4835,-0.2588) circle (\qubitraio);\draw[\qubitcolor, fill=\qubitfill]  (1.4835,0.2588) circle (\qubitraio);\draw[\qubitcolor, fill=\qubitfill]  (0.9659,1.1553) circle (\qubitraio);\draw[\qubitcolor, fill=\qubitfill]  (0.5176,1.4142) circle (\qubitraio);
		\end{scope}}
		\foreach \x in {2.44948974278318 , 4.89897948556636 , 7.34846922834953 , 9.79795897113271 , 12.2474487139159 } {\begin{scope}[xshift=\x cm]  \principal \end{scope} }
		\foreach \x in {1.22474487139159 , 3.67423461417477 , 6.12372435695795 , 8.57321409974112 , 11.0227038425243 , 13.4721935853075 } {\begin{scope}[xshift=\x cm, yshift=2.1213cm]  \principal \end{scope} }
		\foreach \x in {1.22474487139159 , 3.67423461417477 , 6.12372435695795 , 8.57321409974112 , 11.0227038425243 , 13.4721935853075 } {\begin{scope}[xshift=\x cm, yshift=-2.1213cm]  \principal \end{scope} }
		\draw[\qubitcolor, fill=\qubitfill]  (6.3815,-1.1565) circle (\qubitraio); 
		\draw[\qubitcolor, fill=\qubitfill] (3.934,-1.158)  circle (\qubitraio); 
		\draw[draw opacity=0.3,fill opacity=0.5,text opacity=0.2, line width=1.5ex]  plot[smooth, tension=0.1] coordinates {(4.64,-2.19)  (4.64,-1.86) (4.38,-1.413) (3.934,-1.158) (4.191,-0.705) (3.933,-0.26) (3.933,0.26) (4.194,0.711) (4.64,0.965) (4.382,1.41) (4.64,1.859) (4.64,2.18)};
	\end{scope}
	\begin{scope}[shift={(4.5066,0)}]              
		\clip  (2.4495,2.1213) rectangle (7.3485,-2.1213);
		\def\principal{ \begin{scope}
				\draw[fill=\configgreen] (0.2588,0.9659)--(-0.2588,0.9659)--(-0.7071,0.7071)--(-0.9659,0.2588)--(-0.9659,-0.2588)--(-0.7071,-0.7071)--(-0.2588,-0.9659)--(0.2588,-0.9659)--(-0.2588,-0.9659)--(0.2588,-0.9659)--(0.7071,-0.7071)--(0.9659,-0.2588)--(0.9659,0.2588)--(0.7071,0.7071)--cycle;
				\draw[fill=\configred]  (0.9659,-0.2588)-- (1.4835,-0.2588)-- (1.4835,0.2588)-- (0.9659,0.2588)-- (0.9659,-0.2588);
				\draw[fill=\configred]  (0.7071,0.7071)-- (0.9659,1.1553)-- (0.5176,1.4142)-- (0.2588,0.9659)-- (0.7071,0.7071);
				\draw[fill=\configred]  (-0.2588,0.9659)-- (-0.5176,1.4142)-- (-0.9659,1.1553)-- (-0.7071,0.7071)-- (-0.2588,0.9659);
				\draw[fill=\configred]  (-0.9659,0.2588)-- (-1.4835,0.2588)-- (-1.4835,-0.2588)-- (-0.9659,-0.2588)-- (-0.9659,0.2588);
				\draw[fill=\configred]  (-0.7071,-0.7071)-- (-0.9659,-1.1553)-- (-0.5176,-1.4142)-- (-0.2588,-0.9659)-- (-0.7071,-0.7071);
				\draw[fill=\configred]  (0.2588,-0.9659)-- (0.5176,-1.4142)-- (0.9659,-1.1553)-- (0.7071,-0.7071)-- (0.2588,-0.9659);
				\draw[fill=\configred]  (0.9659,-0.2588)-- (1.4835,-0.2588)-- (1.4835,0.2588)-- (0.9659,0.2588)-- (0.9659,-0.2588);
				\draw[blue!30!, fill=\configblue]  (0.2588,0.9659)-- (0.5176,1.4142)-- (-0.5176,1.4142)-- (-0.2588,0.9659)--cycle;
				\draw[blue!30!, fill=\configblue]  (-0.7071,0.7071)-- (-0.9659,1.1553)-- (-1.4835,0.2588)-- (-0.9659,0.2588)--cycle;
				\draw[blue!30!, fill=\configblue]  (-0.9659,-0.2588)-- (-1.4835,-0.2588)-- (-0.9659,-1.1553)-- (-0.7071,-0.7071)--cycle;
				\draw[blue!30!, fill=\configblue]  (-0.2588,-0.9659)-- (-0.5176,-1.4142)-- (0.5176,-1.4142)-- (0.2588,-0.9659)--cycle;
				\draw[blue!30!, fill=\configblue]  (0.7071,-0.7071)-- (0.9659,-1.1553)-- (1.4835,-0.2588)-- (0.9659,-0.2588)--cycle;
				\draw[blue!30!, fill=\configblue]  (0.9659,0.2588)-- (1.4835,0.2588)-- (0.9659,1.1553)-- (0.7071,0.7071)--cycle;
				\draw[red,\espessuralinha]  (-0.7071,0.7071)-- (-0.9659,0.2588);
				\draw[red, \espessuralinha]  (-0.9659,-0.2588)-- (-0.7071,-0.7071);
				\draw[red,\espessuralinha]  (-0.2588,-0.9659)-- (0.2588,-0.9659);
				\draw[red,\espessuralinha]  (0.7071,-0.7071)-- (0.9659,-0.2588);
				\draw[red,\espessuralinha]  (0.9659,0.2588)-- (0.7071,0.7071);
				\draw[red,\espessuralinha]  (0.2588,0.9659)-- (-0.2588,0.9659);
				\draw[blue,\espessuralinha]  (-0.9659,0.2588)-- (-0.9659,-0.2588);
				\draw[blue,\espessuralinha]  (-0.7071,-0.7071)-- (-0.2588,-0.9659);
				\draw[blue,\espessuralinha]  (0.2588,-0.9659)-- (0.7071,-0.7071);
				\draw[blue,\espessuralinha]  (0.9659,-0.2588)-- (0.9659,0.2588);
				\draw[blue,\espessuralinha]  (0.7071,0.7071)-- (0.2588,0.9659);
				\draw[blue,\espessuralinha]  (-0.2588,0.9659)-- (-0.7071,0.7071);
				\draw[fill=\configred]  (0.9659,-0.2588)-- (1.4835,-0.2588)-- (1.4835,0.2588)-- (0.9659,0.2588)-- (0.9659,-0.2588);
				\draw[green,\espessuralinha]  (0.7071,0.7071)-- (0.9659,1.1553);
				\draw[green,\espessuralinha]  (0.2588,0.9659)-- (0.5176,1.4142);
				\draw[blue,\espessuralinha]  (0.9659,1.1553)-- (0.5176,1.4142);
				\draw[green,\espessuralinha]  (-0.2588,0.9659)-- (-0.5176,1.4142);
				\draw[green,\espessuralinha]  (-0.7071,0.7071)-- (-0.9659,1.1553);
				\draw[blue,\espessuralinha]  (-0.5176,1.4142)-- (-0.9659,1.1553);
				\draw[green,\espessuralinha]  (0.2588,-0.9659)-- (0.5176,-1.4142);
				\draw[green,\espessuralinha]  (-0.2588,-0.9659)-- (-0.5176,-1.4142);
				\draw[blue,\espessuralinha]  (-0.9659,-1.1553)-- (-0.5176,-1.4142);
				\draw[green,\espessuralinha]  (-0.7071,-0.7071)-- (-0.9659,-1.1553);
				\draw[green,\espessuralinha]  (-0.9659,-0.2588)-- (-1.4835,-0.2588);
				\draw[blue,\espessuralinha]  (-1.4835,0.2588)-- (-1.4835,-0.2588);
				\draw[green,\espessuralinha]  (-0.9659,0.2588)-- (-1.4835,0.2588);
				\draw[green,\espessuralinha]  (0.7071,-0.7071)-- (0.9659,-1.1553);
				\draw[blue,\espessuralinha]  (0.5176,-1.4142)-- (0.9659,-1.1553);
				\draw[green,\espessuralinha]  (0.9659,-0.2588)-- (1.4835,-0.2588);
				\draw[green,\espessuralinha]  (0.9659,0.2588)-- (1.4835,0.2588);
				\draw[blue,\espessuralinha]  (1.4835,-0.2588)-- (1.4835,0.2588);
				\draw[\qubitcolor, fill=\qubitfill]  (-0.2588,0.9659) circle (\qubitraio);\draw[\qubitcolor, fill=\qubitfill]  (-0.9659,0.2588) circle (\qubitraio);\draw[\qubitcolor, fill=\qubitfill] (-0.7071,0.7071)circle (\qubitraio);
				\draw[\qubitcolor, fill=\qubitfill]  (-0.9659,-0.2588) circle (\qubitraio);\draw[\qubitcolor, fill=\qubitfill]  (-0.7071,-0.7071) circle (\qubitraio);\draw[\qubitcolor, fill=\qubitfill]  (-0.2588,-0.9659) circle (\qubitraio);\draw[\qubitcolor, fill=\qubitfill]  (0.2588,-0.9659) circle (\qubitraio);\draw[\qubitcolor, fill=\qubitfill]  (0.7071,-0.7071) circle (\qubitraio);\draw[\qubitcolor, fill=\qubitfill]  (0.9659,-0.2588) circle (\qubitraio);\draw[\qubitcolor, fill=\qubitfill]  (0.9659,0.2588) circle (\qubitraio);\draw[\qubitcolor, fill=\qubitfill]  (0.7071,0.7071) circle (\qubitraio);\draw[\qubitcolor, fill=\qubitfill]  (-0.5176,1.4142) circle (\qubitraio);\draw[\qubitcolor, fill=\qubitfill]  (-0.9659,1.1553) circle (\qubitraio);\draw[\qubitcolor, fill=\qubitfill]  (-1.4835,0.2588) circle (\qubitraio);\draw[\qubitcolor, fill=\qubitfill]  (-1.4835,-0.2588) circle (\qubitraio);\draw[\qubitcolor, fill=\qubitfill]  (-0.9659,-1.1553) circle (\qubitraio);\draw[\qubitcolor, fill=\qubitfill]  (-0.5176,-1.4142) circle (\qubitraio);\draw[\qubitcolor, fill=\qubitfill]  (0.5176,-1.4142) circle (\qubitraio);\draw[\qubitcolor, fill=\qubitfill]  (0.9659,-1.1553) circle (\qubitraio);\draw[\qubitcolor, fill=\qubitfill]  (1.4835,-0.2588) circle (\qubitraio);\draw[\qubitcolor, fill=\qubitfill]  (1.4835,0.2588) circle (\qubitraio);\draw[\qubitcolor, fill=\qubitfill]  (0.9659,1.1553) circle (\qubitraio);\draw[\qubitcolor, fill=\qubitfill]  (0.5176,1.4142) circle (\qubitraio);
		\end{scope}}
		\foreach \x in {2.44948974278318 , 4.89897948556636 , 7.34846922834953 , 9.79795897113271 , 12.2474487139159 } {\begin{scope}[xshift=\x cm]  \principal \end{scope} }
		\foreach \x in {1.22474487139159 , 3.67423461417477 , 6.12372435695795 , 8.57321409974112 , 11.0227038425243 , 13.4721935853075 } {\begin{scope}[xshift=\x cm, yshift=2.1213cm]  \principal \end{scope} }
		\foreach \x in {1.22474487139159 , 3.67423461417477 , 6.12372435695795 , 8.57321409974112 , 11.0227038425243 , 13.4721935853075 } {\begin{scope}[xshift=\x cm, yshift=-2.1213cm]  \principal \end{scope} }
		\draw[\qubitcolor, fill=\qubitfill]  (6.3815,-1.1565) circle (\qubitraio); 
		\draw[\qubitcolor, fill=\qubitfill] (3.934,-1.158)  circle (\qubitraio); 	
		\draw[draw opacity=0.3,fill opacity=0.5,text opacity=0.2, line width=1.5ex]  plot[smooth, tension=0.1] coordinates {(4.64,-2.19)  (4.64,-1.86) (4.38,-1.413) (3.934,-1.158) (4.191,-0.705) (3.933,-0.26) (3.933,0.26) (4.194,0.711) (4.64,0.965) (4.382,1.41) (4.64,1.859) (4.64,2.18)};
	\end{scope}
	
	\begin{scope}[shift={(-5.4,-4.3)}]              
		\clip  (2.4495,2.1213) node (v1) {} rectangle (7.3485,-2.1213) node (v2) {};
		\def\principal{ \begin{scope}
				\draw[fill=\configgreen] (0.2588,0.9659)--(-0.2588,0.9659)--(-0.7071,0.7071)--(-0.9659,0.2588)--(-0.9659,-0.2588)--(-0.7071,-0.7071)--(-0.2588,-0.9659)--(0.2588,-0.9659)--(-0.2588,-0.9659)--(0.2588,-0.9659)--(0.7071,-0.7071)--(0.9659,-0.2588)--(0.9659,0.2588)--(0.7071,0.7071)--cycle;
				\draw[fill=\configred]  (0.9659,-0.2588)-- (1.4835,-0.2588)-- (1.4835,0.2588)-- (0.9659,0.2588)-- (0.9659,-0.2588);
				\draw[fill=\configred]  (0.7071,0.7071)-- (0.9659,1.1553)-- (0.5176,1.4142)-- (0.2588,0.9659)-- (0.7071,0.7071);
				\draw[fill=\configred]  (-0.2588,0.9659)-- (-0.5176,1.4142)-- (-0.9659,1.1553)-- (-0.7071,0.7071)-- (-0.2588,0.9659);
				\draw[fill=\configred]  (-0.9659,0.2588)-- (-1.4835,0.2588)-- (-1.4835,-0.2588)-- (-0.9659,-0.2588)-- (-0.9659,0.2588);
				\draw[fill=\configred]  (-0.7071,-0.7071)-- (-0.9659,-1.1553)-- (-0.5176,-1.4142)-- (-0.2588,-0.9659)-- (-0.7071,-0.7071);
				\draw[fill=\configred]  (0.2588,-0.9659)-- (0.5176,-1.4142)-- (0.9659,-1.1553)-- (0.7071,-0.7071)-- (0.2588,-0.9659);
				\draw[fill=\configred]  (0.9659,-0.2588)-- (1.4835,-0.2588)-- (1.4835,0.2588)-- (0.9659,0.2588)-- (0.9659,-0.2588);
				\draw[blue!30!, fill=\configblue]  (0.2588,0.9659)-- (0.5176,1.4142)-- (-0.5176,1.4142)-- (-0.2588,0.9659)--cycle;
				\draw[blue!30!, fill=\configblue]  (-0.7071,0.7071)-- (-0.9659,1.1553)-- (-1.4835,0.2588)-- (-0.9659,0.2588)--cycle;
				\draw[blue!30!, fill=\configblue]  (-0.9659,-0.2588)-- (-1.4835,-0.2588)-- (-0.9659,-1.1553)-- (-0.7071,-0.7071)--cycle;
				\draw[blue!30!, fill=\configblue]  (-0.2588,-0.9659)-- (-0.5176,-1.4142)-- (0.5176,-1.4142)-- (0.2588,-0.9659)--cycle;
				\draw[blue!30!, fill=\configblue]  (0.7071,-0.7071)-- (0.9659,-1.1553)-- (1.4835,-0.2588)-- (0.9659,-0.2588)--cycle;
				\draw[blue!30!, fill=\configblue]  (0.9659,0.2588)-- (1.4835,0.2588)-- (0.9659,1.1553)-- (0.7071,0.7071)--cycle;
				\draw[red,\espessuralinha]  (-0.7071,0.7071)-- (-0.9659,0.2588);
				\draw[red, \espessuralinha]  (-0.9659,-0.2588)-- (-0.7071,-0.7071);
				\draw[red,\espessuralinha]  (-0.2588,-0.9659)-- (0.2588,-0.9659);
				\draw[red,\espessuralinha]  (0.7071,-0.7071)-- (0.9659,-0.2588);
				\draw[red,\espessuralinha]  (0.9659,0.2588)-- (0.7071,0.7071);
				\draw[red,\espessuralinha]  (0.2588,0.9659)-- (-0.2588,0.9659);
				\draw[blue,\espessuralinha]  (-0.9659,0.2588)-- (-0.9659,-0.2588);
				\draw[blue,\espessuralinha]  (-0.7071,-0.7071)-- (-0.2588,-0.9659);
				\draw[blue,\espessuralinha]  (0.2588,-0.9659)-- (0.7071,-0.7071);
				\draw[blue,\espessuralinha]  (0.9659,-0.2588)-- (0.9659,0.2588);
				\draw[blue,\espessuralinha]  (0.7071,0.7071)-- (0.2588,0.9659);
				\draw[blue,\espessuralinha]  (-0.2588,0.9659)-- (-0.7071,0.7071);
				\draw[fill=\configred]  (0.9659,-0.2588)-- (1.4835,-0.2588)-- (1.4835,0.2588)-- (0.9659,0.2588)-- (0.9659,-0.2588);
				\draw[green,\espessuralinha]  (0.7071,0.7071)-- (0.9659,1.1553);
				\draw[green,\espessuralinha]  (0.2588,0.9659)-- (0.5176,1.4142);
				\draw[blue,\espessuralinha]  (0.9659,1.1553)-- (0.5176,1.4142);
				\draw[green,\espessuralinha]  (-0.2588,0.9659)-- (-0.5176,1.4142);
				\draw[green,\espessuralinha]  (-0.7071,0.7071)-- (-0.9659,1.1553);
				\draw[blue,\espessuralinha]  (-0.5176,1.4142)-- (-0.9659,1.1553);
				\draw[green,\espessuralinha]  (0.2588,-0.9659)-- (0.5176,-1.4142);
				\draw[green,\espessuralinha]  (-0.2588,-0.9659)-- (-0.5176,-1.4142);
				\draw[blue,\espessuralinha]  (-0.9659,-1.1553)-- (-0.5176,-1.4142);
				\draw[green,\espessuralinha]  (-0.7071,-0.7071)-- (-0.9659,-1.1553);
				\draw[green,\espessuralinha]  (-0.9659,-0.2588)-- (-1.4835,-0.2588);
				\draw[blue,\espessuralinha]  (-1.4835,0.2588)-- (-1.4835,-0.2588);
				\draw[green,\espessuralinha]  (-0.9659,0.2588)-- (-1.4835,0.2588);
				\draw[green,\espessuralinha]  (0.7071,-0.7071)-- (0.9659,-1.1553);
				\draw[blue,\espessuralinha]  (0.5176,-1.4142)-- (0.9659,-1.1553);
				\draw[green,\espessuralinha]  (0.9659,-0.2588)-- (1.4835,-0.2588);
				\draw[green,\espessuralinha]  (0.9659,0.2588)-- (1.4835,0.2588);
				\draw[blue,\espessuralinha]  (1.4835,-0.2588)-- (1.4835,0.2588);
				\draw[\qubitcolor, fill=\qubitfill]  (-0.2588,0.9659) circle (\qubitraio);\draw[\qubitcolor, fill=\qubitfill]  (-0.9659,0.2588) circle (\qubitraio);\draw[\qubitcolor, fill=\qubitfill] (-0.7071,0.7071)circle (\qubitraio);
				\draw[\qubitcolor, fill=\qubitfill]  (-0.9659,-0.2588) circle (\qubitraio);\draw[\qubitcolor, fill=\qubitfill]  (-0.7071,-0.7071) circle (\qubitraio);\draw[\qubitcolor, fill=\qubitfill]  (-0.2588,-0.9659) circle (\qubitraio);\draw[\qubitcolor, fill=\qubitfill]  (0.2588,-0.9659) circle (\qubitraio);\draw[\qubitcolor, fill=\qubitfill]  (0.7071,-0.7071) circle (\qubitraio);\draw[\qubitcolor, fill=\qubitfill]  (0.9659,-0.2588) circle (\qubitraio);\draw[\qubitcolor, fill=\qubitfill]  (0.9659,0.2588) circle (\qubitraio);\draw[\qubitcolor, fill=\qubitfill]  (0.7071,0.7071) circle (\qubitraio);\draw[\qubitcolor, fill=\qubitfill]  (-0.5176,1.4142) circle (\qubitraio);\draw[\qubitcolor, fill=\qubitfill]  (-0.9659,1.1553) circle (\qubitraio);\draw[\qubitcolor, fill=\qubitfill]  (-1.4835,0.2588) circle (\qubitraio);\draw[\qubitcolor, fill=\qubitfill]  (-1.4835,-0.2588) circle (\qubitraio);\draw[\qubitcolor, fill=\qubitfill]  (-0.9659,-1.1553) circle (\qubitraio);\draw[\qubitcolor, fill=\qubitfill]  (-0.5176,-1.4142) circle (\qubitraio);\draw[\qubitcolor, fill=\qubitfill]  (0.5176,-1.4142) circle (\qubitraio);\draw[\qubitcolor, fill=\qubitfill]  (0.9659,-1.1553) circle (\qubitraio);\draw[\qubitcolor, fill=\qubitfill]  (1.4835,-0.2588) circle (\qubitraio);\draw[\qubitcolor, fill=\qubitfill]  (1.4835,0.2588) circle (\qubitraio);\draw[\qubitcolor, fill=\qubitfill]  (0.9659,1.1553) circle (\qubitraio);\draw[\qubitcolor, fill=\qubitfill]  (0.5176,1.4142) circle (\qubitraio);
		\end{scope}}
		\foreach \x in {2.44948974278318 , 4.89897948556636 , 7.34846922834953 , 9.79795897113271 , 12.2474487139159 } {\begin{scope}[xshift=\x cm]  \principal \end{scope} }
		\foreach \x in {1.22474487139159 , 3.67423461417477 , 6.12372435695795 , 8.57321409974112 , 11.0227038425243 , 13.4721935853075 } {\begin{scope}[xshift=\x cm, yshift=2.1213cm]  \principal \end{scope} }
		\foreach \x in {1.22474487139159 , 3.67423461417477 , 6.12372435695795 , 8.57321409974112 , 11.0227038425243 , 13.4721935853075 } {\begin{scope}[xshift=\x cm, yshift=-2.1213cm]  \principal \end{scope} }
		\draw[\qubitcolor, fill=\qubitfill]  (6.3815,-1.1565) circle (\qubitraio); 
		\draw[\qubitcolor, fill=\qubitfill] (3.934,-1.158)  circle (\qubitraio); 
		\draw[draw opacity=0.3,fill opacity=0.5,text opacity=0.2, line width=1.5ex]  plot[smooth, tension=0.1] coordinates {(4.64,-2.19)  (4.64,-1.86) (4.38,-1.413) (3.934,-1.158) (4.191,-0.705) (3.933,-0.26) (3.933,0.26) (4.194,0.711) (4.64,0.965) (4.382,1.41) (4.64,1.859) (4.64,2.18)};
	\end{scope}
	
	\begin{scope}[shift={(-0.446,-4.3)}]              
		\clip  (2.4495,2.1213) rectangle (7.3485,-2.1213);
		\def\principal{ \begin{scope}
				\draw[fill=\configgreen] (0.2588,0.9659)--(-0.2588,0.9659)--(-0.7071,0.7071)--(-0.9659,0.2588)--(-0.9659,-0.2588)--(-0.7071,-0.7071)--(-0.2588,-0.9659)--(0.2588,-0.9659)--(-0.2588,-0.9659)--(0.2588,-0.9659)--(0.7071,-0.7071)--(0.9659,-0.2588)--(0.9659,0.2588)--(0.7071,0.7071)--cycle;
				\draw[fill=\configred]  (0.9659,-0.2588)-- (1.4835,-0.2588)-- (1.4835,0.2588)-- (0.9659,0.2588)-- (0.9659,-0.2588);
				\draw[fill=\configred]  (0.7071,0.7071)-- (0.9659,1.1553)-- (0.5176,1.4142)-- (0.2588,0.9659)-- (0.7071,0.7071);
				\draw[fill=\configred]  (-0.2588,0.9659)-- (-0.5176,1.4142)-- (-0.9659,1.1553)-- (-0.7071,0.7071)-- (-0.2588,0.9659);
				\draw[fill=\configred]  (-0.9659,0.2588)-- (-1.4835,0.2588)-- (-1.4835,-0.2588)-- (-0.9659,-0.2588)-- (-0.9659,0.2588);
				\draw[fill=\configred]  (-0.7071,-0.7071)-- (-0.9659,-1.1553)-- (-0.5176,-1.4142)-- (-0.2588,-0.9659)-- (-0.7071,-0.7071);
				\draw[fill=\configred]  (0.2588,-0.9659)-- (0.5176,-1.4142)-- (0.9659,-1.1553)-- (0.7071,-0.7071)-- (0.2588,-0.9659);
				\draw[fill=\configred]  (0.9659,-0.2588)-- (1.4835,-0.2588)-- (1.4835,0.2588)-- (0.9659,0.2588)-- (0.9659,-0.2588);
				\draw[blue!30!, fill=\configblue]  (0.2588,0.9659)-- (0.5176,1.4142)-- (-0.5176,1.4142)-- (-0.2588,0.9659)--cycle;
				\draw[blue!30!, fill=\configblue]  (-0.7071,0.7071)-- (-0.9659,1.1553)-- (-1.4835,0.2588)-- (-0.9659,0.2588)--cycle;
				\draw[blue!30!, fill=\configblue]  (-0.9659,-0.2588)-- (-1.4835,-0.2588)-- (-0.9659,-1.1553)-- (-0.7071,-0.7071)--cycle;
				\draw[blue!30!, fill=\configblue]  (-0.2588,-0.9659)-- (-0.5176,-1.4142)-- (0.5176,-1.4142)-- (0.2588,-0.9659)--cycle;
				\draw[blue!30!, fill=\configblue]  (0.7071,-0.7071)-- (0.9659,-1.1553)-- (1.4835,-0.2588)-- (0.9659,-0.2588)--cycle;
				\draw[blue!30!, fill=\configblue]  (0.9659,0.2588)-- (1.4835,0.2588)-- (0.9659,1.1553)-- (0.7071,0.7071)--cycle;
				\draw[red,\espessuralinha]  (-0.7071,0.7071)-- (-0.9659,0.2588);
				\draw[red, \espessuralinha]  (-0.9659,-0.2588)-- (-0.7071,-0.7071);
				\draw[red,\espessuralinha]  (-0.2588,-0.9659)-- (0.2588,-0.9659);
				\draw[red,\espessuralinha]  (0.7071,-0.7071)-- (0.9659,-0.2588);
				\draw[red,\espessuralinha]  (0.9659,0.2588)-- (0.7071,0.7071);
				\draw[red,\espessuralinha]  (0.2588,0.9659)-- (-0.2588,0.9659);
				\draw[blue,\espessuralinha]  (-0.9659,0.2588)-- (-0.9659,-0.2588);
				\draw[blue,\espessuralinha]  (-0.7071,-0.7071)-- (-0.2588,-0.9659);
				\draw[blue,\espessuralinha]  (0.2588,-0.9659)-- (0.7071,-0.7071);
				\draw[blue,\espessuralinha]  (0.9659,-0.2588)-- (0.9659,0.2588);
				\draw[blue,\espessuralinha]  (0.7071,0.7071)-- (0.2588,0.9659);
				\draw[blue,\espessuralinha]  (-0.2588,0.9659)-- (-0.7071,0.7071);
				\draw[fill=\configred]  (0.9659,-0.2588)-- (1.4835,-0.2588)-- (1.4835,0.2588)-- (0.9659,0.2588)-- (0.9659,-0.2588);
				\draw[green,\espessuralinha]  (0.7071,0.7071)-- (0.9659,1.1553);
				\draw[green,\espessuralinha]  (0.2588,0.9659)-- (0.5176,1.4142);
				\draw[blue,\espessuralinha]  (0.9659,1.1553)-- (0.5176,1.4142);
				\draw[green,\espessuralinha]  (-0.2588,0.9659)-- (-0.5176,1.4142);
				\draw[green,\espessuralinha]  (-0.7071,0.7071)-- (-0.9659,1.1553);
				\draw[blue,\espessuralinha]  (-0.5176,1.4142)-- (-0.9659,1.1553);
				\draw[green,\espessuralinha]  (0.2588,-0.9659)-- (0.5176,-1.4142);
				\draw[green,\espessuralinha]  (-0.2588,-0.9659)-- (-0.5176,-1.4142);
				\draw[blue,\espessuralinha]  (-0.9659,-1.1553)-- (-0.5176,-1.4142);
				\draw[green,\espessuralinha]  (-0.7071,-0.7071)-- (-0.9659,-1.1553);
				\draw[green,\espessuralinha]  (-0.9659,-0.2588)-- (-1.4835,-0.2588);
				\draw[blue,\espessuralinha]  (-1.4835,0.2588)-- (-1.4835,-0.2588);
				\draw[green,\espessuralinha]  (-0.9659,0.2588)-- (-1.4835,0.2588);
				\draw[green,\espessuralinha]  (0.7071,-0.7071)-- (0.9659,-1.1553);
				\draw[blue,\espessuralinha]  (0.5176,-1.4142)-- (0.9659,-1.1553);
				\draw[green,\espessuralinha]  (0.9659,-0.2588)-- (1.4835,-0.2588);
				\draw[green,\espessuralinha]  (0.9659,0.2588)-- (1.4835,0.2588);
				\draw[blue,\espessuralinha]  (1.4835,-0.2588)-- (1.4835,0.2588);
				\draw[\qubitcolor, fill=\qubitfill]  (-0.2588,0.9659) circle (\qubitraio);\draw[\qubitcolor, fill=\qubitfill]  (-0.9659,0.2588) circle (\qubitraio);\draw[\qubitcolor, fill=\qubitfill] (-0.7071,0.7071)circle (\qubitraio);
				\draw[\qubitcolor, fill=\qubitfill]  (-0.9659,-0.2588) circle (\qubitraio);\draw[\qubitcolor, fill=\qubitfill]  (-0.7071,-0.7071) circle (\qubitraio);\draw[\qubitcolor, fill=\qubitfill]  (-0.2588,-0.9659) circle (\qubitraio);\draw[\qubitcolor, fill=\qubitfill]  (0.2588,-0.9659) circle (\qubitraio);\draw[\qubitcolor, fill=\qubitfill]  (0.7071,-0.7071) circle (\qubitraio);\draw[\qubitcolor, fill=\qubitfill]  (0.9659,-0.2588) circle (\qubitraio);\draw[\qubitcolor, fill=\qubitfill]  (0.9659,0.2588) circle (\qubitraio);\draw[\qubitcolor, fill=\qubitfill]  (0.7071,0.7071) circle (\qubitraio);\draw[\qubitcolor, fill=\qubitfill]  (-0.5176,1.4142) circle (\qubitraio);\draw[\qubitcolor, fill=\qubitfill]  (-0.9659,1.1553) circle (\qubitraio);\draw[\qubitcolor, fill=\qubitfill]  (-1.4835,0.2588) circle (\qubitraio);\draw[\qubitcolor, fill=\qubitfill]  (-1.4835,-0.2588) circle (\qubitraio);\draw[\qubitcolor, fill=\qubitfill]  (-0.9659,-1.1553) circle (\qubitraio);\draw[\qubitcolor, fill=\qubitfill]  (-0.5176,-1.4142) circle (\qubitraio);\draw[\qubitcolor, fill=\qubitfill]  (0.5176,-1.4142) circle (\qubitraio);\draw[\qubitcolor, fill=\qubitfill]  (0.9659,-1.1553) circle (\qubitraio);\draw[\qubitcolor, fill=\qubitfill]  (1.4835,-0.2588) circle (\qubitraio);\draw[\qubitcolor, fill=\qubitfill]  (1.4835,0.2588) circle (\qubitraio);\draw[\qubitcolor, fill=\qubitfill]  (0.9659,1.1553) circle (\qubitraio);\draw[\qubitcolor, fill=\qubitfill]  (0.5176,1.4142) circle (\qubitraio);
		\end{scope}}
		\foreach \x in {2.44948974278318 , 4.89897948556636 , 7.34846922834953 , 9.79795897113271 , 12.2474487139159 } {\begin{scope}[xshift=\x cm]  \principal \end{scope} }
		\foreach \x in {1.22474487139159 , 3.67423461417477 , 6.12372435695795 , 8.57321409974112 , 11.0227038425243 , 13.4721935853075 } {\begin{scope}[xshift=\x cm, yshift=2.1213cm]  \principal \end{scope} }
		\foreach \x in {1.22474487139159 , 3.67423461417477 , 6.12372435695795 , 8.57321409974112 , 11.0227038425243 , 13.4721935853075 } {\begin{scope}[xshift=\x cm, yshift=-2.1213cm]  \principal \end{scope} }
		
		\draw[\qubitcolor, fill=\qubitfill]  (6.3815,-1.1565) circle (\qubitraio); 
		\draw[\qubitcolor, fill=\qubitfill] (3.934,-1.158)  circle (\qubitraio); 
		\draw[draw opacity=0.3,fill opacity=0.5,text opacity=0.2, line width=1.5ex]  plot[smooth, tension=0.1] coordinates {(4.64,-2.19)  (4.64,-1.86) (4.38,-1.413) (3.934,-1.158) (4.191,-0.705) (3.933,-0.26) (3.933,0.26) (4.194,0.711) (4.64,0.965) (4.382,1.41) (4.64,1.859) (4.64,2.18)};
	\end{scope}
	
	\begin{scope}[shift={(4.5066,-4.3)}]              
		\clip  (2.4495,2.1213) rectangle (7.3485,-2.1213) node (v3) {};
		\def\principal{ \begin{scope}
				\draw[fill=\configgreen] (0.2588,0.9659)--(-0.2588,0.9659)--(-0.7071,0.7071)--(-0.9659,0.2588)--(-0.9659,-0.2588)--(-0.7071,-0.7071)--(-0.2588,-0.9659)--(0.2588,-0.9659)--(-0.2588,-0.9659)--(0.2588,-0.9659)--(0.7071,-0.7071)--(0.9659,-0.2588)--(0.9659,0.2588)--(0.7071,0.7071)--cycle;
				\draw[fill=\configred]  (0.9659,-0.2588)-- (1.4835,-0.2588)-- (1.4835,0.2588)-- (0.9659,0.2588)-- (0.9659,-0.2588);
				\draw[fill=\configred]  (0.7071,0.7071)-- (0.9659,1.1553)-- (0.5176,1.4142)-- (0.2588,0.9659)-- (0.7071,0.7071);
				\draw[fill=\configred]  (-0.2588,0.9659)-- (-0.5176,1.4142)-- (-0.9659,1.1553)-- (-0.7071,0.7071)-- (-0.2588,0.9659);
				\draw[fill=\configred]  (-0.9659,0.2588)-- (-1.4835,0.2588)-- (-1.4835,-0.2588)-- (-0.9659,-0.2588)-- (-0.9659,0.2588);
				\draw[fill=\configred]  (-0.7071,-0.7071)-- (-0.9659,-1.1553)-- (-0.5176,-1.4142)-- (-0.2588,-0.9659)-- (-0.7071,-0.7071);
				\draw[fill=\configred]  (0.2588,-0.9659)-- (0.5176,-1.4142)-- (0.9659,-1.1553)-- (0.7071,-0.7071)-- (0.2588,-0.9659);
				\draw[fill=\configred]  (0.9659,-0.2588)-- (1.4835,-0.2588)-- (1.4835,0.2588)-- (0.9659,0.2588)-- (0.9659,-0.2588);
				\draw[blue!30!, fill=\configblue]  (0.2588,0.9659)-- (0.5176,1.4142)-- (-0.5176,1.4142)-- (-0.2588,0.9659)--cycle;
				\draw[blue!30!, fill=\configblue]  (-0.7071,0.7071)-- (-0.9659,1.1553)-- (-1.4835,0.2588)-- (-0.9659,0.2588)--cycle;
				\draw[blue!30!, fill=\configblue]  (-0.9659,-0.2588)-- (-1.4835,-0.2588)-- (-0.9659,-1.1553)-- (-0.7071,-0.7071)--cycle;
				\draw[blue!30!, fill=\configblue]  (-0.2588,-0.9659)-- (-0.5176,-1.4142)-- (0.5176,-1.4142)-- (0.2588,-0.9659)--cycle;
				\draw[blue!30!, fill=\configblue]  (0.7071,-0.7071)-- (0.9659,-1.1553)-- (1.4835,-0.2588)-- (0.9659,-0.2588)--cycle;
				\draw[blue!30!, fill=\configblue]  (0.9659,0.2588)-- (1.4835,0.2588)-- (0.9659,1.1553)-- (0.7071,0.7071)--cycle;
				\draw[red,\espessuralinha]  (-0.7071,0.7071)-- (-0.9659,0.2588);
				\draw[red, \espessuralinha]  (-0.9659,-0.2588)-- (-0.7071,-0.7071);
				\draw[red,\espessuralinha]  (-0.2588,-0.9659)-- (0.2588,-0.9659);
				\draw[red,\espessuralinha]  (0.7071,-0.7071)-- (0.9659,-0.2588);
				\draw[red,\espessuralinha]  (0.9659,0.2588)-- (0.7071,0.7071);
				\draw[red,\espessuralinha]  (0.2588,0.9659)-- (-0.2588,0.9659);
				\draw[blue,\espessuralinha]  (-0.9659,0.2588)-- (-0.9659,-0.2588);
				\draw[blue,\espessuralinha]  (-0.7071,-0.7071)-- (-0.2588,-0.9659);
				\draw[blue,\espessuralinha]  (0.2588,-0.9659)-- (0.7071,-0.7071);
				\draw[blue,\espessuralinha]  (0.9659,-0.2588)-- (0.9659,0.2588);
				\draw[blue,\espessuralinha]  (0.7071,0.7071)-- (0.2588,0.9659);
				\draw[blue,\espessuralinha]  (-0.2588,0.9659)-- (-0.7071,0.7071);
				\draw[fill=\configred]  (0.9659,-0.2588)-- (1.4835,-0.2588)-- (1.4835,0.2588)-- (0.9659,0.2588)-- (0.9659,-0.2588);
				\draw[green,\espessuralinha]  (0.7071,0.7071)-- (0.9659,1.1553);
				\draw[green,\espessuralinha]  (0.2588,0.9659)-- (0.5176,1.4142);
				\draw[blue,\espessuralinha]  (0.9659,1.1553)-- (0.5176,1.4142);
				\draw[green,\espessuralinha]  (-0.2588,0.9659)-- (-0.5176,1.4142);
				\draw[green,\espessuralinha]  (-0.7071,0.7071)-- (-0.9659,1.1553);
				\draw[blue,\espessuralinha]  (-0.5176,1.4142)-- (-0.9659,1.1553);
				\draw[green,\espessuralinha]  (0.2588,-0.9659)-- (0.5176,-1.4142);
				\draw[green,\espessuralinha]  (-0.2588,-0.9659)-- (-0.5176,-1.4142);
				\draw[blue,\espessuralinha]  (-0.9659,-1.1553)-- (-0.5176,-1.4142);
				\draw[green,\espessuralinha]  (-0.7071,-0.7071)-- (-0.9659,-1.1553);
				\draw[green,\espessuralinha]  (-0.9659,-0.2588)-- (-1.4835,-0.2588);
				\draw[blue,\espessuralinha]  (-1.4835,0.2588)-- (-1.4835,-0.2588);
				\draw[green,\espessuralinha]  (-0.9659,0.2588)-- (-1.4835,0.2588);
				\draw[green,\espessuralinha]  (0.7071,-0.7071)-- (0.9659,-1.1553);
				\draw[blue,\espessuralinha]  (0.5176,-1.4142)-- (0.9659,-1.1553);
				\draw[green,\espessuralinha]  (0.9659,-0.2588)-- (1.4835,-0.2588);
				\draw[green,\espessuralinha]  (0.9659,0.2588)-- (1.4835,0.2588);
				\draw[blue,\espessuralinha]  (1.4835,-0.2588)-- (1.4835,0.2588);
				\draw[\qubitcolor, fill=\qubitfill]  (-0.2588,0.9659) circle (\qubitraio);\draw[\qubitcolor, fill=\qubitfill]  (-0.9659,0.2588) circle (\qubitraio);\draw[\qubitcolor, fill=\qubitfill] (-0.7071,0.7071)circle (\qubitraio);
				\draw[\qubitcolor, fill=\qubitfill]  (-0.9659,-0.2588) circle (\qubitraio);\draw[\qubitcolor, fill=\qubitfill]  (-0.7071,-0.7071) circle (\qubitraio);\draw[\qubitcolor, fill=\qubitfill]  (-0.2588,-0.9659) circle (\qubitraio);\draw[\qubitcolor, fill=\qubitfill]  (0.2588,-0.9659) circle (\qubitraio);\draw[\qubitcolor, fill=\qubitfill]  (0.7071,-0.7071) circle (\qubitraio);\draw[\qubitcolor, fill=\qubitfill]  (0.9659,-0.2588) circle (\qubitraio);\draw[\qubitcolor, fill=\qubitfill]  (0.9659,0.2588) circle (\qubitraio);\draw[\qubitcolor, fill=\qubitfill]  (0.7071,0.7071) circle (\qubitraio);\draw[\qubitcolor, fill=\qubitfill]  (-0.5176,1.4142) circle (\qubitraio);\draw[\qubitcolor, fill=\qubitfill]  (-0.9659,1.1553) circle (\qubitraio);\draw[\qubitcolor, fill=\qubitfill]  (-1.4835,0.2588) circle (\qubitraio);\draw[\qubitcolor, fill=\qubitfill]  (-1.4835,-0.2588) circle (\qubitraio);\draw[\qubitcolor, fill=\qubitfill]  (-0.9659,-1.1553) circle (\qubitraio);\draw[\qubitcolor, fill=\qubitfill]  (-0.5176,-1.4142) circle (\qubitraio);\draw[\qubitcolor, fill=\qubitfill]  (0.5176,-1.4142) circle (\qubitraio);\draw[\qubitcolor, fill=\qubitfill]  (0.9659,-1.1553) circle (\qubitraio);\draw[\qubitcolor, fill=\qubitfill]  (1.4835,-0.2588) circle (\qubitraio);\draw[\qubitcolor, fill=\qubitfill]  (1.4835,0.2588) circle (\qubitraio);\draw[\qubitcolor, fill=\qubitfill]  (0.9659,1.1553) circle (\qubitraio);\draw[\qubitcolor, fill=\qubitfill]  (0.5176,1.4142) circle (\qubitraio);
		\end{scope}}
		\foreach \x in {2.44948974278318 , 4.89897948556636 , 7.34846922834953 , 9.79795897113271 , 12.2474487139159 } {\begin{scope}[xshift=\x cm]  \principal \end{scope} }
		\foreach \x in {1.22474487139159 , 3.67423461417477 , 6.12372435695795 , 8.57321409974112 , 11.0227038425243 , 13.4721935853075 } {\begin{scope}[xshift=\x cm, yshift=2.1213cm]  \principal \end{scope} }
		\foreach \x in {1.22474487139159 , 3.67423461417477 , 6.12372435695795 , 8.57321409974112 , 11.0227038425243 , 13.4721935853075 } {\begin{scope}[xshift=\x cm, yshift=-2.1213cm]  \principal \end{scope} }
		
		\draw[\qubitcolor, fill=\qubitfill]  (6.3815,-1.1565) circle (\qubitraio); 
		\draw[\qubitcolor, fill=\qubitfill] (3.934,-1.158)  circle (\qubitraio); 
		\draw[draw opacity=0.3,fill opacity=0.5,text opacity=0.2, line width=1.5ex]  plot[smooth, tension=0.1] coordinates {(4.64,-2.19)  (4.64,-1.86) (4.38,-1.413) (3.934,-1.158) (4.191,-0.705) (3.933,-0.26) (3.933,0.26) (4.194,0.711) (4.64,0.965) (4.382,1.41) (4.64,1.859) (4.64,2.18)};
		
	\end{scope}
	
	\def\distlabel{0}
	
	\begin{scope}[shift={(-5.4279,0)}]
		\node[left=\distlabel cm] at (4.64,1.859) {$Y$};
		\node[left=\distlabel cm] at (4.3815,1.4105) {$Y$};
		\node[left=\distlabel cm] at (4.64,0.965) {$$};
		\node[left=\distlabel cm] at (4.194,0.711) {$Y$};
		\node[left=\distlabel cm] at (3.933,0.26) {$Y$};
		\node[left=\distlabel cm] at (3.933,-0.26) {$Y$};
		\node[left=\distlabel cm] at (4.191,-0.705) {$Y$};
		\node[left=\distlabel cm] at (3.934,-1.158) {$$};
		\node[left=\distlabel cm] at (4.38,-1.413) {$Y$};
		\node[left=\distlabel cm] at (4.64,-1.86) {$Y$};
	\end{scope}
	
	\begin{scope}[shift={(-0.4221,0)}]
		\node[left=\distlabel cm] at (4.64,1.859) {$Y$};
		\node[left=\distlabel cm] at (4.3815,1.4105) {$$};
		\node[left=\distlabel cm] at (4.64,0.965) {$Y$};
		\node[left=\distlabel cm] at (4.194,0.711) {$Y$};
		\node[left=\distlabel cm] at (3.933,0.26) {$Y$};
		\node[left=\distlabel cm] at (3.933,-0.26) {$Y$};
		\node[left=\distlabel cm] at (4.191,-0.705) {$$};
		\node[left=\distlabel cm] at (3.934,-1.158) {$Y$};
		\node[left=\distlabel cm] at (4.38,-1.413) {$Y$};
		\node[left=\distlabel cm] at (4.64,-1.86) {$Y$};
	\end{scope}
	
	\begin{scope}[shift={(4.4955,0)}]
		\node[left=\distlabel cm] at (4.64,1.859) {$X$};
		\node[left=\distlabel cm] at (4.3815,1.4105) {$$};
		\node[left=\distlabel cm] at (4.64,0.965) {$X$};
		\node[left=\distlabel cm] at (4.194,0.711) {$X$};
		\node[left=\distlabel cm] at (3.933,0.26) {$X$};
		\node[left=\distlabel cm] at (3.933,-0.26) {$X$};
		\node[left=\distlabel cm] at (4.191,-0.705) {$$};
		\node[left=\distlabel cm] at (3.934,-1.158) {$X$};
		\node[left=\distlabel cm] at (4.38,-1.413) {$X$};
		\node[left=\distlabel cm] at (4.64,-1.86) {$X$};
	\end{scope}
	
	\begin{scope}[shift={(4.4955,-4.2957)}]
		\node[left=\distlabel cm] at (4.64,1.859) {$$};
		\node[left=\distlabel cm] at (4.3815,1.4105) {$X$};
		\node[left=\distlabel cm] at (4.64,0.965) {$X$};
		\node[left=\distlabel cm] at (4.194,0.711) {$$};
		\node[left=\distlabel cm] at (3.933,0.26) {$$};
		\node[left=\distlabel cm] at (3.933,-0.26) {$$};
		\node[left=\distlabel cm] at (4.191,-0.705) {$X$};
		\node[left=\distlabel cm] at (3.934,-1.158) {$X$};
		\node[left=\distlabel cm] at (4.38,-1.413) {$$};
		\node[left=\distlabel cm] at (4.64,-1.86) {$$};
	\end{scope}
	
	\begin{scope}[shift={(-0.4662,-4.2957)}]
		\node[left=\distlabel cm] at (4.64,1.859) {$$};
		\node[left=\distlabel cm] at (4.3815,1.4105) {$Z$};
		\node[left=\distlabel cm] at (4.64,0.965) {$Z$};
		\node[left=\distlabel cm] at (4.194,0.711) {$$};
		\node[left=\distlabel cm] at (3.933,0.26) {$$};
		\node[left=\distlabel cm] at (3.933,-0.26) {$$};
		\node[left=\distlabel cm] at (4.191,-0.705) {$Z$};
		\node[left=\distlabel cm] at (3.934,-1.158) {$Z$};
		\node[left=\distlabel cm] at (4.38,-1.413) {$$};
		\node[left=\distlabel cm] at (4.64,-1.86) {$$};
	\end{scope}
	
	\begin{scope}[shift={(-5.3946,-4.2957)}]
		\node[left=\distlabel cm] at (4.64,1.859) {$Z$};
		\node[left=\distlabel cm] at (4.3815,1.4105) {$Z$};
		\node[left=\distlabel cm] at (4.64,0.965) {$$};
		\node[left=\distlabel cm] at (4.194,0.711) {$Z$};
		\node[left=\distlabel cm] at (3.933,0.26) {$Z$};
		\node[left=\distlabel cm] at (3.933,-0.26) {$Z$};
		\node[left=\distlabel cm] at (4.191,-0.705) {$Z$};
		\node[left=\distlabel cm] at (3.934,-1.158) {$$};
		\node[left=\distlabel cm] at (4.38,-1.413) {$Z$};
		\node[left=\distlabel cm] at (4.64,-1.86) {$Z$};
	\end{scope}
	
	\node[align=center] at (2.0003,2.6536) { \tiny \color{green} {YY measurement}   \\  \color{green} { Round 6j+1}  };
	\node[align=center] at (6.8954,2.6536) {\tiny  \color{blue} {ZZ measurement} \\  \color{blue} {Round 6j+2 } };
	\node[align=center,rotate=270] at (12.4175,-2.1332) {\color{red} {XX measurement} \\ \color{red}{Round 6j+3} };
	\node[align=center] at (6.9334,-6.9469) { \tiny \color{blue}{YY measurement} \\ \color{blue}{ Round 6j+4}  };
	\node[align=center] at (1.9334,-6.9469) {  \tiny\color{green} {YY measurement}  \\ \color{green}{Round 6j+5}  };
	\node[rotate=90, align=center] at (-3.5073,-2.1) {\tiny\color{red} XX measurement \\ \color{red} {Round 6j+6} };
	
	\draw[-latex] (5.8285,2.195) -- (7.8285,2.195);
	\draw[-latex] (1,2.195) -- (3,2.195);
	\draw[-latex] (11.9847,-1.1665)--(11.9847,-3.1665);
	\draw[-latex] (-3.0438,-3.1665)--(-3.0438,-1.1665);
	\draw[-latex] (7.9995,-6.4942) -- (5.9995,-6.4942);
	\draw[-latex] (3,-6.4942) -- (1,-6.4942);
	\end{tikzpicture}
\caption{The check measurements in the rounds.}
\end{figure}
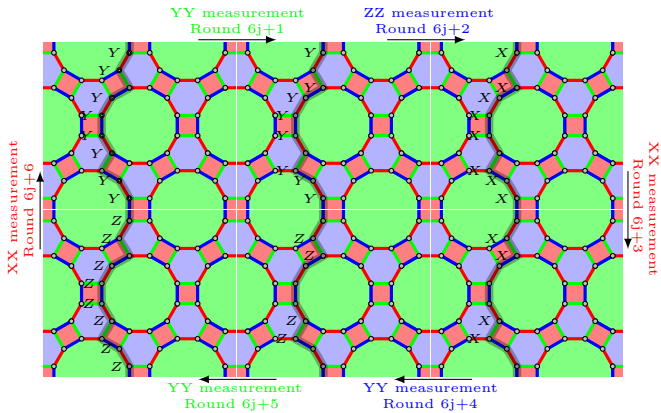

Each stabilizer generator of the Floquet code consists of a cycle of checks surrounding a face: stabilizer generators on green, blue and red faces are of X type, Y type, and Z type, respectively. However, the full stabilizer group varies at each stage. More precisely, in a given instant,  we have all cycles  and also additional stabilizers corresponding to the check operators measured in the most-recent subround. The logical operators are the operators that commute with all elements of ISG but are not in ISG. As usual, the logical operators of the new codes proposed here are contained within homollogically nontrivial closed paths of the corresponding hyperbolic tessellation utilized to construct the code.

\section{Constructions of new Floquet Codes}\label{NewFloquet}

In this section, we utilize the techniques discussed in Section \ref{sec:semireg} in order to generate trivalent and $3$-colorable tessellations in surfaces of genus $g\geq2$ and, based on this, we compute the parameters of the hyperbolic Floquet codes derived from these tessellations.

We start by presenting some examples in the case $g=2$ to illustrate the technique employed, performing in the sequence the computation of the parameters of the codes. After this procedure, we can extend the reasoning to surfaces of all genus.

Consider the $\{8,8,8\}$ hyperbolic regular tessellation which is trivalent and $3$-colorable. Observe that we can tilling the $2$-torus utilizing this tessellation, as already showed in  ~\cite{waldir2018},~\cite{albuquerque2009},~\cite{Breuckmann2016}. In this manner we obtain a tessellation with parameters $n_f =6$, $n_e=24$ and $n_v = 16$. 

The length  $n$ of the Floquet code is determined by the number $n_v$ of the corresponding tessellation, i.e., $n=n_v$. The number $k$ of the encoded qubits as occurs in the surface codes, is determined by the genus $g$ of the surface and considering the fact of orientability or not. More precisely, for a orientable surface it follows that $k=2g$ (which is the case of the $2$-torus), whereas for a non-orientable ones we have $k=g$.

\subsection{Distance of Floquet codes}

We now explain how to compute the minimum distance of the new Floquet codes. 
Although it is not necessary to utilize explicitly the concept of 
restricted tessellation \cite{Breuckmann2024}, we implicitly make use of the logical operators 
${\overline{X}}_{i}=Y_{u}Y_{v}$ (or $Z_{u}Z_{v}$) and ${\overline{Z}}_{i}=X_{u}I_{v}$ 
(or $I_{u}X_{v}$) corresponding to an edge $e=(u, v)$.

The first step is to compute the (hyperbolic) length of a geodesic $g$ of the color code 
tiling corresponding to a nontrivial homological cycle of the  Floquet code, i.e., 
the logical paths that we consider in this paper are geodesics.

We start by considering logical paths of type ${\overline{X}}_{i}$. We also consider 
the red edges contained in a given logical path of this type. We know that each 
red edge connects two corresponding red faces. Observe that some edges of the logical path belong to some edges of these red faces; see Fig. \ref{fig:logical-operators-1}.

\begin{figure}[H]
	\centering
	\label{fig:logical-operators-1}
	\begin{tikzpicture}[scale=1.5]
		\def\qubitcolor{black}
		\def\qubitfill{gray!50!}
		\def\configred{red!50!}
		\def\configblue{blue!30!}
		\def\configgreen{green!50!}
		\def\espessuralinha{very thick}
		\def\qubitraio{0.06}
		\def\qubitsuporteraio{0.03}
		\def\qubitsuportecolor{black}
		\def\qubitsuportefill{gray!30!}
		
		\clip  (2.4494897427831779,2.1213203435596424) rectangle (7.3484692283495336,-2.1213203435596424);
		\def\principal{ \begin{scope}
				\draw[fill=\configgreen] (0.2588,0.9659)--(-0.2588,0.9659)--(-0.7071,0.7071)--(-0.9659,0.2588)--(-0.9659,-0.2588)--(-0.7071,-0.7071)--(-0.2588,-0.9659)--(0.2588,-0.9659)--(-0.2588,-0.9659)--(0.2588,-0.9659)--(0.7071,-0.7071)--(0.9659,-0.2588)--(0.9659,0.2588)--(0.7071,0.7071)--cycle;
				\draw[fill=\configred]  (0.9659,-0.2588)-- (1.4835,-0.2588)-- (1.4835,0.2588)-- (0.9659,0.2588)-- (0.9659,-0.2588);
				\draw[fill=\configred]  (0.7071,0.7071)-- (0.9659,1.1553)-- (0.5176,1.4142)-- (0.2588,0.9659)-- (0.7071,0.7071);
				\draw[fill=\configred]  (-0.2588,0.9659)-- (-0.5176,1.4142)-- (-0.9659,1.1553)-- (-0.7071,0.7071)-- (-0.2588,0.9659);
				\draw[fill=\configred]  (-0.9659,0.2588)-- (-1.4835,0.2588)-- (-1.4835,-0.2588)-- (-0.9659,-0.2588)-- (-0.9659,0.2588);
				\draw[fill=\configred]  (-0.7071,-0.7071)-- (-0.9659,-1.1553)-- (-0.5176,-1.4142)-- (-0.2588,-0.9659)-- (-0.7071,-0.7071);
				\draw[fill=\configred]  (0.2588,-0.9659)-- (0.5176,-1.4142)-- (0.9659,-1.1553)-- (0.7071,-0.7071)-- (0.2588,-0.9659);
				\draw[fill=\configred]  (0.9659,-0.2588)-- (1.4835,-0.2588)-- (1.4835,0.2588)-- (0.9659,0.2588)-- (0.9659,-0.2588);
				\draw[blue!30!, fill=\configblue]  (0.2588,0.9659)-- (0.5176,1.4142)-- (-0.5176,1.4142)-- (-0.2588,0.9659)--cycle;
				\draw[blue!30!, fill=\configblue]  (-0.7071,0.7071)-- (-0.9659,1.1553)-- (-1.4835,0.2588)-- (-0.9659,0.2588)--cycle;
				\draw[blue!30!, fill=\configblue]  (-0.9659,-0.2588)-- (-1.4835,-0.2588)-- (-0.9659,-1.1553)-- (-0.7071,-0.7071)--cycle;
				\draw[blue!30!, fill=\configblue]  (-0.2588,-0.9659)-- (-0.5176,-1.4142)-- (0.5176,-1.4142)-- (0.2588,-0.9659)--cycle;
				\draw[blue!30!, fill=\configblue]  (0.7071,-0.7071)-- (0.9659,-1.1553)-- (1.4835,-0.2588)-- (0.9659,-0.2588)--cycle;
				\draw[blue!30!, fill=\configblue]  (0.9659,0.2588)-- (1.4835,0.2588)-- (0.9659,1.1553)-- (0.7071,0.7071)--cycle;
				\draw[red,\espessuralinha]  (-0.7071,0.7071)-- (-0.9659,0.2588);
				\draw[red, \espessuralinha]  (-0.9659,-0.2588)-- (-0.7071,-0.7071);
				\draw[red,\espessuralinha]  (-0.2588,-0.9659)-- (0.2588,-0.9659);
				\draw[red,\espessuralinha]  (0.7071,-0.7071)-- (0.9659,-0.2588);
				\draw[red,\espessuralinha]  (0.9659,0.2588)-- (0.7071,0.7071);
				\draw[red,\espessuralinha]  (0.2588,0.9659)-- (-0.2588,0.9659);
				\draw[blue,\espessuralinha]  (-0.9659,0.2588)-- (-0.9659,-0.2588);
				\draw[blue,\espessuralinha]  (-0.7071,-0.7071)-- (-0.2588,-0.9659);
				\draw[blue,\espessuralinha]  (0.2588,-0.9659)-- (0.7071,-0.7071);
				\draw[blue,\espessuralinha]  (0.9659,-0.2588)-- (0.9659,0.2588);
				\draw[blue,\espessuralinha]  (0.7071,0.7071)-- (0.2588,0.9659);
				\draw[blue,\espessuralinha]  (-0.2588,0.9659)-- (-0.7071,0.7071);
				\draw[fill=\configred]  (0.9659,-0.2588)-- (1.4835,-0.2588)-- (1.4835,0.2588)-- (0.9659,0.2588)-- (0.9659,-0.2588);
				\draw[green,\espessuralinha]  (0.7071,0.7071)-- (0.9659,1.1553);
				\draw[green,\espessuralinha]  (0.2588,0.9659)-- (0.5176,1.4142);
				\draw[blue,\espessuralinha]  (0.9659,1.1553)-- (0.5176,1.4142);
				\draw[green,\espessuralinha]  (-0.2588,0.9659)-- (-0.5176,1.4142);
				\draw[green,\espessuralinha]  (-0.7071,0.7071)-- (-0.9659,1.1553);
				\draw[blue,\espessuralinha]  (-0.5176,1.4142)-- (-0.9659,1.1553);
				\draw[green,\espessuralinha]  (0.2588,-0.9659)-- (0.5176,-1.4142);
				\draw[green,\espessuralinha]  (-0.2588,-0.9659)-- (-0.5176,-1.4142);
				\draw[blue,\espessuralinha]  (-0.9659,-1.1553)-- (-0.5176,-1.4142);
				\draw[green,\espessuralinha]  (-0.7071,-0.7071)-- (-0.9659,-1.1553);
				\draw[green,\espessuralinha]  (-0.9659,-0.2588)-- (-1.4835,-0.2588);
				\draw[blue,\espessuralinha]  (-1.4835,0.2588)-- (-1.4835,-0.2588);
				\draw[green,\espessuralinha]  (-0.9659,0.2588)-- (-1.4835,0.2588);
				\draw[green,\espessuralinha]  (0.7071,-0.7071)-- (0.9659,-1.1553);
				\draw[blue,\espessuralinha]  (0.5176,-1.4142)-- (0.9659,-1.1553);
				\draw[green,\espessuralinha]  (0.9659,-0.2588)-- (1.4835,-0.2588);
				\draw[green,\espessuralinha]  (0.9659,0.2588)-- (1.4835,0.2588);
				\draw[blue,\espessuralinha]  (1.4835,-0.2588)-- (1.4835,0.2588);
				\draw[\qubitcolor, fill=\qubitfill]  (-0.2588,0.9659) circle (\qubitraio);\draw[\qubitcolor, fill=\qubitfill]  (-0.9659,0.2588) circle (\qubitraio);\draw[\qubitcolor, fill=\qubitfill] (-0.7071,0.7071)circle (\qubitraio);
				\draw[\qubitcolor, fill=\qubitfill]  (-0.9659,-0.2588) circle (\qubitraio);\draw[\qubitcolor, fill=\qubitfill]  (-0.7071,-0.7071) circle (\qubitraio);\draw[\qubitcolor, fill=\qubitfill]  (-0.2588,-0.9659) circle (\qubitraio);\draw[\qubitcolor, fill=\qubitfill]  (0.2588,-0.9659) circle (\qubitraio);\draw[\qubitcolor, fill=\qubitfill]  (0.7071,-0.7071) circle (\qubitraio);\draw[\qubitcolor, fill=\qubitfill]  (0.9659,-0.2588) circle (\qubitraio);\draw[\qubitcolor, fill=\qubitfill]  (0.9659,0.2588) circle (\qubitraio);\draw[\qubitcolor, fill=\qubitfill]  (0.7071,0.7071) circle (\qubitraio);\draw[\qubitcolor, fill=\qubitfill]  (-0.5176,1.4142) circle (\qubitraio);\draw[\qubitcolor, fill=\qubitfill]  (-0.9659,1.1553) circle (\qubitraio);\draw[\qubitcolor, fill=\qubitfill]  (-1.4835,0.2588) circle (\qubitraio);\draw[\qubitcolor, fill=\qubitfill]  (-1.4835,-0.2588) circle (\qubitraio);\draw[\qubitcolor, fill=\qubitfill]  (-0.9659,-1.1553) circle (\qubitraio);\draw[\qubitcolor, fill=\qubitfill]  (-0.5176,-1.4142) circle (\qubitraio);\draw[\qubitcolor, fill=\qubitfill]  (0.5176,-1.4142) circle (\qubitraio);\draw[\qubitcolor, fill=\qubitfill]  (0.9659,-1.1553) circle (\qubitraio);\draw[\qubitcolor, fill=\qubitfill]  (1.4835,-0.2588) circle (\qubitraio);\draw[\qubitcolor, fill=\qubitfill]  (1.4835,0.2588) circle (\qubitraio);\draw[\qubitcolor, fill=\qubitfill]  (0.9659,1.1553) circle (\qubitraio);\draw[\qubitcolor, fill=\qubitfill]  (0.5176,1.4142) circle (\qubitraio);
		\end{scope}}
		\foreach \x in {2.44948974278318 , 4.89897948556636 , 7.34846922834953 , 9.79795897113271 , 12.2474487139159 } {\begin{scope}[xshift=\x cm]  \principal \end{scope} }
		\foreach \x in {1.22474487139159 , 3.67423461417477 , 6.12372435695795 , 8.57321409974112 , 11.0227038425243 , 13.4721935853075 } {\begin{scope}[xshift=\x cm, yshift=2.12132034355964 cm]  \principal \end{scope} }
		\foreach \x in {1.22474487139159 , 3.67423461417477 , 6.12372435695795 , 8.57321409974112 , 11.0227038425243 , 13.4721935853075 } {\begin{scope}[xshift=\x cm, yshift=-2.12132034355964 cm]  \principal \end{scope} }
		
		\draw[\qubitcolor, fill=\qubitfill]  (6.3815,-1.1565) circle (\qubitraio); 
		\draw[\qubitcolor, fill=\qubitfill] (3.934,-1.158)  circle (\qubitraio); 
		
		\draw[draw opacity=0.3,fill opacity=0.5,text opacity=0.2, line width=1.5ex]  plot[smooth, tension=.1] coordinates {(4.64,-2.19)  (4.64,-1.86) (4.38,-1.413) (3.934,-1.158) (4.191,-0.705) (3.933,-0.26) (3.933,0.26) (4.194,0.711) (4.64,0.965) (4.382,1.41) (4.64,1.859) (4.64,2.18)};
		\draw[draw opacity=0.3,fill opacity=0.5,text opacity=0.2, line width=1.5ex]  plot[smooth, tension=.1] coordinates {(7.09,2.165) (7.09,1.8595) (6.8305,1.4135) (6.383,1.154) (6.642,0.708) (6.384,0.261) (6.381,-0.26) (6.6425,-0.705) (6.3815,-1.1565) (6.8335,-1.414) (7.09,-1.8605) (7.09,-2.3)};
		
		\draw[\qubitsuportecolor, fill=\qubitsuportefill]  (4.64,1.859) circle (\qubitsuporteraio);
		\draw[\qubitsuportecolor, fill=\qubitsuportefill]  (4.3815,1.4105) circle (\qubitsuporteraio);
		\draw[\qubitsuportecolor, fill=\qubitsuportefill]  (4.194,0.711) circle (\qubitsuporteraio);
		\draw[\qubitsuportecolor, fill=\qubitsuportefill]  (3.933,0.26) circle (\qubitsuporteraio);
		\draw[\qubitsuportecolor, fill=\qubitsuportefill]  (3.933,-0.26) circle (\qubitsuporteraio);
		\draw[\qubitsuportecolor, fill=\qubitsuportefill]  (4.191,-0.705) circle (\qubitsuporteraio);
		\draw[\qubitsuportecolor, fill=\qubitsuportefill]  (4.38,-1.413) circle (\qubitsuporteraio);
		\draw[\qubitsuportecolor, fill=\qubitsuportefill]  (4.64,-1.86) circle (\qubitsuporteraio);
		\draw[\qubitsuportecolor, fill=\qubitsuportefill]  (6.383,1.154) circle (\qubitsuporteraio);
		\draw[\qubitsuportecolor, fill=\qubitsuportefill]  (6.3815,-1.1565) circle (\qubitsuporteraio);
		
		\node[left=-0.1cm ]  at (4.64,1.859){$Y$};
		\node[left=-0.1cm ]  at (4.3815,1.4105){$Y$};
		\node[left=-0.1cm ]  at (4.194,0.711){$Y$};
		\node[left=-0.1cm ]  at (3.933,0.26){$Y$};
		\node[left=-0.1cm ]  at (3.933,-0.26){$Y$};
		\node[left=-0.1cm ]  at (4.191,-0.705){$Y$};
		\node[left=-0.1cm ]  at (4.38,-1.413){$Y$};
		\node[left=-0.1cm ]  at (4.64,-1.86){$Y$};
		\node[left=-0.1cm ]  at (6.383,1.154){$X$};
		\node[left=-0.1cm ]  at (6.3815,-1.1565){$X$};
		
		\node[left=-0.1cm ]  at (6.642,0.708) {$X$};
		\node[left=-0.1cm ]  at   (6.381,-0.26) {$X$};
		
	\end{tikzpicture}
	\caption{The logical operators of the two logical qubits of the
		Floquet code.}
\end{figure}
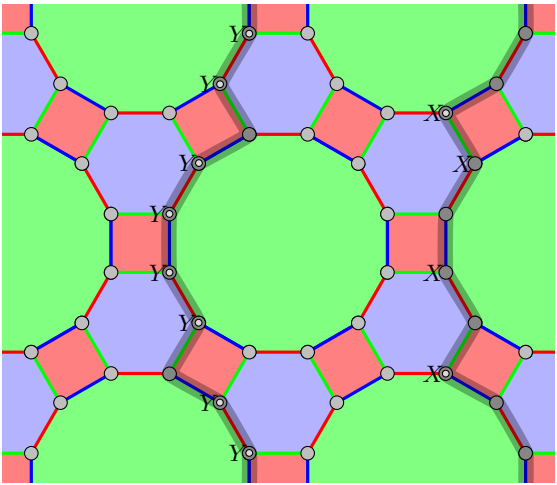

The second step is to calculate the (hyperbolic) length of the geodesic segment between the incenters of these red faces. Note that there is a bijection between each red edge $e_r$ with the corresponding geodesic segment $t_{r}$ connecting the incenters of these two red faces connected by $e_r$, i.e., the number of red edges is equal to the number of geodesic segments connecting the incenters of these two corresponding red faces. 

The third step is to divide the length of the logical path $g$ by the length of the segment $t_{r}$, after taking the ceiling function of this value, 
which results approximately in an estimate of the number of red edges contained in the logical path. Therefore, multiplying by 2, we have a lower bound for the weight of the logical operator contained in this logical path. The same procedure is adopted for all operators ${\overline{X}}_{i}$. We then take the minimum of all these weights.

In case of operators of the form ${\overline{Z}}_{i}$, we proceed in a similar manner by considering the blue and green faces as follows. Once the geodesic logical path is given, we consider only red edges which have a unique vertex belonging to the logical path. We then compute the geodesic length $t_{g, b}$ connecting the incenters of the closest blue and green faces that are adjacent to the logical path and cross such red edges. Analogously, the number of red edges that intersect the logical path in only one vertex and the number of geodesic segments $t_{g, b}$ connecting the incenters of the closest blue and green faces that cross these types of red edges are the same; see Fig. (fazer uma figura parecida com a Figura 6 do Breu sem considerar a tesselação restrita e sem considerar os inner logical). 

In this case, the weight of the logical operator is lower bounded by the ceiling function of the quotient $\displaystyle \lceil length(g)/length (t_{g, b})\rceil$.

In the same manner, we take the minimum of the weights of all operators of the form ${\overline{Z}}_{i}$. 

Finally, we take the minimum among the weights of all operators ${\overline{X}}_{i}$ and ${\overline{Z}}_{i}$, which is the code distance.

Over the $[m_1, m_2, m_3]$ semi-regular tessellation we can see that, if $A$ and $C$ are the incenters of $m_i$-gons, respectively, whose vertex is common to an edge of a $m_{i+1}$-gon whose incenter is $B$, then the hyperbolic distance between $A$ and $C$ is equal to $arccosh\left[ cosh^2(A_i)-sinh^2(A_i)cos(\dfrac{4\pi}{m_{i+1}}) \right] $, while the hyperbolic distance between $A$ and $B$ equals $A_i$, as in Proposition \ref{propo:ao-teo-kulkarni-r-valente-medidas}.

\subsection{Floquet codes over orientable surfaces}\label{subOFloquet}

Turning back to our example, the parameters of the Floquet code generated by the $[8,8,8]$ tessellation over the $2$-torus, generated by the $\{8,8\}$ polygon, identifying the opposite edges \cite{katok} are given by: $n=n_v=16$, $2=2g=4$ e $d=2$; hence, one has an $[[16,4,2]]$ code.

The parameters of such a code coincide with the ones of the code constructed in \cite{Breuckmann2024}. However, in our case, we use the techniques of derived tessellations mentioned previously to obtain semi-regular tessellations and, therefore, to construct new Floquet codes with parameters distinct from the ones displayed in the literature.

As discussed previously, if we utilize the derivation by incenter in a given $\{p,q\}$ tessellation, such a new tessellation will be of the type $\{2p,2q,4\}$, and $N_f = n_f+n_v+n_e$, $N_e = 3p n_f$ e $N_v = 2p n_f$, where $N_f$, $N_e$ and $N_v$ are the number of faces, edges and vertices of the derived tessellation, respectively. Therefore, applying such a technique to the $[8,8,8]$-tessellation (which can be denoted by the $\{8,3\}$) we obtain the semi-regular $[16,6,4]$-tessellation that generates an $[[96,4,4]]$ Floquet code.

In some cases, we can start from a tessellation which it is not trivalent nor $3$-colorable, but the derivation tessellation satisfies these two conditions. As an example, if we start with the $\{8,4\}$ tessellation we can apply the technique of Clipping, which generates an $[2p,2p,q]$-tessellation, where $N_f = n_f+n_v$, $N_e = n_e+q n_f$ and $N_v = 2 n_e$, that is, it will be generated an $\{16, 16, 4\}$ tessellation giving rise to an $[[16,4,2]]$ Floquet code. Hence, we can apply 
these two techniques to an infinity number of regular tessellations which cover some
$g$-torus, obtaininig therefore new families of Floquet codes.

In Table~\ref{tab:tess2}, we show some examples of codes generated by applying these techniques, 
separating by surface genus, and using only regular tessellations.

\begin{table}[b]
	\caption{\label{tab:tess2}
		Floquet codes generated by regular tessellations on orientable surfaces of genus $2$ to $5$.}
	\begin{ruledtabular}
		\begin{tabular}{|c|c|c|c|c|}
			$g$ & Tessellation & $[[n,k,d]]$ & $k/n$ & $kd^{2}/n$ \\
			\hline
			2 & $\{8,3\}$ & [[16,	4,	3]] & 	0.25 &	2.25 \\
			2 & $\{10,3\}$    & $[[10, 4, 2]]$  & 0.4	& 1.6	 \\
			3 & $\{8,3\}$     & $[[32, 6, 3]]$  & 0.18	& 1.68  \\
			3 & $\{10,3\}$    & $[[20, 6, 2]]$  & 0.3 &	1.2  \\
			3 & $\{14,3\}$    & $[[14, 6, 2]]$  & 0.42	& 1.71	 \\
			4 & $\{8,3\}$     & $[[48, 8, 4]]$  & 0.16	& 2.666	\\
			4 & $\{10,3\}$    & $[[30, 8, 3]]$  & 0.26	& 2.4 \\
			4 & $\{12,3\}$    & $[[24, 8, 2]]$  & 0.333	& 1.333  \\
			4 & $\{18,3\}$    & $[[18, 8, 2]]$  & 0.444	& 1.778	\\
			5 & $\{8,3\}$     & $[[64, 10, 4]]$ & 0.156	& 2.5  \\
			5 & $\{10,3\}$    & $[[40, 10, 3]]$ & 0.25	& 2.25 \\
			5 & $\{14,3\}$    & $[[28, 10, 2]]$ & 0.357	& 1.428 \\
			5 & $\{22,3\}$    & $[[22, 10, 2]]$ & 0.455	& 1.818	\\
			\end{tabular}
	\end{ruledtabular}
\end{table}

In next tables, we present the parameters of the codes obtained from derived tessellations.

\begin{table}[b]
	\caption{	\label{tab:tess3}
		Floquet codes generated by semi-regular tessellations on orientable surfaces of genus $2$.}
	\begin{ruledtabular}
		\begin{tabular}{|c|c|c|c|c|}
			$g$ & Tessellation & $[[n,k,d]]$ & $k/n$ & $kd^{2}/n$ \\
			\hline
	2 & [4, 6, 14] & [[168, 4, 5]] & 0.023 & 0.595  \\
	2 & [4, 6, 16] & [[96, 4, 4]] & 0.041 & 0.666  \\
	2 & [4, 6, 18] & [[72, 4, 3]] & 0.055 & 0.5   \\
	2 & [4, 6, 20] & [[60, 4, 3]] & 0.066 & 0.6  \\
	2 & [4, 6, 24] & [[48, 4, 3]] & 0.083 & 0.75  \\
	2 & [4, 6, 36] & [[36, 4, 2]] & 0.111 & 0.444  \\
	2 & [4, 8, 10] & [[80, 4, 4]] & 0.05 & 0.8  \\
	2 & [4, 8, 12] & [[48, 4, 3]] & 0.083 & 0.75 \\
	2 & [4, 8, 16] & [[32, 4, 3]] & 0.125 & 1.125  \\
	2 & [4, 8, 24] & [[24, 4, 2]] & 0.166 & 0.666  \\
	2 & [4, 10, 10] & [[40, 4, 3]] & 0.1 & 0.9  \\
	2 & [4, 10, 20] & [[20, 4, 2]] & 0.2 & 0.8  \\
	2 & [4, 12, 12] & [[24, 4, 2]] & 0.166 & 0.666 \\
	2 & [4, 16, 16] & [[16, 4, 2]] & 0.25 & 1  \\
	2 & [6, 6, 8] & [[48 , 4, 4]] & 0.083 & 1.333 \\
	2 & [6, 6, 10] & [[30, 4, 3]] & 0.133 & 1.2 \\
	2 & [6, 6, 12] & [[24, 4, 3]] & 0.166 & 1.5  \\
	2 & [6, 6, 18] & [[18, 4, 2]] & 0.222 & 0.888  \\
	2 & [6, 8, 8] & [[24, 4, 3]] & 0.166 & 1.5  \\
	2 & [6, 12, 12] & [[12, 4, 2]] & 0.333 & 1.333  \\
	2 & [8, 8, 8] & [[16, 4, 3]] & 0.25 & 2.25  \\
	2 & [10, 10, 10] & [[10, 4, 2]] & 0.4 & 1.6 \\  
	\end{tabular}
\end{ruledtabular}
\end{table}

\begin{table}[b]
	\caption{	\label{tab5}
		Floquet codes generated by semi-regular tessellations derived from regular tessellations over orientable surface of genus $3$.}
	\begin{ruledtabular}
		\begin{tabular}{|c|c|c|c|c|}
			$g$ & Tessellation & $[[n,k,d]]$ & $k/n$ & $kd^{2}/n$ \\
			\hline
	3 & [4 , 6 , 14 ] & [[336, 6, 7]] & 0.017 & 0.875   \\
	3 & [4 , 6 , 16 ] & [[192, 6, 5]] & 0.031 & 0.781 \\
	3 & [4 , 6 , 18 ] & [[144, 6, 4]] & 0.041 & 0.666  \\
	3 & [4 , 6 , 20 ] & [[120, 6, 4]] & 0.05 & 0.8 \\
	3 & [4 , 6 , 24 ] & [[96, 6, 3]] & 0.062 & 0.562 \\
	3 & [4 , 6 , 28 ] & [[84, 6, 3]] & 0.071 & 0.642  \\
	3 & [4 , 6 , 36 ] & [[72, 6, 3]] & 0.083 & 0.75  \\
	3 & [4 , 6 , 60 ] & [[60, 6, 2]] & 0.1 & 0.4 \\
	3 & [4 , 8 , 10 ] & [[160, 6, 5]] & 0.037 & 0.937  \\
	3 & [4 , 8 , 12 ] & [[96, 6, 4]] & 0.062 & 1  \\
	3 & [4 , 8 , 16 ] & [[64, 6, 3]] & 0.093 & 0.843  \\
	3 & [4 , 8 , 24 ] & [[48, 6, 2]] & 0.125 & 0.5  \\
	3 & [4 , 8 , 40 ] & [[40, 6, 2]] & 0.15 & 0.6 \\
	3 & [4 , 10 , 10 ] & [[80, 6, 4]] & 0.075 & 1.2  \\
	3 & [4 , 10 , 12 ] & [[60, 6, 3]] & 0.1 & 0.9  \\
	3 & [4 , 10 , 20 ] & [[40, 6, 2]] & 0.15 & 0.6  \\
	3 & [4 , 12 , 12 ] & [[48, 6, 3]] & 0.125 & 1.125  \\
	3 & [4 , 12 , 18 ] & [[36, 6, 2]] & 0.166 & 0.666  \\
	3 & [4 , 14 , 28 ] & [[28, 6, 2]] & 0.214 & 0.857 \\
	3 & [4 , 16 , 16 ] & [[32, 6, 2]] & 0.187 & 0.75  \\
	3 & [4 , 24 , 24 ] & [[24, 6, 2]] & 0.25 & 1  \\
	3 & [6 , 6 , 8 ] & [[96, 6, 5]] & 0.062 & 1.562  \\
	3 & [6 , 6 , 10 ] & [[60, 6, 4]] & 0.1 & 1.6  \\
	3 & [6 , 6 , 12 ] & [[48, 6, 3]] & 0.125 & 1.125  \\
	3 & [6 , 6 , 14 ] & [[42, 6, 3]] & 0.142 & 1.285  \\
	3 & [6 , 6 , 18 ] & [[36, 6, 3]] & 0.166 & 1.5  \\
	3 & [6 , 6 , 30 ] & [[30, 6, 2]] & 0.2 & 0.8  \\
	3 & [6 , 8 , 8 ] & [[48, 6, 4]] & 0.125 & 2 \\
	3 & [6 , 8 , 24 ] & [[24, 6, 2]] & 0.25 & 1  \\
	3 & [6 , 10 , 10 ] & [[30, 6, 3]] & 0.2 & 1.8 \\
	3 & [6 , 12 , 12 ] & [[24, 6, 2]] & 0.25 & 1  \\
	3 & [6 , 18 , 18 ] & [[18, 6, 2]] & 0.333 & 1.333  \\
	3 & [8 , 8 , 8 ] & [[32, 6, 3]] & 0.187 & 1.687  \\
	3 & [8 , 8 , 12 ] & [[24, 6, 3]] & 0.25 & 2.25  \\
	3 & [8 , 16 , 16 ] & [[16, 6, 2]] & 0.375 & 1.5  \\
	3 & [10 , 10 , 10 ] & [[20, 6, 2]] & 0.3 & 1.2 \\
	3 & [14 , 14 , 14 ] & [[14, 6, 2]] & 0.428 & 1.714 \\
	\end{tabular}
\end{ruledtabular}
\end{table}

\begin{table}[b]
	\caption{\label{tab6}
		Floquet codes generated by semi-regular tessellations derived from regular tessellations on orientable surface of genus $4$.}
	\begin{ruledtabular}
		\begin{tabular}{|c|c|c|c|c|}
			$g$ & Tessellation & $[[n,k,d]]$ & $k/n$ & $kd^{2}/n$ \\ \hline
	4 & [4 , 6 , 14 ] & [[504, 8, 8]] & 0.015 & 1.015  \\
	4 & [4 , 6 , 16 ] & [[288, 8, 6]] & 0.027 & 1  \\
	4 & [4 , 6 , 18 ] & [[216, 8, 5]] & 0.037 & 0.925  \\
	4 & [4 , 6 , 20 ] & [[180, 8, 4]] & 0.044 & 0.711  \\
	4 & [4 , 6 , 24 ] & [[144, 8, 4]] & 0.055 & 0.888  \\
	4 & [4 , 6 , 30 ] & [[120, 8, 3]] & 0.066 & 0.6  \\
	4 & [4 , 6 , 36 ] & [[108, 8, 3]] & 0.074 & 0.666  \\
	4 & [4 , 6 , 48 ] & [[96, 8, 3]] & 0.083 & 0.75  \\
	4 & [4 , 6 , 84 ] & [[84, 8, 2]] & 0.095 & 0.38  \\
	4 & [4 , 8 , 10 ] & [[240, 8, 6]] & 0.033 & 1.2  \\
	4 & [4 , 8 , 12 ] & [[144, 8, 5]] & 0.055 & 1.388  \\
	4 & [4 , 8 , 14 ] & [[112, 8, 4]] & 0.071 & 1.142  \\
	4 & [4 , 8 , 16 ] & [[96, 8, 4]] & 0.083 & 1.333  \\
	4 & [4 , 8 , 20 ] & [[80, 8, 3]] & 0.1 & 0.9   \\
	4 & [4 , 8 , 24 ] & [[72, 8, 3]] & 0.111 & 1  \\
	4 & [4 , 8 , 32 ] & [[64, 8, 2]] & 0.125 & 0.5  \\
	4 & [4 , 8 , 56 ] & [[56, 8, 2]] & 0.142 & 0.571  \\
	4 & [4 , 10 , 10 ] & [[120, 8, 4]] & 0.066 & 1.066 \\
	4 & [4 , 10 , 20 ] & [[60, 8, 3]] & 0.133 & 1.2  \\
	4 & [4 , 12 , 12 ] & [[72, 8, 3]] & 0.111 & 1   \\
	4 & [4 , 12 , 24 ] & [[48, 8, 2]] & 0.166 & 0.666   \\
	4 & [4 , 14 , 14 ] & [[56, 8, 3]] & 0.142 & 1.285  \\
	4 & [4 , 16 , 16 ] & [[48, 8, 2]] & 0.166 & 0.666   \\
	4 & [4 , 18 , 36 ] & [[36, 8, 2]] & 0.222 & 0.888  \\
	4 & [4 , 20 , 20 ] & [[40, 8, 2]] & 0.2 & 0.8  \\
	4 & [4 , 32 , 32 ] & [[32, 8, 2]] & 0.25 & 1  \\
	4 & [6 , 6 , 8 ] & [[144, 8, 6]] & 0.055 & 2  \\
	4 & [6 , 6 , 10 ] & [[90, 8, 4]] & 0.088 & 1.422  \\
	4 & [6 , 6 , 12 ] & [[72, 8, 4]] & 0.111 & 1.777  \\
	4 & [6 , 6 , 18 ] & [[54, 8, 3]] & 0.148 & 1.333  \\
	4 & [6 , 6 , 24 ] & [[48, 8, 3]] & 0.166 & 1.5  \\
	4 & [6 , 6 , 42 ] & [[42, 8, 2]] & 0.19 & 0.761  \\
	4 & [6 , 8 , 8 ] & [[72, 8, 4]] & 0.111 & 1.777   \\
	4 & [6 , 8 , 12 ] & [[48, 8, 3]] & 0.166 & 1.5  \\
	4 & [6 , 10 , 30 ] & [[30, 8, 2]] & 0.266 & 1.066   \\
	4 & [6 , 12 , 12 ] & [[36, 8, 3]] & 0.222 & 2  \\
	4 & [6 , 24 , 24 ] & [[24, 8, 2]] & 0.333 & 1.333  \\
	4 & [8 , 8 , 8 ] & [[48, 8, 4]] & 0.166 & 2.666  \\
	4 & [8 , 8 , 10 ] & [[40, 8, 3]] & 0.2 & 1.8 \\
	4 & [8 , 8 , 16 ] & [[32, 8, 2]] & 0.25 & 1  \\
	4 & [8 , 12 , 24 ] & [[24, 8, 2]] & 0.333 & 1.333   \\
	4 & [10 , 10 , 10 ] & [[30, 8, 3]] & 0.266 & 2.4  \\
	4 & [10 , 20 , 20 ] & [[20, 8, 2]] & 0.4 & 1.6  \\
	4 & [12 , 12 , 12 ] & [[24, 8, 2]] & 0.333 & 1.333   \\
	4 & [18 , 18 , 18 ] & [[18, 8, 2]] & 0.444 & 1.777 \\
			\end{tabular}
	\end{ruledtabular}
\end{table}

\begin{table}[b]
	\caption{\label{og5}
		Floquet codes generated by semi-regular tessellations derived from regular tessellations on orientable surface of genus $5$.}
	\begin{ruledtabular}
		\begin{tabular}{|c|c|c|c|c|}
			$g$ & Tessellation & $[[n,k,d]]$ & $k/n$ & $kd^{2}/n$ \\ \hline
	5 & [4, 6, 14] & [[672, 10, 9]] & 0.014 & 1.205   \\
	5 & [4, 6, 16 ] & [[384, 10, 6]] & 0.026 & 0.937  \\
	5 & [4, 6, 18] & [[288, 10, 5]] & 0.034 & 0.868  \\
	5 & [4, 6, 20] & [[240, 10, 5]] & 0.041 & 1.041   \\
	5 & [4, 6, 24] & [[192, 10, 4]] & 0.052 & 0.833  \\
	5 & [4, 6, 28] & [[168, 10, 4]] & 0.059 & 0.952   \\
	5 & [4, 6, 36] & [[144, 10, 3]] & 0.069 & 0.625   \\
	5 & [4, 6, 44] & [[132, 10, 3]] & 0.075 & 0.681  \\
	5 & [4, 6, 60] & [[120, 10, 3]] & 0.083 & 0.75   \\
	5 & [4, 6, 108] & [[108, 10, 2]] & 0.092 & 0.37  \\
	5 & [4, 8, 10] & [[320, 10, 7]] & 0.031 & 1.531  \\
	5 & [4, 8, 12] & [[192, 10, 5]] & 0.052 & 1.302  \\
	5 & [4, 8, 16] & [[128, 10, 4]] & 0.078 & 1.25  \\
	5 & [4, 8, 24] & [[96, 10, 3]] & 0.104 & 0.937  \\
	5 & [4, 8, 40] & [[80, 10, 2]] & 0.125 & 0.5  \\
	5 & [4, 8, 72] & [[72, 10, 2]] & 0.138 & 0.555  \\
	5 & [4, 10, 10] & [[160, 10, 5]] & 0.062 & 1.562  \\
	5 & [4, 10, 12] & [[120, 10, 4]] & 0.083 & 1.333  \\
	5 & [4, 10, 20] & [[80, 10, 3]] & 0.125 & 1.125  \\
	5 & [4, 10, 60] & [[60, 10, 2]] & 0.166 & 0.666  \\
	5 & [4, 12, 12] & [[96, 10, 3]] & 0.104 & 0.937  \\
	5 & [4, 12, 14] & [[84, 10, 3]] & 0.119 & 1.071  \\
	5 & [4, 12, 18] & [[72, 10, 3]] & 0.138 & 1.25  \\
	5 & [4, 12, 30] & [[60, 10, 2]] & 0.166 & 0.666  \\
	5 & [4, 14, 28] & [[56, 10, 2]] & 0.178 & 0.714  \\
	5 & [4, 16, 16] & [[64, 10, 3]] & 0.156 & 1.406  \\
	5 & [4, 16, 48] & [[48, 10, 2]] & 0.208 & 0.833  \\
	5 & [4, 22, 44] & [[44, 10, 2]] & 0.227 & 0.909  \\
	5 & [4, 24, 24] & [[48, 10, 2]] & 0.208 & 0.833  \\
	5 & [4, 40, 40] & [[40, 10, 2]] & 0.25 & 1  \\
	5 & [6, 6, 8] & [[192, 10, 6]] & 0.052 & 1.875 \\
	5 & [6, 6, 10] & [[120, 10, 5]] & 0.083 & 2.083  \\
	5 & [6, 6, 12] & [[96, 10, 4]] & 0.104 & 1.666  \\
	5 & [6, 6, 14] & [[84, 10, 4]] & 0.119 & 1.904  \\
	5 & [6, 6, 18] & [[72, 10, 3]] & 0.138 & 1.25  \\
	5 & [6, 6, 22] & [[66, 10, 3]] & 0.151 & 1.363  \\
	5 & [6, 6, 30] & [[60, 10, 3]] & 0.166 & 1.5 \\
	5 & [6, 6, 54] & [[54, 10, 2]] & 0.185 & 0.74   \\
	5 & [6, 8, 8] & [[96, 10, 4]] & 0.104 & 1.666  \\
	5 & [6, 8, 24] & [[48, 10, 2]] & 0.208 & 0.833  \\
	5 & [6, 10, 10] & [[60, 10, 3]] & 0.166 & 1.5  \\
	5 & [6, 12, 12] & [[48, 10, 3]] & 0.208 & 1.875  \\
	5 & [6, 12, 36] & [[36, 10, 2]] & 0.277 & 1.111  \\
	5 & [6, 14, 14] & [[42, 10 , 2]] & 0.238 & 0.952  \\
	5 & [6, 18, 18] & [[36, 10, 2]] & 0.277 & 1.111  \\
	5 & [6, 30, 30] & [[30, 10, 2]] & 0.333 & 1.333  \\
	5 & [8, 8, 8] & [[64, 10, 4]] & 0.156 & 2.5   \\
	5 & [8, 8, 12] & [[48, 10, 3]] & 0.208 & 1.875   \\
	5 & [8, 8, 20] & [[40, 10, 2]] & 0.25 & 1   \\
	5 & [8, 16, 16] & [[32, 10, 2]] & 0.312 & 1.25  \\
	5 & [10, 10, 10] & [[40, 10, 3]] & 0.25 & 2.25  \\
	5 & [10, 10, 30] & [[30, 10, 2]] & 0.333 & 1.333   \\
	5 & [12, 24, 24] & [[24, 10, 2]] & 0.416 & 1.666  \\
	5 & [14, 14, 14] & [[28, 10, 2]] & 0.357 & 1.428  \\
	5 & [22, 22, 22] & [[22, 10, 2]] & 0.454 & 1.818 \\ 
			\end{tabular}
\end{ruledtabular}
\end{table}
	
We next focus only on one particular tessellation and then observe the parameter 
evolution when the genus increases. Considering the $[6, 6, 8]$ tessellation, 
generated by clipping the $\{3, 8\}$ tessellation. We can observe that the parameters of the codes generated from them have a coding rate significantly better than the parameters of several Floquet codes shown in the literature.

As pointed out in \cite{Breuckmann2024}, the code generated in Table~\ref{tab7} has 
a rate of $kd^2/n$, which is better than the planar and toric honeycomb codes, which have rates of $1/6$ and $1/3$, respectively. We can also note that in some examples the rate exceeds many times such rates, as the codes are constructed over surfaces of genus $30$ and $50$. It is clear that the increase is not regular, but they present a good increase when $g$ increases.

\begin{table}[b]
	\caption{\label{tab7}
		Floquet codes generated from the $[6,6,8]$ tessellation over orientable surfaces of genus $2$ to $50$.}
	\begin{ruledtabular}
		\begin{tabular}{|c|c|c|c|c|}
			$g$ & $[[n,k,d]]$ & $k/n$ & $kd^{2}/n$ & $d/n$ \\ \hline
	2 & [[48,	4,	4]] &	0.083 &	1.333 &	0.08333 \\
	3 & [[96,	6,	5]] & 	0.062 &	1.562 &	0.052  \\
	4 &	[[144,	8,	6]] &  0.055 &	2 &	0.041 \\
	5 &	[[192,	10,	6]] &  0.052 &	1.875 &	0.031 \\
	6 & [[240,	12,	7]] &  0.05 &	2.45 &	0.029 \\
	7 & [[288,	14,	7]] &  0.048 &	2.381 &	0.024 \\
	8 & [[336,	16,	5]] &   0.095 &	2.38 &	0.029 \\
	9 & [[384,	18,	8]] &   0.046 &	3 &	0.02 \\
	10 & [[432,	20,	8]] &   0.046 &	2.962 &	0.018 \\
	11 & [[480,	22,	8]] &   0.045 &	2.933 &	0.016 \\
	12 & [[528,	24,	8]] &   0.045 &	2.909 &	0.015 \\
	13 & [[576,	26,	9]] & 	0.045 &	3.656 &	0.015 \\
	14 & [[624,	28,	9]] & 	0.044 &	3.634 &	0.014  \\
	15 & [[672,	30,	9]] &  0.044 &	3.616 &	0.013  \\
	16 & [[720,	32,	9]] &   0.044 &	3.6 &	0.012 \\
	17 & [[768,	34,	9]] &  0.044 &	3.585 &	0.011 \\
	18 & [[816,	36,	9]] &   0.044 &	3.573 &	0.011 \\
	19 & [[864,	38,	10]] &  0.043 &	4.398 &	0.011 \\
	20 & [[912,	40,	10]] &   0.043 &	4.385 &	0.01 \\
	21 & [[960,	42,	10]] & 0.043 &	4.375 &	0.01 \\
	22 & [[1008,	44,	10]] &  0.043 &	4.365 &	0.009 \\
	23 & [[1056,	46,	10]] &  0.043 &	4.356 &	0.009 \\
	24 & [[1104,	48,	10]] &  0.043 &	4.347 &	0.009 \\
	25 & [[1152,	50,	10]] &   0.043 &	4.34 &	0.008 \\
	26 & [[1200,	52,	10]] &   0.043 &	4.333 &	0.008 \\
	27 & [[1248,	54,	10]] &  0.043 &	4.326 &	0.008 \\
	28 & [[1296,	56,	10]] &   0.043 &	4.32 &	0.007 \\
	29 & [[1344,	58,	10]] &  0.043 &	4.315 &	0.007 \\
	30 & [[1392,	60,	11]] &  0.043 &	5.215 &	0.007 \\
	40 & [[1872,	80,	11]] &  0.042 &	5.17 &	0.005 \\
	50 & [[2352,	100,	12]] &  0.042 &	6.122 &	0.005 \\
			\end{tabular}
\end{ruledtabular}
\end{table}

\subsection{Floquet codes over non-orientable surfaces}\label{subNOFloquet}

In this section we utilize the derived semi-regular tessellations in order to construct new Floquet codes now applied to tessellations over non-orientable surfaces.

Constructions of codes derived from semi-regular tessellations over non-orientable surfaces were already been present in the literature; see for example \cite{Douglas1}, \cite{norientavel}. 

An important result shown in \cite{norientavel} says that if a code is generated over an orientable surface of genus $g$, then one can generate a corresponding code over a non-orientable surface of genus $2g$, having the same parameters of the first code. This fact is interesting since every code generates over an orientable surface has an analogous code over a non-orientable surface of even genus. However, a code generated over non-orientable surfaces of odd genus does not have an equivalent code in the sense that they have the same parameters over orientable surfaces. In this light, we foccus the attention in the construction of codes over non-orientable surfaces of odd genus.

\begin{theorem}\label{or_same_non_or_qcc}
	Let $S_{g}$ and $S_{h}$ be surfaces with non-orientable and orientable genus $g$ and $ h $, respectively. If $\chi(S_{g}) = \chi(S_{h})$, then the Floquet codes constructed are the same for both surfaces. 
\end{theorem}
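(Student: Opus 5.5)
Since $\chi(S_h)=2-2h$ and $\chi(S_g)=2-g$, the hypothesis $\chi(S_g)=\chi(S_h)$ is equivalent to $g=2h$. So the statement is really about a non-orientable surface of even genus $2h$ versus an orientable surface of genus $h$. I plan to show that each of the three code parameters $n$, $k$, $d$ is determined by data that depend only on $\chi$, on the local structure of the chosen $\{p,q\}$ (or derived semi-regular) tessellation, and on the homological length of geodesic logical paths.

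\textbf{Step 1: the length $n$.} By the Killing--Hopf theorem both surfaces are quotients of the hyperbolic plane by discrete fixed-point-free isometry groups, so both have area $\mu=-2\pi\chi$ by Gauss--Bonnet. Since area is additive over the tessellation, $\mu=F\mu(\mathbb P)$ yields the same number of faces on both surfaces. The local relations $pF=2E=qV$, or for the derived tessellations the relations $N_f=n_f+n_v(+n_e)$, $N_e$, $N_v$ used in Section~\ref{codigosNorientaveis}, then give identical vertex, edge and face counts. Existence on the non-orientable side should follow from the Kulkarni-type criterion, which depends only on $\chi$ and the counting relations. As noted earlier, the derivation techniques are local and therefore apply equally to both surfaces. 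This gives $n=N_v$ identical on both surfaces.

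\textbf{Step 2: the dimension $k$.} For the orientable surface, $k=2h$. For the non-orientable surface the paper's convention is $k=g=2h$. I would justify this by a homology count rather than by citation. The number of logical qubits equals the number of independent nontrivial cycles, that is $\dim H_1(S;\mathbb Z_2)$, which is $2h$ for $S_h$ and $g$ for $S_g$. Since $g=2h$, the two values of $k$ agree. The 3-colourability and trivalence needed for the Floquet construction are local properties, so they transfer unchanged.

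\textbf{Step 3: the distance $d$.} The distance is computed, via the procedure of the preceding subsection, as a minimum over homologically nontrivial geodesic logical paths of $\lceil \mathrm{length}/t\rceil$, where $t$ is a local incenter distance. Here I would invoke the result of \cite{norientavel}: an orientable surface of genus $h$ yields a non-orientable surface of genus $2h$ carrying the same tessellation and having the same shortest nontrivial cycle length. Concretely, I would take the fundamental $\{4h,4h\}$ polygon giving $S_h$ and re-pair the same $4h$ edges by a non-orientable side pairing of type $a_1a_1\cdots a_{2h}a_{2h}$. The polygon is unchanged, so its metric, the tessellation inside it, and the local quantities $A_i$ and $t$ are all unchanged.

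\textbf{Main obstacle.} The hard step is showing that the systole, the minimum length of a nontrivial (including one-sided) geodesic, is the same under the two pairings. My plan is to compute the lengths of the geodesics joining paired sides in both identifications using Corollary~\ref{cor:comprimento-geodesica-lados-opostos-poligono-que-fornece-superficie-g-toro}-style formulas. I would then rely on \cite{norientavel} for the equality of the resulting minima, which gives equal $d$ and completes the argument.
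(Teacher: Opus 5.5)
Your route is essentially the paper's: the same $4h$-gon with a different side pairing, all local data unchanged, $k$ read off from $\chi$, and $d$ from the systole. Steps 1 and 2 are fine. Your $k=\dim H_1(S;\mathbb{Z}_2)=2-\chi$ actually fits the paper's conventions $k=2h$, $k=g$ better than the formula $k=4-2\chi$ written in its proof.

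The genuine gap is Step 3. This is also where the paper's argument is thin: it simply asserts the orientable value $2\,\mathrm{arccosh}[\cot(\pi/4h)]$ for both surfaces. You defer the equality of systoles to a citation, but for the pairing you specify it is false.

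In the word $a_1a_1\cdots a_{2h}a_{2h}$ the two sides labelled $a_1$ are adjacent. The pairing maps the midpoint $m_1$ of one to the midpoint $m_2$ of the other. So the segment $[m_1,m_2]\subset P$ closes up to a loop homotopic to $a_1$, a one-sided curve that is nonzero in $H_1(S;\mathbb{Z}_2)$. Let $a$ be the apothem of $P$, so $\cosh a=\cot(\pi/4h)$. The hyperbolic law of cosines gives
\[
\cosh\ell=\cosh^2 a-\sinh^2 a\,\cos\frac{2\pi}{4h}=1+2\cos\frac{\pi}{2h}<3.
\]
The orientable systole $2a$, by contrast, satisfies $\cosh 2a=2\cot^2(\pi/4h)-1\geq 5+4\sqrt2$. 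For $h=2$ you get exactly $\ell=a$, half the Bolza systole. For all $h$, $\ell$ stays bounded while $2a\to\infty$.

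So the two surfaces do not share a shortest nontrivial cycle, and the quantity $\lceil\text{length}/t\rceil$ that defines $d$ need not agree on them. The paper's own tables already show this: $[[96,6,5]]$ for $[6,6,8]$ on the orientable genus-$3$ surface, but $[[96,6,6]]$ on the non-orientable genus-$6$ one.

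Your claim that 3-colourability is a local property is also wrong. Two conditions are global here. First, the tiling inside $P$ must match across the new orientation-reversing pairings. Second, a 3-colouring of the lifted tiling must be respected by those pairings.

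With your pairing this already fails for $\{8,3\}$ with $h=2$. In the standard position, the centre and side midpoints of the $\{8,8\}$ octagon are centres of $\{8,3\}$ faces, because $2\,\mathrm{arccosh}[\cos(\pi/3)/\sin(\pi/8)]=\mathrm{arccosh}(1+\sqrt2)=a$. The faces centred at $m_1$ and $m_2$ are then consecutive neighbours of the central face, so they share an edge. After the $a_1a_1$ gluing, a single face is adjacent to itself. Hence the tiling of that non-orientable genus-$4$ surface is not 3-colourable, and no Floquet code is defined on it.

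To repair the argument you would need, for each tiling, a non-orientable fundamental domain or side pairing that preserves a 3-colouring, with its systole computed directly. Otherwise, restrict the conclusion to $n$ and $k$.
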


\begin{proof}
	We know that the fundamental region of $S_{h}$ is a $4h$-gon polygon. Similarly, the fundamental region of $S_g$ is a $2g$-gon polygon. If $\chi(S_{g}) = \chi(S_{h})$, then $g=2h$. Hence, the fundamental region for both surfaces, $S_g$ and $S_h$ are the same $2g$-gonal polygon with different orientations on the boundary edges. Therefore, both the surfaces admit the same existing $[p,q,r]$ tessellation, and hence, the stabilizer operators associated with the tessellation $[p,q,r]$ for both the surfaces are the same since both are independent of the orientation of the boundary edges. Therefore, the code space and the encoding space associated with the tessellation $[p,q,r]$ on both surfaces are the same.
	
	We have that the code parameter $k$ is given by $k=4-2\chi$, therefore, $k$ is the same for both surfaces $S_g$ and $S_{h}$ if $\chi(S_g)=\chi(S_{h})$.
	
	We see that $l(p,q)$, $D(p,q)$ and $AR(p,q)$ are not depending on the genus of the surface. Therefore, if $\chi(S_g)= \chi(S_{h})$, then $l(p,q)$, $D(p,q)$ and $AR(p,q)$ are same for both surfaces $S_g$ and $S_{h}$. Now, $\chi(S_g)=  \chi(S_{h})$ implies $g=2h$, and this leads to
	\[
	d_h=2arccosh\Bigg[\frac{\cos{(\frac{\pi}{4h})}}{\sin{(\frac{\pi}{4h})}}\bigg].
	\]
	This concludes that $d_{min}$ is the same for the surface $S_{h}$.
\end{proof}

\begin{table}[b]
	\caption{\label{tab8}
		Floquet codes generated from semi-regular tessellations on non-orientable surfaces of genus $3$.}
	\begin{ruledtabular}
		\begin{tabular}{|c|c|c|c|c|}
			$g$ & Tessellation & $[[n,k,d]]$ & $k/n$ & $kd^{2}/n$ \\ \hline
	3 & [6, 6, 8] & [[24, 3, 4]] & 0.125 & 2  \\
	3 & [6, 16, 4] & [[48, 3, 4]] & 0.062 & 1  \\
	3 & [6, 6, 12] & [[12, 3, 2]] & 0.25 & 1  \\
	3 & [6, 24, 4] & [[24, 3, 2]] & 0.125 & 0.5   \\
	3 & [8, 8, 6] & [[12, 3, 4]] & 0.25 & 4  \\
	3 & [8, 12, 4] & [[24, 3, 4]] & 0.125 & 2  \\
	3 & [8, 8, 8] & [[8, 3, 2]] & 0.375 & 1.5   \\
	3 & [8, 16, 4] & [[16, 3, 2]] & 0.187 & 0.75  \\
	3 & [10, 10, 4] & [[20, 3, 4]] & 0.15 & 2.4  \\
	3 & [10, 8, 4] & [[40, 3, 4]] & 0.075 & 1.2  \\
	3 & [12, 12, 4] & [[12, 3, 2]] & 0.25 & 1  \\
	3 & [12, 8, 4] & [[24, 3, 4]] & 0.125 & 2  \\
	3 & [12, 12, 6] & [[6, 3, 2]] & 0.5 & 2  \\
	3 & [12, 12, 4] & [[12, 3, 2]] & 0.25 & 1  \\
	3 & [16, 16, 4] & [[8, 3, 2]] & 0.375 & 1.5  \\
	3 & [16, 8, 4] & [[16, 3, 2]] & 0.187 & 0.75 \\ 
				\end{tabular}
\end{ruledtabular}
\end{table}

\begin{table}[b]
	\caption{\label{nog5}
		Floquet codes generated from semi-regular tessellations on non-orientable surfaces of genus $5$.}
	\begin{ruledtabular}
		\begin{tabular}{|c|c|c|c|c|}
			$g$ & Tessellation & $[[n,k,d]]$ & $k/n$ & $kd^{2}/n$ \\ \hline
	5 & [6, 6, 8] & [[72, 5, 6]] & 0.069 & 2.5  \\
	5 & [6, 16, 4] & [[144, 5, 6]] & 0.034 & 1.25  \\
	5 & [6, 6, 12] & [[36, 5, 4]] & 0.138 & 2.222  \\
	5 & [6, 24, 4] & [[72, 5, 4]] & 0.069 & 1.111  \\
	5 & [6, 6, 24] & [[24, 5, 2]] & 0.208 & 0.833   \\
	5 & [6, 48, 4] & [[48, 5, 2]] & 0.104 & 0.416   \\
	5 & [8, 8, 6] & [[36, 5, 4]] & 0.138 & 2.222  \\
	5 & [8, 12, 4] & [[72, 5, 6]] & 0.069 & 2.5   \\
	5 & [8, 8, 8] & [[24, 5, 4]] & 0.208 & 3.333  \\
	5 & [8, 16, 4] & [[48, 5, 4]] & 0.104 & 1.666  \\
	5 & [8, 8, 10] & [[20, 5, 4]] & 0.25 & 4  \\
	5 & [8, 20, 4] & [[40, 5, 4]] & 0.125 & 2  \\
	5 & [8 , 8 , 16] & [[16, 5, 2]] & 0.312 & 1.25  \\
	5 & [8, 32, 4] & [[32, 5, 2]] & 0.156 & 0.625 \\
	5 & [10, 10, 4] & [[60, 5, 6]] & 0.083 & 3  \\
	5 & [10, 8, 4] & [[120, 5, 6 ]] & 0.041 & 1.5  \\
	5 & [12, 12, 4] & [[36, 5, 4]] & 0.138 & 2.222  \\
	5 & [12, 8, 4] & [[72, 5, 6]] & 0.069 & 2.5 \\
	5 & [12, 12, 6] & [[18, 5, 4]] & 0.277 & 4.444  \\
	5 & [12, 12, 4] & [[36, 5, 4]] & 0.138 & 2.222  \\
	5 & [12, 12, 12] & [[12, 5, 2]] & 0.416 & 1.666  \\
	5 & [12, 24, 4] & [[24, 5, 2]] & 0.208 & 0.833  \\
	5 & [14, 14, 4] & [[28, 5, 4]] & 0.178 & 2.857  \\
	5 & [14, 8, 4] & [[56, 5, 4]] & 0.089 & 1.428  \\
	5 & [16, 16, 4] & [[24, 5, 4]] & 0.208 & 3.333  \\
	5 & [16, 8, 4] & [[48, 5, 4]] & 0.104 & 1.666  \\
	5 & [20, 20, 4] & [[20, 5, 2]] & 0.25 & 1  \\
	5 & [20, 8, 4] & [[40, 5, 4]] & 0.125 & 2  \\
	5 & [20, 20, 10] & [[10, 5, 2]] & 0.5 & 2  \\
	5 & [20, 20, 4] & [[20, 5, 2]] & 0.25 & 1  \\
	5 & [24, 24, 6] & [[12, 5, 2]] & 0.416 & 1.666 \\
	5 & [24, 12, 4] & [[24, 5, 2]] & 0.208 & 0.833  \\
	5 & [32, 32, 4] & [[16, 5, 2]] & 0.312 & 1.25  \\
	5 & [32, 8, 4] & [[32, 5, 2]] & 0.156 & 0.625   \\
				\end{tabular}
\end{ruledtabular}
\end{table}

\begin{table}[b]
	\caption{\label{tab10}
		Floquet codes generated from semi-regular tessellations on non-orientable surfaces of genus $7$.}
	\begin{ruledtabular}
		\begin{tabular}{|c|c|c|c|c|}
			$g$ & Tessellation & $[[n,k,d]]$ & $k/n$ & $kd^{2}/n$ \\ \hline
	7 & $[6, 6, 8]$ & [[120,	7,	6]] &	0.058 &	2.1  \\
	7 &	$[6, 16, 4]$ & [[240,	7,	6]] & 	0.029 &	1.05  \\
	7 & $[6, 6, 12]$ & [[60,	7,	4]] & 	0.116 &	1.866  \\
	7 &	$[6, 24, 4]$ & [[120,	7,	4]] &	0.058 &	0.933 \\
	7 & $[6, 6, 16]$ & [[48,	7,	4]] &	0.145 & 2.333 \\
	7 & $[6, 32, 4]$ & [[96,	7,	4]] &	0.072 &	1.166 \\
	7 & $[6, 6, 36]$ & [[36,	7,	2]] &	0.194 &	0.777  \\
	7 & $[6, 72, 4]$ & [[72,	7,	2]] &	0.097 &	0.388  \\
	7 & $[8, 8, 6]$ & [[60,	7,	6]] &	0.116 &	4.2  \\
	7 & $[8, 12, 4]$ & [[120,	7,	6]] &	0.058 &	2.1  \\
	7 &	$[8, 8, 8]$ & [[40,	7,	4]] &	0.175 &	2.8   \\
	7 & $[8, 16, 4]$ & [[80,	7,	4]] & 	0.087 &	1.4   \\
	7 & $[8, 8, 14]$ & [[28,	7,	4]] &	0.25 &	4   \\
	7 & $[8, 28, 4]$ & [[56,	7, 	4]] &	0.125 &	2   \\
	7 &	$[8, 8, 24]$ & [[24,	7,	2]] & 	0.291 &	1.166   \\
	7 & $[8, 48, 4]$ & [[48,	7,	2]] &	0.145 &	0.583   \\
	7 &	$[10, 10, 4]$ & [[100, 7,	6]] &	0.07 &	2.52   \\
	7 &	$[10, 8, 4]$ & [[200,	7,	8]] &	0.035 &	2.24  \\
	7 & $[10, 10, 20]$ & [[20, 7,	2]] &	0.35 &	1.4   \\
	7 &	$[10, 40, 4]$ & [[40,	7,	2]] &	0.175 &	0.7  \\
	7 & $[12, 12, 4]$ & [[60,	7,	4]] &	0.116 &	1.866  \\
	7 & $[12, 8, 4]$ & [[120,	7,	6]] &	0.058 &	2.1  \\
	7 & $[12, 12, 6]$ & [[30,	7,	4]] &	0.233 &	3.733   \\
	7 & $[12, 12, 4]$ & [[60,	7,	4]] & 	0.116 &	1.866   \\
	7 & $[12, 12, 8]$ & [[24,	7,	4]] &	0.291 &	4.666  \\
	7 & $[12, 16, 4]$ & [[48,	7,	4]] & 	0.145 &	2.333   \\
	7 &	$[12, 12, 18]$ & [[18, 7,	2]] &	0.388 &	1.555   \\
	7 & $[12, 36, 4]$ & [[36,	7,	2]] &	0.194 &	0.777  \\
	7 & $[16, 16, 4]$ & [[40,	7,	4]] & 	0.175 &	2.8  \\
	7 &	$[16, 8, 4]$ & [[80,	7,	4]] &	0.087 &	1.4   \\
	7 & $[16, 16, 6]$ & [[24,	7, 	4]] &	0.291 &	4.666  \\
	7 &	$[16, 12, 4]$ & [[48,	7,	4]] &	0.145 &	2.333  \\
	7 & $[16, 16, 16]$ & [[16, 7,	2]] &	0.437 &	1.75  \\ 
	7 & $[16, 32, 4]$ & [[32,	7,	2]] &	0.218 &	0.875   \\
	7 & $[18, 18, 4]$ & [[36,	7,	4]] &	0.194 &	3.111  \\
	7 & $[18, 8, 4]$ & [[72,	7,	4]] &	0.097 &	1.555 \\
	7 & $[28, 28, 4]$ & [[28,	7,	2]] &	0.25 &	1  \\
	7 & $[28, 8, 4]$ & [[56,	7,	4]] &	0.125 &	2   \\
	7 & $[28, 28, 14]$ & [[14, 7,	2]] &	0.5 &	2   \\
	7 & $[28, 28, 4]$ & [[28,	7,	2]] &	0.25 &	1  \\
	7 & $[32, 32, 8]$ & [[16,	7,	2]] &	0.437 &	1.75  \\
	7 & $[32, 16, 4]$ & [[32,	7,	2]] &	0.218 &	0.875  \\
	7 & $[36, 36, 6]$ & [[18,	7,	2]] &	0.388 &	1.555  \\
	7 & $[36, 12, 4]$ & [[36,	7,	2]] &	0.194 &	0.777   \\
	7 & $[48, 48, 4]$ & [[24,	7,	2]] &	0.291 &	1.166  \\
	7 & $[48, 8, 4]$ & [[48,	7,	2]] &	0.145 &	0.583   
					\end{tabular}
\end{ruledtabular}
\end{table}

As was made with codes over orientable surfaces, we fix the tessellation and observe its behavior according to genus increasing (see Table~\ref{668} in the sequence)

\begin{table}[b]
	\caption{\label{668}
Floquet code generated from the $[6,6,8]$ tessellation over non-orientable surfaces of genus $3$ to $51$}
	\begin{ruledtabular}
		\begin{tabular}{|c|c|c|c|c|}
	$g$ & $[[n,k,d]]$ & $k/n$ & $kd^2/n$ & $d/n$ \\ \hline
	3 & [[24,	3,	4]] &	0.125 &	2 &	0.16666  \\
	4 & [[48,	4,	4]] &	0.083 &	1.333 &	0.08333  \\
	5 & [[72,	5,	6]] &	0.069 &	2.5 &	0.08333  \\
	6 & [[96,	6,	6]] & 	0.062 &	2.25 &	0.0625  \\
	7 & [[120,	7,	6]] &	0.058 &	2.1 &	0.05  \\
	8 & [[144,	8,	8]] &	0.055 &	3.555 &	0.05555 \\
	9 & [[168,	9,	8]] &	0.053 &	3.428 &	0.04761  \\
	10 & [[192,	10,	8]] &	0.052 &	3.333 &	0.04166  \\
	11 & [[216,	11,	8]] &	0.05 &	3.259 &	0.03703  \\
	12 & [[240,	12,	8]] &	0.05 &	3.2 &	0.03333  \\
	13 & [[264,	13,	8]] &	0.049 &	3.151 &	0.0303  \\
	14 & [[288,	14,	8]] &	0.048 &	3.111 &	0.02777 \\
	15 & [[312,	15,	8]] &	0.048 &	3.076 &	0.02564  \\
	16 & [[336,	16,	8]] &	0.047 &	3.047 &	0.0238  \\
	17 & [[360,	17,	10]] &	0.047 &	4.722 &	0.02777 \\
	18 & [[384,	18,	10]] &	0.046 &	4.687 &	0.02604  \\
	19 & [[408,	19,	10]] &	0.046 &	4.656 &	0.0245  \\
	20 & [[432,	20,	10]] &	0.046 &	4.629 &	0.02314 \\
	21 & [[456,	21,	10]] &	0.046 &	4.605 &	0.02192  \\
	22 & [[480,	22,	10]] &	0.045 &	4.583 &	0.02083  \\
	23 & [[504,	23,	10]] &	0.045 &	4.563 &	0.01984  \\
	24 & [[528,	24,	10]] &	0.045 &	4.545 &	0.01893  \\
	25 & [[552,	25,	10]] &	0.045 &	4.528 &	0.01811  \\
	26 & [[576,	26,	10]] &	0.045 &	4.513 &	0.01736  \\
	27 & [[600,	27,	10]] &	0.045 &	4.5 &	0.01666 \\
	28 & [[624,	28,	10]] &	0.044 &	4.487 &	0.01602  \\
	29 & [[648,	29,	10]] &	0.044 &	4.475 &	0.01543  \\
	30 & [[672,	30,	10]] &	0.044 &	4.464 &	0.01488  \\
	31 & [[696,	31,	10]] &	0.044 &	4.454 &	0.01436  \\
	41 & [[936,	41,	12]] & 	0.043 &	6.307 &	0.01282  \\
	51 & [[1176, 51, 12]] &	0.043 &	6.244 &	0.0102  
					\end{tabular}
\end{ruledtabular}
\end{table}

In Table \ref{668} we focus the attention in the construction of codes derived from the $[6,6,8]$ tessellation due to the criterion of vertices density, and we compare with the parameters of the codes derived from regular tessellations. Note that there exists an increasing of the rate $kd^2/n$ of this code, which indicates a possibility of good coding rates.

\begin{remark}
	Note that if we compare the parameters of the codes generated over orinetable surfaces with codes generated over non-orientable surfaces of the same genus $g$, the codes over non-orientable surfaces have better coding rate $k/n$, whereas the codes over orientable surfaces have better $kd^2/n$ rate.
\end{remark}

\section{Comparisons and asymptotic behavior}\label{CAB}

Even the asymptotic behaviour of $k/n$ is the same for the orientable and the non-orientable cases when the genus grows, we have usually other better ratios for the non-orientable case for a fixed odd genus. We can compare the codes of Table \ref{og5}, for $g = 5$, with the same genus in the orientable case, given in Table \ref{nog5}. We consider also the ratios $d/n$ and $k d^2/n$, which are very important. Maximizing of ratios $k/n$ and $d/n$, we can encode as many qubits as possible and correct as many errors as possible. We remark that always $d/n \leq 1$ and the higher the ratio, the better the code. We can see that both radios $k/n$, $k d^2/n$ and $d/n$ are always better in the non-orientable case. The same occurs in the odd genera greater than $5$, as we can see in the next table.

\begin{table*}
	\caption{		\label{tab12}
		Ratios $k/n$, $kd^2/n$, and $d/n$ for Floquet codes obtained from orientable (O) and non-orientable (NO) cases for genus $9$}
	\begin{ruledtabular}
\begin{tabular}{| c | c | c | c |c|c|c|}
	Tessellation & $k/n$ (O) & $kd^2/n $ (O) & $ d/n $ (O)& $k/n$ (NO) & $kd^2/n $ (NO) & $ d/n $ (NO) \\ \hline
			$[4 , 6 , 16 ]$ & 0.023 &	1.5 &	0.01  	& 0.026 &	1.714 &	0.023 \\
			$[4,	6,	24]$ & 0.046 &	1.171 &	0.013  & 0.053 &	0.857 &	0.023 \\
			$[4,	8,	12]$ & 0.046 &	1.687 &	0.015  &   0.053 &	1.928 &	0.035 \\
			$[4,	8,	16]$ & 0.07 &	1.757 &	0.019  &  0.08 &	1.285 &	0.035 \\
			$[4 , 10 , 10 ]$ & 0.056 & 1.406 & 0.015 &  0.064 &	2.314 &	0.042\\
			$[4,	12,	12]$ & 0.093 &	1.5 &	0.02  &  0.107 &	3.857 &	0.071 \\
			$[4,	16,	16]$ &  0.14 &	1.265 &	0.023 &  0.16 &	2.571 &	0.071 \\
			$[6,	6,	12]$ & 0.093 &	2.343 &	0.026	&  0.107 &	1.714 &	0.047 \\
			$[6,	8,	8]$ & 0.093 &	2.343 &	0.026  &  0.107 &	3.857 &	0.071 \\
			$[6,	12,	12]$ & 0.187 &	1.687 &	0.031  &  0.214 &	3.428 &	0.095 \\
			\hline
							\end{tabular}
	\end{ruledtabular}
\end{table*}
	
	\subsection{Performance analysis}\label{subPA}

	Here we analyze the parameters of the obtained codes and the encoding rates. Here we are considering quasi-Euclidean and quasi-regular tessellations.
	
	The only regular trivalent and three-colorable Euclidean tessellation is $\{6,3\} = [6,6,6]$, which does not exist in the hyperbolic plane. But, using hyperbolic semi-regular tessellations, we can consider the families $[6,6,p]$ for even $p \geq 8$, which we call here quasi-Euclidean tessellation. The encoding rate for this case is $\frac{k}{n} = \frac{g}{g-1} \frac{p - 3}{3 p}$, then taking $g, p \rightarrow \infty$ we get $\frac{k}{n} \rightarrow \frac{1}{3}$.
	
	We can see in lines 2 to 6 of Table \ref{tab13} that the closer the parameters to the Euclidean case, the greater the distances when we increase the surface genus. Regular tessellations do not allow for anything similar.
	
	The Euclidean semi-regular tessellation $[4,4,8]$ is also called a quasi-regular tessellation, and it was used to obtain some color codes in the literature. We can also see the parameters of some regular tessellations $\{2p,3\} = [2p,2p,2p]$ and some close quasi-regular ones in Table \ref{tab13}. 
	
	In general, for the tessellation $[2p,2p,2q]$, the encoding rate is $\frac{k}{n} = \frac{g}{g-1}\frac{pq - p - 2q}{pq}$, and then taking $g, p, q \rightarrow \infty$, we get $\frac{k}{n} \rightarrow 1$, which reflects the power of these families.
	
	From the tables provided in this work, we can observe that the semi-regular cases not only provide codes with excellent parameters but also offer a greater degree of freedom in selecting the polygons. This allows us to include both polygons with few edges and polygons with a large number of edges in the same tessellation.

\begin{table*}
	\caption{\label{tab13}
		Parameters of the color codes generated in the $g$-tori according to the $[2p,2p,2q]$ tessellation}
	\begin{ruledtabular}
				\begin{tabular}{|c|c|c|c|c|c|c|}
			Tessellation&$g=2$ (O) &$g=3$ (O) &$g=3$ (NO)&$g=5$ (O) & $g = 5$ (NO) & $g=10$\\ \hline
			$[6,6,8]$&$[[48,8,4]]$&$[[96,6,5]]$&$[[24,3,4]]$&$[[192,10,6]]$ & [[72, 5, 6]] & $[[432, 20, 8]]$ \\ \hline
			$[6,6,10]$ & $[[30, 4 , 3]]$ & $[[60,6,4]]$ & $-$ & $[[120, 10, 5]]$ & $-$ & $[[270, 20, 6]]$ \\ \hline
			$[6,6,12]$ & $[[24, 4, 3]]$ & $[[48, 6, 3]]$ & $[[12, 3, 2]]$ &  $[[96, 10, 4]]$ & [[36, 5, 4]] & $[[216, 20, 5]]$ \\ \hline
			$[6,6,14]$ &$-$& $[[42, 6, 3]]$ & $-$ & $[[84, 10, 4]]$ & $-$ & $-$ \\ \hline
			$[6,6,18]$ & $[[18, 4, 2]]$ & $[[36, 6, 3]]$ & $-$ & $[[72, 10, 3]]$ & $-$  & $[[162, 20, 4]]$ \\ \hline
			$[8,8,6]$ & $[[24, 4, 3]]$ & $[[48, 6, 4]]$ & $[[12, 3, 4]]$ & $[[96, 10, 4]]$ & $[[36, 5, 4]]$ & $[[216, 20, 5]]$ \\ \hline
			$[8,8,8]$ & $[[16, 4, 3]]$ & $[[32, 6, 3]]$ & $[[8, 3, 2]]$ & $[[64, 10, 4]]$ & $[[24, 5, 4]]$ & $[[144, 20, 5]]$ \\ \hline
			$[8,8,10]$ & $-$ & $-$& $-$ &$-$ &$[[20, 5, 4]]$ & $[[120, 20, 4]]$ \\ \hline
			$[8,8,12]$ & $-$& $[[24, 6, 3]]$ & $-$& $[[48, 10, 3]]$ & $-$ & $-$ \\ \hline
			$[8,8,16]$ & $-$ & $-$ & $-$ & $-$ & $[[16, 5, 2]]$ & $[[96, 20, 3]]$  \\ \hline
			$[8,8,20]$ & $-$ & $-$ & $-$& $[[40, 10, 2]]$ & $-$ & $ - $ \\ \hline
			$[10,10,6]$& $-$ & $[[30, 6, 3]]$ & $-$ & $[[60, 10, 3]]$ & $-$ & $-$ \\ \hline
			$[10,10,10]$ & $[[10,4,2]]$ & $[[20, 6, 2]]$ & $-$ & $[[40, 10, 3]]$ & $-$ & $[[90, 20, 4]]$ \\ \hline
			$[12,12,6]$ & $[[12, 4, 2]]$ & $[[24, 6, 2]]$ & $[[6, 3, 2]]$ & $[[48, 10, 3]]$ & $[[18, 5, 4]]$ & [[108, 20 , 3]] \\ \hline
			$[12,12,12]$ & $-$ &$-$& $-$ & $-$&$ [[12, 5, 2]]$ & $[[72, 30, 3]]$ \\ \hline
			$[14,14,6]$ & $-$ & $-$ & $-$ & $[[42, 10, 2]]$ & $-$ & $-$ \\ \hline
			$[14,14,10]$ & $-$ & $-$ & $-$ & $-$ & $-$ & $[[70, 20, 3]]$ \\ \hline
			$[14,14,14]$ & $-$ & $[[14, 6, 2]]$ & $-$ & $-$ & $-$ & $-$ \\ \hline
			$[16,16,8]$ & $-$& $[[16, 6, 2]]$ & $-$ & $[[32, 10, 2]]$ & $-$ & $-$  \\ \hline
			$[18,18,6]$ & $-$ & $[[18, 6, 2]]$ & $-$ & $[[36, 10, 2]]$ & $-$ & $-$ \\ \hline
			$[18,18,18]$ & $-$ & $-$ & $-$ & $-$& $-$& $[[54,20,2]]$ \\ \hline
			$[20,20,10]$ & $-$& $-$ & $-$ & $-$ & $[[10, 5, 2]]$ & $[[60, 20, 2]]$ \\ \hline
			$[24,24,6]$ & $-$ & $-$ & $-$ & $-$ & $[[12, 5, 2]]$ & $[[72, 20, 2]]$ \\ \hline
			$[24,24,12]$ & $-$ & $-$ & $-$ & $[[24, 10, 2]]$ & $-$ & $-$ \\
				\end{tabular}
		\end{ruledtabular}
	\end{table*}

	\section{Final Remarks}\label{FR}
	
In this paper, we have constructed several new quantum Floquet codes. To do this task, we made use of the derivation tessellation and applied the hyperbolic regular tessellations to obtain semi-regular compact (orientable and non-orientable) surfaces of genus $g\geq 2$ in which the Floquet codes were constructed. Working with derived semi-regular tessellations, one gets an increase and a diversity of possibilities to construct new quantum Floquet codes. In Tables $2$ to $6$, we have presented several new Floquet codes constructed from orientable surfaces by means of these two techniques of tessellation derivation: clipping and incenter, for genus $1$ to $5$. One can observe that there is a uniform behavior regarding the coding rate in relation to tessellations. We also utilize similar techniques to construct codes over non-orientable surfaces. For instance, in Tables $8$ to $10$, we presented some new Floquet codes constructed from non-orientable surfaces by means of the same two techniques of tessellation derivation for genus $3$, $5$, and $7$.
	
The construction of codes over non-orientable surfaces can be viewed as a type of generalization compared to codes constructed over orientable surfaces in the sense that every code over an orientable surface provides an analogous code over a non-orientable surface. Furthermore, when comparing codes generated over these two types of surfaces, i.e., orientable and non-orientable, the codes over orientable surfaces have a better rate, $kd^2/n$, whereas the codes over non-orientable surfaces have a better coding rate, $k/n$.

	\end{document}